\newtheorem{theorem}{Theorem}
\newtheorem{corollary}{Corollary}[theorem]
\newtheorem{lemma}{Lemma}[section]
\theoremstyle{definition}
\newcommand{\pf}[1]{\mathcal{P\!\!F}_{\!{#1}}}
\begin{document}

\preprint{APS/123-QED}

\title{Taming coherent noise with teleportation}

\author{Kathleen (Katie) Chang*}
\author{Qile Su*}
\author{Shruti Puri}
\affiliation{Department of Physics and Applied Physics, Yale University, New Haven, CT 06511, USA}

\date{\today}

\begin{abstract}
Compared to the more widely studied Pauli errors, coherent errors present several new challenges in quantum computing and quantum error correction (QEC). For example, coherent errors may interfere constructively over a long circuit and significantly increase the overall failure rate compared to Pauli noise. Additionally, there is so far no analytical proof for a topological code threshold under coherent errors. Moreover, it is hard to even numerically estimate the performance of QEC under coherent errors as their effect in a Clifford circuit cannot be efficiently classically simulated. In this work, we demonstrate that teleportation effectively tailors coherent errors into Pauli errors, for which analytical and numerical results are abundant. We first show that repeated teleportation of a single qubit decoheres errors, and the average infidelity grows at worst linearly with the number of teleportations, similar to Pauli errors. We then analyze a physically motivated pure $Z$-coherent error model for teleported CSS codes in which over-rotation errors accompany every gate, and find that such an error model is equivalent to a Pauli error model. Our result implies that the performance of a CSS code implemented via teleportation-based error correction or measurement-based error correction with such coherent noise can be efficiently simulated on a classical computer and has an analytically provable threshold. The intrinsic noise-tailoring property of teleportation may ultimately remove the need for randomized compiling in teleportation-based quantum computing schemes.
\end{abstract}

\maketitle


\section{Introduction}

Theoretical results on the performance of a quantum error-correcting protocol rely on assumptions about noise. For 2D topological codes~\cite{bravyi1998quantum, DennisTopoQuantumMemory, kitaev_fault-tolerant_2003}, which are widely used in practice because of their simple geometry, seminal results known as quantum fault-tolerance threshold theorems guarantee near-perfect implementation of a quantum algorithm provided that each underlying physical component fails probabilistically with probability less than a finite threshold value~\cite{DennisTopoQuantumMemory, stace_thresholds_2009, fowler_proof_2012-1}. Such probabilistic noise is called incoherent noise. The most widely studied class of incoherent noise is the Pauli noise in which a noisy quantum operation is modeled by probabilistic application of a Pauli operator after or before a perfect operation. Theoretical interest in this class of noise owes to the fact that the performance of an error correcting code under such noise can be efficiently simulated due to the Gottesman-Knill theorem~\cite{gottesman1997stabilizer, gottesman_heisenberg_1998-1}. 

However, real systems often suffer from noise that is coherent, which can result, for example, from small unitary rotations on qubits due to various experimental imperfections. Some of these imperfections include pulse amplitude drifts~\footnote{Such drifts necessitate frequent recalibrations of control parameters~\cite{arute_quantum_2019} or the use of pulse sequences that are insensitive to amplitude errors~\cite{bluvstein_quantum_2022}.}, frequency fluctuations~\cite{burnett_decoherence_2019}, control-field inhomogeneity across qubits~\cite{bluvstein_quantum_2022}, low-frequency $1/f$ noise~\cite{paladino_mathbsf1mathbsfitf_2014}, crosstalk \cite{huang_microwave_2021, sarovar_detecting_2020, zhao_quantum_2022}, and stray two-qubit interactions~\cite{wei_hamiltonian_2022}. Meticulous hardware engineering, careful calibration, and error-canceling control techniques such as composite pulses~\cite{vandersypen_nmr_2005} and dynamical decoupling \cite{ezzell_dynamical_2023} are routinely performed to mitigate coherent errors. However, residual coherent errors can remain despite these efforts. For example, coherent errors have been found to build up in a recent experiment on the neutral-atom array platform~\cite{bluvstein_logical_2024}. Coherent errors due to stray interactions have also been found to significantly affect error correction experiments on the superconducting platform~\cite{acharya_suppressing_2023, acharya_quantum_2025}. It is therefore important to understand the performance of quantum error correction protocols under coherent errors.

The performance of a quantum error correction code under coherent errors is partially related to its performance under Pauli errors via a theorem known as the discretization of errors~\cite{nielsen_quantum_2010}, which states that linear combinations of correctable errors are also correctable. For stabilizer codes, the intuition behind this theorem is that measuring the stabilizers probabilistically collapses the linear combination into a correctable Pauli error, effectively decohering the coherent error. However, if single-qubit coherent errors occur on all physical qubits, for example, then the tensor product of these errors is a linear combination of both low-weight correctable Pauli errors and high-weight uncorrectable Pauli errors. The high-weight errors lead to logical coherent errors that can be suppressed exponentially by increasing the distance of the code~\cite{huang2019performance, beale2018quantum,
bravyi2018correcting, acharya_quantum_2025}, just like in the case of incoherent Pauli noise. Yet the fundamental difference between coherent errors and Pauli errors is that these high-weight errors can add coherently rather than incoherently. This can significantly alter the performance of the error correction code due to constructive interference~\cite{bravyi2018correcting, jain2023improved,marton2023coherent, venn2020errorcorrection, darmawan2024optimaladaptationsurfacecodedecoders, beale2018quantum}. Indeed, the difficulty involved in bounding the effect of interference is why there is no analytical proof of a surface code error threshold under coherent errors so far~\cite{iverson2020coherence, pato2024logical}.

Given that coherent errors can be detrimental for the performance of an error-correcting code, randomized compiling has been proposed as a way to convert coherent errors to Pauli errors~\cite{wallman2016noisetailoring}. The noise tailoring is achieved by inserting random Pauli rotations between each element of the quantum circuit. Remarkably, random Pauli rotations are also introduced under state teleportation, which motivates us in this paper to leverage these intrinsic random rotations to decohere coherent noise similar to randomized compiling. Here, we study coherent errors in teleportation-based error correction, or equivalently, measurement-based error correction (MBEC) compared to conventional QEC. In MBEC, error correction is achieved by preparing and measuring the cluster state~\cite{raussendorf_long-range_2005, raussendorf_fault-tolerant_2006, raussendorf_fault-tolerant_2007}, which can be thought of as measuring the stabilizers of some error correction code while simultaneously teleporting information onto new physical qubits~\cite{brown_universal_2020, bolt2016foliated}. We first show that repeated teleportation of a single qubit, i.e. a single-qubit teleportation chain, effectively decoheres coherent errors compared to free accumulation of errors on the qubit. The fact that teleportation-induced Pauli rotations sample only half of the single-qubit Pauli group at each time step necessitates the development of a different proof from the one used for randomized compiling. We then extend the single-qubit proof to cluster states under a physically motivated coherent error model in which single-qubit pure $Z$-coherent errors occur at all space-time locations. We find that such an error model is equivalent to incoherent Pauli errors occurring at each spacetime location. Our result bridges the gap between coherent errors and incoherent Pauli errors in MBEC. In particular, the performance of the cluster state under the circuit-level coherent error model can be efficiently simulated on a classical computer, and one can analytically prove its accuracy threshold under coherent errors by repurposing the proof for incoherent Pauli errors in topologically protected codes~\cite{DennisTopoQuantumMemory,fowler_proof_2012-1}. Indeed, for the teleported surface code under circuit-level coherent noise $e^{i\theta Z}$, we find that the threshold angle $\theta_{\mathrm{th}}$ satisfies an analytical lower bound $\theta_{\mathrm{th}}\geq\arcsin (1/10)/5$.

Our paper is structured as follows. Section~\ref{sec:coherent-noise-definition} introduces our definitions for Pauli and coherent errors. Section~\ref{sec:single-qubit-teleportation-chain} proves that errors are effectively decohered in a single-qubit teleportation chain. Section~\ref{sec:circuitlvl-eithetaz-errors} proves that circuit-level pure $Z$-coherent errors on the cluster state are exactly equivalent to physical Pauli errors. Finally, we conclude with discussions of future directions in section~\ref{sec:discussion}.

\section{Background on single-qubit error channels}
\label{sec:coherent-noise-definition}
In this section, we review single-qubit error channels. Any error channel can be written in the form $\mathcal{E}(\rho) = \sum_{k}E_k\rho E_k^\dagger$, where $E_k$ are the channel's Kraus operators, satisfying $\sum_kE_k^\dagger E_k = I$. A single-qubit Pauli error channel is defined to be 
\begin{align}
    \mathcal{E}(\rho) = \sum_{E=I,X,Y,Z}p_EE\rho E^\dagger,
\end{align}
where $p_E \geq 0$ and $\sum_Ep_E=1$. The Kraus operators of a single-qubit Pauli error channel are Pauli operators (up to a constant factor). We define a single-qubit coherent error channel to be any single-qubit channel that cannot be written in the form of a Pauli error channel. Such an error channel is coherent in the sense that at least one of its Kraus operators must be a linear combination of different Pauli operators, which results in a coherent superposition of different Pauli errors when the Kraus operator is applied on an initial state. Examples of coherent errors, as we define them, include unitary errors and amplitude damping errors.

A useful description of a single-qubit error channel $\mathcal{E}$ is the Pauli transfer matrix (PTM), which is a $4 \times 4$ matrix with elements
\begin{align}
    [\mathcal{E}]_{P,P'} = \mathrm{Tr}[P^\dagger\mathcal{E}(P')]/2,\quad P, P' \in \{I, X, Y, Z\}. \label{eq:PTM-definition}
\end{align}
Later we will allow $P, P'$ to differ from the Pauli operators by $\pm1$ or $\pm i$ for notational convenience. The channel $\mathcal{E}$ is a Pauli error channel if and only if the PTM is diagonal. Therefore, a channel is a coherent error channel by our definition if and only if its PTM has non-zero off-diagonal elements.

The distance from $\mathcal{E}$ to identity can be quantified by the average infidelity, which can be expressed in terms of the diagonal elements of the PTM \cite{nielsen_simple_2002} as follows,
\begin{align}
    r(\mathcal{E}) = \frac{1}{2}-\frac{1}{6}\left([\mathcal{E}]_{X,X} + [\mathcal{E}]_{Y,Y} + [\mathcal{E}]_{Z,Z}\right). \label{eq:average-infidelity-definition}
\end{align}
Coherent and Pauli errors can be distinguished by how the average infidelity accumulates when the error channel is applied repeatedly. In the limit of small $mr(\mathcal{E})$, the average infidelity of $m$ applications of the channel $\mathcal{E}$ accumulates linearly with $m$ if $\mathcal{E}$ is an Pauli error~\cite{iverson2020coherence}, whereas it can sometimes accumulate quadratically with $m$ if $\mathcal{E}$ is a coherent error by our definition, as is the case for a unitary rotation error.

\section{Teleportation decoheres errors on a single qubit}
\label{sec:single-qubit-teleportation-chain}
\begin{figure}[ht]
    \centering
    \includegraphics[width=\linewidth]{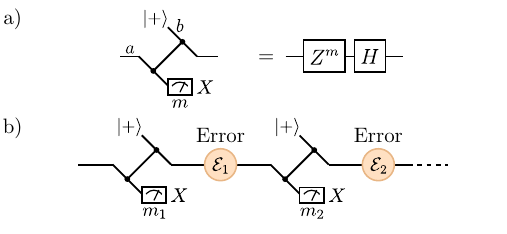}
    \caption{\textbf{The teleportation chain.} a) A one-bit teleportation circuit and its equivalent Kraus operator, up to a constant normalization factor. The qubits are labeled $a$ and $b$. The variable $m=0,1$ is the measurement outcome, and the superscript on $Z$ denotes matrix power. b) Chaining together one-bit teleportation circuits into a teleportation chain. $\mathcal{E}_t$ are single-qubit errors.}
    \label{fig:single-qubit-teleportation-chain}
\end{figure}

In this section, we prove that coherent errors in a single-qubit teleportation chain are effectively decohered. First, we define the teleportation chain and describe how it affects the Pauli frame and the errors. Next, we present a simple proof in which the errors are limited to single-qubit unitary rotation errors with rotation angles $\theta \ll 1/T$, where $T$ is the length (or equivalently the number of time steps) of the teleportation chain. While this simple proof is insufficient for constant-size errors or non-unitary errors in an arbitrarily long teleportation chain, it intuitively shows that the decoherence of errors comes from the teleportation-induced random Pauli frames, which prevent constructive interference between distant time steps. Then, we focus on what we call $Z$-like coherent errors, which include $e^{i\theta Z}$ errors, and show that they are exactly decohered into Pauli errors by teleportation. Although this proof restricts the axis of the errors, it applies to an arbitrarily long teleportation chain and holds regardless of the size of the errors. Finally, we return to arbitrary-axis single-qubit coherent errors and show that they are effectively decohered in a teleportation chain as long as their coherent parts are bounded by a small constant times their incoherent parts. We do this by first analytically bounding the error channel of the teleportation chain and then showing that the average infidelity grows at worst linearly with the length of the teleportation chain.

\subsection{Preliminary: The ideal teleportation chain}
The ideal teleportation chain, usually known as the one-dimensional cluster state \cite{brown_universal_2020}, can be formed by chaining together many copies of the one-bit teleportation circuit \cite{zhou_methodology_2000} (Fig.~\ref{fig:single-qubit-teleportation-chain}). The Kraus operator of the one-bit teleportation circuit, shown in Fig.~\ref{fig:single-qubit-teleportation-chain}a, for outcome $m=0$ or $1$ is
\begin{subequations}
\label{eq:kraus-teleportation}
\begin{align}
    K_m \!&= \begin{cases}
        \bra{+}_a \left(I_a \ket{0}_{\!b}\!\bra{0}_b + Z_a \ket{1}_{\!b}\!\bra{1}_b\right) \ket{+}_{\!b} & m=0 \\
        \bra{-}_a \left(I_a \ket{0}_{\!b}\!\bra{0}_b + Z_a \ket{1}_{\!b}\!\bra{1}_b\right) \ket{+}_{\!b} & m=1
    \end{cases} \\
    &= \begin{cases}
        \frac{1}{\sqrt{2}}(\ket{0}_{\!b}\bra{+}_a + \ket{1}_{\!b}\bra{-}_a) & m=0 \\
        \frac{1}{\sqrt{2}}(\ket{0}_{\!b}\bra{-}_a + \ket{1}_{\!b}\bra{+}_a) & m=1
    \end{cases} \\
    &= HZ^m/\sqrt{2}.
\end{align}
\end{subequations}
On the last line, the superscript $m$ in $Z^m$ denotes matrix power, and the change in the support of the state is omitted. The Kraus operator $K_m$ takes a state on qubit $a$ as input and transforms it into a state on qubit $b$. Because $K_m^\dagger K_m = I/2$, each measurement outcome is equally probable regardless of the input state. The circuit teleports the state from the input qubit to the output qubit up to a random $Z$ depending on the random outcome $m$ followed by a deterministic $H$. As the state is teleported along the single-qubit teleportation chain in Fig.~\ref{fig:single-qubit-teleportation-chain}(b), we can virtually keep track of the deterministic $H$ and the random $Z$. 

We now use calligraphic symbols $\mathcal{H}, \mathcal{I}, \mathcal{X}, \mathcal{Y}, \mathcal{Z}$ to denote the channel representation of the corresponding unitary operators $H, I, X, Y, Z$. In the ideal teleportation chain, after $t$ applications of the one-bit teleportation circuit and measuring outcomes $m_1, m_2, \cdots, m_t$, the initial state is transformed by the following unitary channel,
\begin{subequations}
\label{eq:single-qubit-teleportation-pauli-frame}
\begin{align}
    &\pf{t} = \mathcal{H}\mathcal{Z}^{m_t}\mathcal{H}\mathcal{Z}^{m_{t-1}}\cdots\mathcal{H}\mathcal{Z}^{m_1} \label{eq:single-qubit-teleportation-pauli-frame:a} \\
    &=\begin{cases}
        \mathcal{X}^{m_2 + m_4 + \cdots + m_t}\mathcal{Z}^{m_1 + m_3 + \cdots + m_{t-1}} & \text{even } t \\
        \mathcal{H}\mathcal{X}^{m_2 + m_4 + \cdots + m_{t-1}}\mathcal{Z}^{m_1 + m_3 + \cdots + m_t} & \text{odd } t
    \end{cases}. \label{eq:single-qubit-teleportation-pauli-frame:b}
\end{align}
\end{subequations}
As an example of the notation, if $m_1 + m_3 = 1$ and $m_2 = 1$, then 
\begin{align*}
    \pf{3}(\rho) = (\mathcal{H}\mathcal{X}\mathcal{Z})(\rho) = HXZ\rho ZXH.
\end{align*}
Note that the $\mathcal{X}$ emerges in $\pf{t}$ because of the conjugation of $\mathcal{Z}$ by $\mathcal{H}$. The unitary channel $\pf{t}$ can be thought of as a frame transformation that flips and permutes the Pauli operators $X, Y, Z$, so we refer to $\pf{t}$ as the Pauli frame for short. As usual, at the end of a teleportation chain, the inverse $\pf{t}^{-1}$ is applied either physically or virtually to recover the original state.

From the above discussion, we see that the Pauli frame is randomized every time the one-bit teleportation circuit is applied. This is similar to randomized compiling \cite{wallman2016noisetailoring}, which achieves the randomization via the deliberate insertion of random single-qubit Pauli gates into the circuit. Because each measurement outcome $m_t=0,1$ is random, Eq.~\eqref{eq:single-qubit-teleportation-pauli-frame:b} shows that each Pauli frame $\pf{t}$ is uniformly distributed over its possible values,
\begin{subequations}
\label{eq:single-qubit-teleportation-pauli-frame-distribution}
\begin{align}
    \pf{t} &= \mathcal{H}, \mathcal{H}\mathcal{Z}  &\text{if } t=1, \label{eq:single-qubit-teleportation-chain-uniform-marginal-distribution-one} \\
    \pf{t} &= \mathcal{I}, \mathcal{X}, \mathcal{Y}, \mathcal{Z}  &\text{if even }t, \label{eq:single-qubit-teleportation-chain-uniform-marginal-distribution-even} \\
    \pf{t} &= \mathcal{H}, \mathcal{H}\mathcal{X}, \mathcal{H}\mathcal{Y}, \mathcal{H}\mathcal{Z}  &\text{if odd }t>1. \label{eq:single-qubit-teleportation-chain-uniform-marginal-distribution-odd}
\end{align} \label{eq:single-qubit-teleportation-chain-uniform-marginal-distribution}
\end{subequations}
We now observe a characteristic of the teleportation chain that will soon become important: despite having uniform distributions at a fixed time, these Pauli frames have a short-lived correlation over time. The existence of correlation can be seen, for example, from the inequality of the following conditional and unconditional probabilities:
\begin{align*}
    &\Pr(\pf{2}=\mathcal{Z}|\pf{1} = \mathcal{H}) = 0, \\
    &\Pr(\pf{2}=\mathcal{Z}) = 1/4.
\end{align*}
More generally, the Pauli frame $\pf{t}$ is correlated with the Pauli frame $\pf{t-1}$. The correlation distinguishes the distribution of Pauli frames due to teleportation from that due to randomized compiling in which the Pauli frames at two consecutive time steps are completely uncorrelated. However, the correlations in the teleportation chain are short-lived. For example, from Eq.~\eqref{eq:single-qubit-teleportation-pauli-frame:a}, 
\begin{align*}
    \pf{3} &= \mathcal{H}\mathcal{Z}^{m_3}\mathcal{H}\mathcal{Z}^{m_2}\pf{1} \\
    &= \mathcal{X}^{m_3}\mathcal{Z}^{m_2}\pf{1},
\end{align*}
so no matter $\pf{1} = \mathcal{H}$ or $\mathcal{H}\mathcal{Z}$, $\pf{3}$ always takes values $\mathcal{H}, \mathcal{H}\mathcal{X}, \mathcal{H}\mathcal{Y}, \mathcal{H}\mathcal{Z}$ with uniform probability. More generally, the Pauli frames $\pf{s}$ and $\pf{t}$ are independent as long as $|s-t| > 1$. This situation resembles a version of randomized compiling that is performed every two time steps. We will see that the short correlation time is important for the decoherence of errors later.

\subsection{Teleportation chain with errors}

Now consider the teleportation chain with errors. We restrict to errors that can be lumped into single-qubit error channels $\mathcal{E}_t$ between teleportations (Fig.~\ref{fig:single-qubit-teleportation-chain}b). Examples include arbitrary idle errors between teleportations and state preparation and measurement errors that commute with the CZ gates. Another example is a CZ gate error in the form of unknown single-qubit $e^{i\theta Z}$ rotations on the control and target qubits. This type of error can for instance happen on the neutral atom platform due to spurious frequency shifts and pulse miscalibrations \cite{levine_parallel_2019, bluvstein_quantum_2022}.

Because we always apply $\pf{t}^{-1}$ at the end of the teleportation chain, we are essentially working in the interaction picture defined by $\pf{t}$. In this interaction picture, a physical error $\mathcal{E}_t$ at time $t$ appears as the error
\begin{align}
    \mathcal{E}_t' = \pf{t}^{-1}\mathcal{E}_t\,\pf{t}. \label{eq:single-qubit-teleportation-transformed-error}
\end{align}
The transformed error changes with the Pauli frame $\pf{t}$, which in turn depends on the random measurement outcomes $m_1, m_2, \cdots, m_t$. The error channel of the teleportation chain conditioned on measurement outcomes $m_1, m_2, \cdots, m_t$ is just the product of all such transformed errors,
\begin{align}
    \mathcal{N}_t = \mathcal{E}_t'\mathcal{E}_{t-1}' \cdots \mathcal{E}_1'. \label{eq:single-qubit-teleportation-cumulative-error}
\end{align}
Thus we see that the total error channel $\mathcal{N}_t$ changes with the Pauli frames $\pf{t}$ and the measurement outcomes $m_1, m_2, \cdots, m_t$. We denote the average of $\mathcal{N}_t$ over all $m_1, m_2, \cdots, m_t$ (or equivalently over the distribution of Pauli frame sequences) to be
\begin{align}
    \overline{\mathcal{N}_t} = \overline{\mathcal{E}_t'\mathcal{E}_{t-1}' \cdots \mathcal{E}_1'}. \label{eq:single-qubit-teleportation-average-error-channel}
\end{align}
Note that even though the outcomes $m_1, m_2, \cdots m_t$ are known in an experiment and indeed are used to determine the final Pauli frame, the average error channel $\overline{\mathcal{N}_t}$ is relevant as long as no further action is taken conditioned on this knowledge.

Recall from the previous subsection that the Pauli frames in the teleportation chain are correlated. If the Pauli frames were completely uncorrelated (as they are in the case of randomized compiling performed at every timestep), we would be able to use the following argument to show that the errors are decohered in the teleportation chain. As a consequence of the uncorrelated Pauli frames, the transformed errors $\mathcal{E}_t'$ would also be uncorrelated, which would allow the factorization of the average $\overline{\mathcal{N}_t}$ into a product of individual averages of $\mathcal{E}_t'$,
\begin{align*}
    \overline{\mathcal{N}_t} &= \overline{\mathcal{E}_t'}\ \overline{\mathcal{E}_{t-1}'}\cdots\overline{\mathcal{E}_1'}, \\
    \text{where } \overline{\mathcal{E}_t'} &= \sum_{\pf{t}} \Pr(\pf{t})\ \pf{t}^{-1}\mathcal{E}_t\pf{t}.
\end{align*}
Substituting the uniform distribution of $\pf{t}$ from Eq.~\eqref{eq:single-qubit-teleportation-pauli-frame-distribution} would show that the average $\overline{\mathcal{E}_t'}$ for $t > 1$ is the \textit{Pauli twirl} \cite{silva_scalable_2008, dankert_exact_2009, emerson_symmetrized_2007, divincenzo_quantum_2002} of either $\mathcal{E}_t$ or $\mathcal{H}\mathcal{E}_t\mathcal{H}$,
\begin{align}
    \overline{\mathcal{E}_t'}&= \begin{cases}
    \frac{1}{4}\sum_{\mathcal{P}=\mathcal{I}, \mathcal{X}, \mathcal{Y}, \mathcal{Z}} \mathcal{P}\mathcal{E}_t\mathcal{P} & \text{even } t \\
    \frac{1}{4}\sum_{\mathcal{P}=\mathcal{I}, \mathcal{X}, \mathcal{Y},\mathcal{Z}} \mathcal{P}\mathcal{H}\mathcal{E}_t\mathcal{H}\mathcal{P} & \text{odd } t > 1 \\
    \end{cases}, \label{eq:pauli-twirl}
\end{align}
which would remove the coherence and result in a Pauli error channel. Therefore, each coherent error $\mathcal{E}_t$ for $t>1$ would be decohered into a Pauli error. However, since the Pauli frames in the teleportation chain are correlated, the argument above is invalid. Fortunately, the correlation between Pauli frames is short-lived, as discussed in the previous subsection. The short-livedness of correlation allows us to show the decoherence of error using the simple but approximate argument presented in the next subsection.

\subsection{Simple proof of the decoherence of errors}
\label{sec:simple-proof}
In this subsection, we present a simple proof that shows that, despite the correlation of Pauli frames, single-qubit unitary rotation errors $e^{i\boldsymbol{\theta}_t \cdot \boldsymbol{\sigma}}$ around arbitrary axes are effectively decohered by the teleportation chain as long as the size of the rotation angles $|\boldsymbol{\theta}_t|$ are bounded by some $\theta \ll 1/T$, where $T$ is the length of the teleportation chain.

Let each $\mathcal{E}_t$ be a single-qubit unitary rotation error $e^{i\boldsymbol{\theta}_t \cdot \boldsymbol{\sigma}}$ around some arbitrary axis, and let $\theta$ upper bound all rotation angles $\boldsymbol{\theta}_t$. By Eq.~\eqref{eq:single-qubit-teleportation-transformed-error}, the effective error is also a unitary rotation whose rotation angle $\boldsymbol{\theta}_t'$ relates to $\boldsymbol{\theta}_t$ by
\begin{align}
    \boldsymbol{\theta}_t' \cdot \boldsymbol{\sigma} = \pf{t}^{-1}\!\left(\boldsymbol{\theta}_t \cdot \boldsymbol{\sigma}\right). \label{eq:angle-transformation}
\end{align}
The expression for the average error channel in this case is
\begin{align}
    \overline{\mathcal{N}_t}(\rho) &= \overline{\left(\prod_{s=1}^te^{i\boldsymbol{\theta}_s' \cdot \boldsymbol{\sigma}}\right)\rho\left(\prod_{s=1}^te^{i\boldsymbol{\theta}_s' \cdot \boldsymbol{\sigma}}\right)^\dagger}.
\end{align}
We can use a rotation by the net angle $\boldsymbol{\Theta} = \sum_{s=1}^t\boldsymbol{\theta}_s'$ (satisfying $||\boldsymbol{\Theta}|| \leq \theta t$) to approximate the product of small rotations, that is,
\begin{align}
    \prod_{s=1}^te^{i\boldsymbol{\theta}_s' \cdot \boldsymbol{\sigma}} \approx e^{i\boldsymbol{\Theta} \cdot \boldsymbol{\sigma}}. \label{eq:exponential-approximation}  
\end{align}
In the exponent on the right-hand side, the approximation omits $O(t^2)$ second-order commutator terms of the form $[\boldsymbol{\theta}_{s_1}' \cdot \boldsymbol{\sigma}, \boldsymbol{\theta}_{s_2}' \cdot \boldsymbol{\sigma}]$, each of which is of magnitude $O(\theta^2)$. So the right-hand side is only valid for $\theta t \ll 1$. This means that we require $\theta \ll 1/T$ for a teleportation chain of length $T$, or equivalently, we require a short-enough teleportation chain $T \ll 1 / \theta$ for a constant $\theta$. Inserting the approximate Eq.~\eqref{eq:exponential-approximation} back into $\overline{\mathcal{N}_t}$ and keeping terms to second order in $\boldsymbol{\Theta}$ gives the following expression,
\begin{align}
    \overline{\mathcal{N}_t}(\rho) &\approx \left(1 - \sum_{\mathclap{k=x,y,z}}\overline{\Theta_k^2}\right)\, I \rho I + \sum_{\mathclap{k=x,y,z}}\overline{\Theta_k^2}\  \sigma_k\rho \sigma_k \nonumber \\
    &\quad + \sum_{\mathclap{k=x,y,z}}\left(i\overline{\Theta_k}\ \sigma_k \rho I + \mathrm{h.c.}\right) + \sum_{k\neq l}\overline{\Theta_k\Theta_l}\ \sigma_k\rho \sigma_l.
\end{align}
Note that the coefficients of $I\rho I$ and $\sigma_k\rho\sigma_k$ are accurate to second order in $\theta^2$ despite the first-order approximate Eq.~\eqref{eq:exponential-approximation}. To see this, first observe that $\boldsymbol{\Theta} = O(\theta t)$, and the omitted commutator terms only leads to $\boldsymbol{\Theta} \to \boldsymbol{\Theta} + O(\theta^2t^2)$. Plugging this into the coefficients of $I\rho I$ and $\sigma_k\rho\sigma_k$ shows that the omitted commutator terms lead to corrections that are third-order in $\theta$.

Let's examine the Pauli part and the coherence of $\overline{\mathcal{N}_t}$. The Pauli error probabilities can be identified as $\overline{\Theta_k^2}$. If we now plug in $\Theta_k = \sum_{s=1}^t\theta_{k,s}'$, we get
\begin{align}
    \overline{\Theta_k^2} &= \sum_{s=1}^t\overline{(\theta_{k,s}')^2} + 2\sum_{s=1}^t\sum_{r=s+1}^t\overline{\theta_{k,s}'\theta_{k,r}'}.
\end{align}
Because of the short correlation time of $\pf{t}$, $\theta_{k,s}'$ and $\theta_{k,r}'$ are independent for $|s-r| > 1$. Therefore, for $r > s+1$, the two-time average $\overline{\theta_{k,s}'\theta_{k,r}'}$ factors into $\overline{\theta_{k,s}'}\ \overline{\theta_{k,r}'}$. Furthermore, we can show $\overline{\theta_{k,r}'} = 0$ for $r > 1$ by averaging Eq.~\eqref{eq:angle-transformation} over the uniform distribution in Eq.~\eqref{eq:single-qubit-teleportation-pauli-frame-distribution}. As a result, $\overline{\theta_{k,s}'\theta_{k,r}'} = 0$ for $r > s+1$, and only $O(t)$ terms under the double sum survive the average,
\begin{subequations}
\begin{align}
    \overline{\Theta_k^2} &= \sum_{s=1}^t\overline{(\theta_{k,s}')^2} + 2\sum_{s=1}^{t-1}\overline{\theta_{k,s}'\theta_{k,s+1}'} \\
    &=\overline{(\theta_{k,1}')^2} + \sum_{s=1}^t \left[\overline{(\theta_{k,s}')^2} + \overline{\theta_{k,s-1}'\theta_{k,s}'}\right]. \label{eq:fast-and-loose-pauli-error-prob}
\end{align}
\end{subequations}
Similar reasoning can be used to show that $\overline{\Theta_k\Theta_l}$ is at most $O(t\theta^2) \ll \theta$, so the coherence of $\overline{\mathcal{N}_t}$ is dominated by the term $i\overline{\Theta_k}\ \sigma_k \rho I + \mathrm{h.c.}$, which is first order in $\theta$. Because we have shown that $\overline{\theta_{k,r}'} = 0$ for $r > 1$, we must have $\overline{\Theta_k} = \overline{\theta_{k,1}'} = O(\theta)$, which does not grow with the length of the teleportation chain.

Our observations of the Pauli part and the coherence in $\overline{\mathcal{N}_t}$ can be used to show that the errors are effectively decohered in the short teleportation chain ($T \ll 1/\theta$). First, while $\overline{\mathcal{N}_t}$ is a coherent error channel by our definition in section \ref{sec:coherent-noise-definition}, its coherence is small, bounded by $O(\theta)$ and does not grow with the number of timesteps. This coherence comes from the coherent error at $t=1$, which does not get completely decohered under the insufficiently random Pauli frame at $t=1$ [See Eq.~\eqref{eq:single-qubit-teleportation-chain-uniform-marginal-distribution-one}]. Next, from Eq.~\eqref{eq:fast-and-loose-pauli-error-prob}, we see that to second order in $\theta$ the Pauli part of $\overline{\mathcal{N}_t}$ can be rewritten as the product of a sequence of Pauli channels. The Pauli error probabilities of these Pauli channels are $p_{k,t} = \overline{(\theta_{k,t}')^2} + \overline{\theta_{k,t-1}'\theta_{k,t}'}$, with the exception of $t=1$. The first part of $p_{k,t}$ comes simply from the Pauli error part of the original error channel. The second part of $p_{k,t}$ accounts for interference that can happen within the correlation time of Pauli frames in the teleportation chain. Combining the Pauli part and the coherence then shows that $\overline{\mathcal{N}_t}$ can be written as the product of a small coherent error channel at $t=1$ followed by Pauli error channels that occur at all subsequent timesteps, showing that the physical errors have effectively been decohered.

From the expression for $\overline{\mathcal{N}_t}$ above, we can additionally compute the average infidelity of the teleportation chain as the chain grows longer. Applying the formula for average infidelity from Eq.~\eqref{eq:average-infidelity-definition} to our expression for $\overline{\mathcal{N}_t}$ gives
\begin{align}
    r(\overline{\mathcal{N}_t}) &\approx \!\frac{2}{3}\overline{||\boldsymbol{\Theta}||^2} = \frac{2}{3}\sum_{s=1}^t||\boldsymbol{\theta}_s||^2 + \frac{4}{3}\sum_{s=1}^{t-1}\overline{\boldsymbol{\theta}_s'\cdot \boldsymbol{\theta}_{s+1}'}.
    \label{eq:fast-and-loose-result}
\end{align}
If we let $r_0 \ll 1/T^2$ upper bound the average infidelity of each physical error channel, then $\theta \lesssim \sqrt{3r_0/2}$. Substituting the bound for the physical rotation angle into the expression for $r(\overline{\mathcal{N}_t})$ gives the approximate upper bound
\begin{align}
    r(\overline{\mathcal{N}_t}) \lesssim 3r_0t. \label{eq:worst-case-infidelity-simple-proof}
\end{align}
This expression shows that the average infidelity accumulates at worst linearly with $t$. The simple proof in this section motivates that an arbitrarily long teleportation chain decoheres errors despite the correlation in the Pauli frames. We indeed find this to be the case with more rigorous proofs in the following sections.

\subsection{Proof for \texorpdfstring{$e^{i\theta Z}$}{exp(i theta Z)} errors and \texorpdfstring{$Z$}{z}-like coherent errors}
\label{sec:z-like-coherence}
Now we show that if $\mathcal{E}_t$ are $e^{i\theta Z}$ errors, or more generally, what we call $Z$-like coherent errors, then the teleportation chain exactly converts each $\mathcal{E}_t$ into its Pauli twirl [in the sense of Eq.~\eqref{eq:pauli-twirl}]. The proof applies to an arbitrary long teleportation chain regardless of the size of the errors. Our proof will utilize the structure we impose on the errors and the details of the teleportation chain, as opposed to just the short correlation time of the Pauli frames. 

To define $Z$-like coherent errors, we first observe that any error channel $\mathcal{E}_t$ can be decomposed into the sum of its Pauli component $\mathcal{E}_{I,t}$ and its coherent components $\mathcal{E}_{X,t}$, $\mathcal{E}_{Y,t}$, $\mathcal{E}_{Z,t}$,
\begin{align}
    \mathcal{E}_t &= \mathcal{E}_{I,t} + \mathcal{E}_{X,t} + \mathcal{E}_{Y,t} + \mathcal{E}_{Z,t}. \label{eq:coherence-decomposition}
\end{align}
These components are defined by their sign changes under Pauli conjugation,
\begin{align}
    &\mathcal{X}^{x}\mathcal{Z}^{z}\ \mathcal{E}_t\ \mathcal{X}^{x}\mathcal{Z}^{z} \nonumber \\
    &= \mathcal{E}_{I,t} + (-1)^{z}\mathcal{E}_{X,t} + (-1)^{x+z}\mathcal{E}_{Y,t} + (-1)^{x}\mathcal{E}_{Z,t}. \label{eq:coherence-transformation}
\end{align}
In particular, the incoherent component $\mathcal{E}_{I,t}$ is the Pauli twirl of $\mathcal{E}_t$. A \textit{$Z$-like coherent error} is an error channel $\mathcal{E}_t$ where $\mathcal{E}_{X,t} = \mathcal{E}_{Y,t} = 0$, and therefore admits decomposition
\begin{align}
    \mathcal{E}_t = \mathcal{E}_{I,t} + \mathcal{E}_{Z,t}. \label{eq:z-like-coherent-error}
\end{align}
An example is a unitary rotation around the $z$-axis, i.e., $\mathcal{E}_t(\rho) = e^{i\theta Z}\rho\,e^{-i\theta Z}$. Its Pauli component is the Pauli error channel $\mathcal{E}_{I, t}(\rho) = \cos^2\!\theta\,I \rho I + \sin^2\!\theta\,Z \rho Z$, and the only non-zero coherent component is $\mathcal{E}_{Z, t}(\rho) = -i\cos\theta\sin\theta\,(I \rho Z -Z \rho I)$.

To derive the average error channel of the teleportation chain under $Z$-like coherent errors, we return to the expression of the transformed errors $\mathcal{E}_t'$ from Eq.~\eqref{eq:single-qubit-teleportation-transformed-error} and substitute in Eq.~\eqref{eq:z-like-coherent-error},
\begin{subequations}
\begin{align}
    &\mathcal{E}_t' = \pf{t}^{-1}(\mathcal{E}_{I, t} + \mathcal{E}_{Z, t})\pf{t} \\
    &= \begin{cases}
        \mathcal{E}_{I, t} + (-1)^{m_2 + m_4 + \cdots + m_t}\mathcal{E}_{Z, t} & \text{even }t \\
        \mathcal{H}\left[\mathcal{E}_{I, t} + (-1)^{m_1 + m_3 + \cdots + m_t}\mathcal{E}_{Z, t}\right]\mathcal{H} & \text{odd }t \\
    \end{cases}. \label{eq:transformed-z-like-coherent-error}
\end{align}
\end{subequations}
In obtaining Eq.~\eqref{eq:transformed-z-like-coherent-error}, we first substitute Eq.~\eqref{eq:single-qubit-teleportation-pauli-frame:b} for $\pf{t}$ and then use Eq.~\eqref{eq:coherence-transformation} to determine the sign flips of the $Z$-coherence. Eq.~\eqref{eq:transformed-z-like-coherent-error} shows that $\mathcal{E}_t'$ is an error channel that depends only on the exponent of $(-1)$, which is a random bit
\begin{align*}
    q_t &= \begin{cases}
        (m_2 + m_4 + \cdots + m_t)\!\!\!\!\mod 2 & \text{even }t \\
        (m_1 + m_3 + \cdots + m_t)\!\!\!\!\mod 2 & \text{odd }t
    \end{cases}.
\end{align*}
The bits $q_1, q_2, \cdots, q_t$ are again independent random bits by construction, so the average channel factorizes into
\begin{align*}
    \overline{\mathcal{N}_t} &= \overline{\mathcal{E}_t'}\ \overline{\mathcal{E}_{t-1}'}\cdots\overline{\mathcal{E}_1'}. \\ 
    \text{where } \overline{\mathcal{E}_t'} &= \begin{cases}
        \mathcal{E}_{I, t} + \overline{(-1)^{q_t}}\mathcal{E}_{Z, t} & \text{even }t \\
        \mathcal{H}\left[\mathcal{E}_{I, t} + \overline{(-1)^{q_t}}\mathcal{E}_{Z, t}\right]\mathcal{H} & \text{odd }t \\
    \end{cases} \\
    &= \begin{cases}
        \mathcal{E}_{I, t} & \text{even }t \\
        \mathcal{H}\mathcal{E}_{I, t}\mathcal{H} & \text{odd }t \\
    \end{cases}
\end{align*}
Therefore, teleportation decoheres $Z$-like coherent errors into Pauli errors. We remark that this proof can be modified to show that teleportation decoheres $X$-like coherent errors and $Y$-like coherent errors.

\subsection{Proof for arbitrary-axis single-qubit errors}
In this subsection, we return to arbitrary-axis single-qubit coherent errors and show that they are effectively decohered in an arbitrarily long teleportation chain as long as their coherent parts are bounded by a small constant times their incoherent parts. First, we present analytical bounds on the diagonal elements of the Pauli transfer matrix of the teleportation chain. Much like the simple proof, these bounds show that the error channel of the single-qubit teleportation chain can be written as the product of a single coherent error channel followed by Pauli errors occurring at all subsequent timesteps. Next, we numerically verify the bounds by comparing it to the exact error channel of the teleportation chain. We empirically observe that the average fidelity of the exact error channel as well as the corresponding bounds grow linearly with time, in contrast to growing quadratically in the absence of teleportation. Finally, we use the analytical bounds to derive an expression for the average infidelity of the teleportation chain similar to Eq.~\eqref{eq:worst-case-infidelity-simple-proof}, proving that it grows at most linearly with $t$ regardless of the length of the teleportation chain.

The matrix elements of the PTM are related to the error channel's incoherent and coherent components as defined by Eqs.~\eqref{eq:coherence-decomposition}~and~\eqref{eq:coherence-transformation}. By observing the sign flips of $[\mathcal{X}^x\mathcal{Z}^z\mathcal{E}\mathcal{X}^x\mathcal{Z}^z]_{P,P'}$, we can deduce that the matrix elements $[\mathcal{E}]_{P,P'}$ where $P=P'$ contribute to the channel's Pauli part, and the matrix elements where $P$ and $P'$ differ by $X$, $Y$, or $Z$ contribute to the channel's $X$-coherence, $Y$-coherence, and $Z$-coherence, respectively. Let us also absorb the Hadarmard occurring at odd $t$ into the error channel $\mathcal{E}_t$ using the notation 
\begin{align}
    \mathcal{E}_t^H = \begin{cases}
        \mathcal{E}_t & \text{even }t \\
        \mathcal{H}\mathcal{E}_t\mathcal{H} & \text{odd }t
    \end{cases}.
\end{align}

Using the above notations, we present a theorem that analytically bounds the error channel of the teleportation chain when the coherent parts of the physical errors are bounded by a small constant times their incoherent parts:
\begin{widetext}
    \begin{theorem}
        Let $\left[\mathcal{P}_{(t,t-1)}\right]_{P,P}$ denote the relative change in the diagonal element $\left[\overline{\mathcal{N}_t}\right]_{P,P}$ of the PTM of the average error channel of the teleportation chain from $t-1$ to $t$, that is,
        \begin{align}
            \left[\overline{\mathcal{N}_t}\right]_{P,P} = \left[\mathcal{P}_{(t,t-1)}\right]_{P,P}\left[\overline{\mathcal{N}_{t-1}}\right]_{P,P}. \label{eq:theorem-1-relative-change-definition}
        \end{align}
        If the coherences $\mathcal{E}_{X,t}, \mathcal{E}_{Y,t}, \mathcal{E}_{Z,t}$ in the error channels $\mathcal{E}_t$ in the teleportation chain are at most $\epsilon$ times the size of the incoherent components $\mathcal{E}_{I,t}$ with $\epsilon < 1/3$, or more precisely,
        \begin{align}
            \epsilon = \max_t\max_{P\in\{I,X,Y,Z\}}\left(
            \left|\frac{\left[\mathcal{E}_t\right]_{P,XP}}{\left[\mathcal{E}_t\right]_{P,P}}\right|, \left|\frac{\left[\mathcal{E}_t\right]_{ZP,XP}}{\left[\mathcal{E}_t\right]_{P,P}}\right|, \left|\frac{\left[\mathcal{E}_t\right]_{ZP,P}}{\left[\mathcal{E}_t\right]_{P,P}}\right|
            \right) < \frac{1}{3}, \label{eq:theorem-1-smallness-condition}
        \end{align}
        then the coherence in $\overline{\mathcal{N}_t}$ is $O(\epsilon)$ and for all $P$ and $t > 2$,
        \begin{align}
            &\left[\mathcal{P}_{(t,t-1)}\right]_{P,P} = 
            (1 + \delta_1)\left[\mathcal{E}_t^H\right]_{P,P}
            + (1 + \delta_2)
            \begin{cases}
                \left[\mathcal{E}_t^H\right]_{P,XP}\left[\mathcal{E}_{t-1}^H\right]_{XP,P} \Big/ \left[\mathcal{E}_{t-1}^H\right]_{P,P} & \text{even }t \\
                \left[\mathcal{E}_t^H\right]_{P,ZP}\left[\mathcal{E}_{t-1}^H\right]_{ZP,P} \Big/\left[\mathcal{E}_{t-1}^H\right]_{P,P} & \text{odd }t
            \end{cases}, \label{eq:single-qubit-teleportation-chain-iterative-bounds} \\
            &\text{for some } \delta_1 \in \left[-\frac{3\epsilon^3}{1 - 3\epsilon^2}, +\frac{3\epsilon^3}{1 - 3\epsilon^2}\right] \text{ and } \delta_2 \in \left[-\frac{3\epsilon^2}{1 + 3\epsilon^2}, +\frac{3\epsilon^2}{1 - 3\epsilon^2}\right]. \nonumber
        \end{align}
    \label{thm:bound-on-error-growth-per-timestep}
    \end{theorem}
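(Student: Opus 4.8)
The plan is to work entirely in the Pauli transfer matrix (PTM) representation, where $[\overline{\mathcal{N}_t}]_{P,P}$ becomes an average over measurement outcomes of a sum over Pauli-label paths $P \to Q_{t-1}\to\cdots\to Q_1\to P$, each contributing the product $\prod_s [\mathcal{E}_s']_{Q_s,Q_{s-1}}$ (with $Q_t=Q_0=P$). First I would factor each frame as a deterministic Hadamard, present only at odd $s$, times a random Pauli, so that $\mathcal{E}_s' = \mathcal{Q}_s^{-1}\mathcal{E}_s^H\mathcal{Q}_s$ with $\mathcal{Q}_s=\mathcal{X}^{x_s}\mathcal{Z}^{z_s}$. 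Using Eq.~\eqref{eq:single-qubit-teleportation-pauli-frame:b} one checks that $x_s$ is the parity of the even-indexed outcomes up to $s$ and $z_s$ the parity of the odd-indexed ones, so $x_s$ changes only at even $s$ and $z_s$ only at odd $s$. By Eq.~\eqref{eq:coherence-transformation}, a diagonal step is sign-free while an $X$-, $Y$-, or $Z$-type off-diagonal step at time $s$ carries the sign $(-1)^{z_s}$, $(-1)^{x_s+z_s}$, or $(-1)^{x_s}$.

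The key step is to average over the freshest outcome $m_t$ first. Since $m_t$ enters $x_t$ (even $t$) or $z_t$ (odd $t$) and appears nowhere in $\mathcal{N}_{t-1}$, this average annihilates every path in which $m_t$ has odd multiplicity, which touches only the step-$t$ sign. For even $t$ it forces the $t$-th step to be diagonal or $X$-type: the diagonal choice yields $[\mathcal{E}_t^H]_{P,P}$ multiplying $[\overline{\mathcal{N}_{t-1}}]_{P,P}$, while the $X$-type choice carries $(-1)^{z_t}=(-1)^{z_{t-1}}$ (using $z_t=z_{t-1}$), which survives the remaining average only by cancelling against a second $X$-type step at $t-1$. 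Dividing by $[\overline{\mathcal{N}_{t-1}}]_{P,P}$, whose leading value is $\prod_{s\le t-1}[\mathcal{E}_s^H]_{P,P}$, collapses the all-diagonal prefix to unity and reproduces the cross term $[\mathcal{E}_t^H]_{P,XP}[\mathcal{E}_{t-1}^H]_{XP,P}/[\mathcal{E}_{t-1}^H]_{P,P}$; the odd-$t$ case is identical with $x\leftrightarrow z$ and $X\leftrightarrow Z$. This is the mechanism behind Eq.~\eqref{eq:single-qubit-teleportation-chain-iterative-bounds}.

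To make this rigorous I would track the four column entries $[\overline{\mathcal{N}_t}]_{QP,P}$, $Q\in\{I,X,Y,Z\}$, as a coupled recursion (a transfer matrix built from the $\mathcal{E}_s^H$ entries with the fresh-bit projection applied at each step). The three off-diagonal entries are the channel's coherence; I would show they remain $O(\epsilon)$ and never accumulate, the residual $O(\epsilon)$ originating from the first step, whose $Z$-coherence is never decohered because $x_1=0$ makes its sign identically one [cf.\ Eq.~\eqref{eq:single-qubit-teleportation-chain-uniform-marginal-distribution-one}]. Feeding these $O(\epsilon)$ off-diagonals back into the diagonal recursion produces the nearest-neighbor pairing at $O(\epsilon^2)$ and, through longer surviving configurations, the corrections collected in $\delta_1$ and $\delta_2$. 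The factor of $3$ and the denominators $1\pm3\epsilon^2$ should emerge from resumming a geometric series in which each additional cancelling pair costs a factor bounded by $3\epsilon^2$ (three coherence types, each at most $\epsilon$ per step), with $\epsilon<1/3$ ensuring $3\epsilon^2<1/3$ and hence convergence, and also keeping the denominators $[\mathcal{E}_{t-1}^H]_{P,P}$ bounded away from zero.

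The main obstacle is the bookkeeping of the interleaved pairing structure: because the shared-$z$ pairs $(1,2),(3,4),\dots$ and the shared-$x$ pairs $(2,3),(4,5),\dots$ are offset by one step, a single off-diagonal step can in principle cancel either to its left or its right, and a $Y$-type step at a boundary needs partners on both sides. I expect the crux to be enumerating all such surviving configurations, showing that everything beyond the nearest-neighbor same-type pair is genuinely higher order, and resumming the remainder into the stated intervals for $\delta_1$ (starting at $O(\epsilon^3)$) and $\delta_2$ (starting at $O(\epsilon^2)$) uniformly in $t>2$, while simultaneously confirming that the coherence of $\overline{\mathcal{N}_t}$ stays $O(\epsilon)$ so the leading-order denominators used throughout are justified.
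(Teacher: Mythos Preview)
Your sign-cancellation picture is correct and is exactly the mechanism behind Eq.~\eqref{eq:single-qubit-teleportation-chain-iterative-bounds}: averaging over the freshest outcome $m_t$ forces step~$t$ to be diagonal or $X$-type (even~$t$) / $Z$-type (odd~$t$), and the surviving off-diagonal sign then pairs with the matching coherence at $t-1$. However, the recursion you propose---on the four column entries $[\overline{\mathcal{N}_t}]_{QP,P}$---does not close. After the fresh-bit projection at step~$t$, the surviving $X$-type branch still carries the pending sign $(-1)^{z_{t-1}}$, so what feeds the next iterate is not $[\overline{\mathcal{N}_{t-1}}]_{XP,P}$ but the \emph{sign-weighted} average $\overline{(-1)^{z_{t-1}}[\mathcal{N}_{t-1}]_{XP,P}}$. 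This is a genuinely different object from any column of $\overline{\mathcal{N}_{t-1}}$, and without tracking it your transfer matrix is missing a row. The paper's fix is to name this quantity $\delta\mathcal{N}_{t}$ (it equals $\overline{(-1)^{z_t}\mathcal{N}_t}$ for odd~$t$ and $\overline{(-1)^{x_t}\mathcal{N}_t}$ for even~$t$), and then the pair $\big([\overline{\mathcal{N}_t}]_{P,P},[\delta\mathcal{N}_t]_{XP,P\text{ or }ZP,P}\big)$ obeys a closed $2\times 2$ recursion with matrix entries $[\mathcal{E}_{I,t}^H],[\mathcal{E}_{X,t}^H],[\mathcal{E}_{Z,t}^H],[\mathcal{E}_{Y,t}^H]$, one per coherence type.

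With that $2\times 2$ system in hand, the constants do not come from resumming a geometric series over longer pairings but from a single inductive inequality: one shows $|[\delta\mathcal{N}_t]|\le 3\epsilon\,|[\overline{\mathcal{N}_t}]|$ for all~$t$, the induction step being $(\epsilon+3\epsilon^2)/(1-3\epsilon^2)\le 3\epsilon$, which holds precisely for $\epsilon\le 1/3$. The factor~$3$ here is the minimal constant making this self-consistent bootstrap close at the boundary $\epsilon=1/3$; it is not ``three coherence types,'' although the numerical coincidence is suggestive. Once $|[\delta\mathcal{N}_t]|\le 3\epsilon|[\overline{\mathcal{N}_t}]|$ is established, inverting the diagonal recursion gives $[\mathcal{E}_{I,t}^H][\overline{\mathcal{N}_{t-1}}]=(1+\delta_2)[\overline{\mathcal{N}_t}]$ with $\delta_2\in[-3\epsilon^2/(1+3\epsilon^2),\,3\epsilon^2/(1-3\epsilon^2)]$, and substituting once more produces the $\delta_1$ interval. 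So your path-enumeration program and the paper's proof are essentially the same strategy; the missing ingredient is the explicit sign-weighted auxiliary $\delta\mathcal{N}_t$ that turns the combinatorics you anticipate into a two-dimensional linear recursion with an elementary inductive bound.
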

\end{widetext}
Similar to Eq.~\eqref{eq:worst-case-infidelity-simple-proof} in the simple proof, the expression for $\left[\mathcal{P}_{(t,t-1)}\right]_{P,P}$ also has two parts. Ignoring the small corrections $\delta_1 =O(\epsilon^3)$ and $\delta_2 =O(\epsilon^2)$, the first term in $\left[\mathcal{P}_{(t,t-1)}\right]_{P,P}$ is simply the Pauli part of $\mathcal{E}_t$ (or $\mathcal{H}\mathcal{E}_t\mathcal{H}$ for odd $t$), which would have resulted if the errors in the teleportation chain are converted into Pauli errors via randomized compiling. The second term in $\left[\mathcal{P}_{(t,t-1)}\right]_{P,P}$ contains a product of the coherence parts of neighboring error channels, which accounts for interference within the short correlation time of the Pauli frame. 

The second term in Eq.~\eqref{eq:single-qubit-teleportation-chain-iterative-bounds} for $\left[\mathcal{P}_{(t,t-1)}\right]_{P,P}$ is $O(\epsilon^2)$ by the smallness condition on the coherences defined by Eq.~\eqref{eq:theorem-1-smallness-condition}, whereas $\delta_1 = O(\epsilon^3)$ and $\delta_2 = O(\epsilon^2)$, so Eq.~\eqref{eq:single-qubit-teleportation-chain-iterative-bounds} for $\left[\mathcal{P}_{(t,t-1)}\right]_{P,P}$ is accurate to second order in $\epsilon$. We therefore refer to Eq.~\eqref{eq:single-qubit-teleportation-chain-iterative-bounds} as the second-order bounds on the error growth per timestep in the teleportation chain. We provide a proof for theorem~\ref{thm:bound-on-error-growth-per-timestep} in appendix \ref{app:bounds-proof}. In appendix \ref{sec:tighter-bounds-proof} we also prove an expression for the analogous quantity $\left[\mathcal{P}_{(t,t-2)}\right]_{P,P}$, the relative change between $\left[\overline{\mathcal{N}_t}\right]_{P,P}$ and $\left[\overline{\mathcal{N}_{t-2}}\right]_{P,P}$, that is accurate to third order in $\epsilon$.

Just as the simple proof, theorem~\ref{thm:bound-on-error-growth-per-timestep} shows that the coherence of $\overline{\mathcal{N}_t}$, while non-zero, is small and bounded by a constant, and that the Pauli part of $\overline{\mathcal{N}_t}$ can be written as product of Paul error channels. We can therefore write $\overline{\mathcal{N}_t}$ as the product of a single coherent error channel to capture the coherence, followed by a sequence of Pauli error channels $\mathcal{P}_{(t, t-1)}$ to capture the Pauli error part. Therefore, the errors in an arbitrarily long single-qubit teleportation chain is effectively decohered.

\begin{figure}[th]
    \centering
    \includegraphics[width=\linewidth]{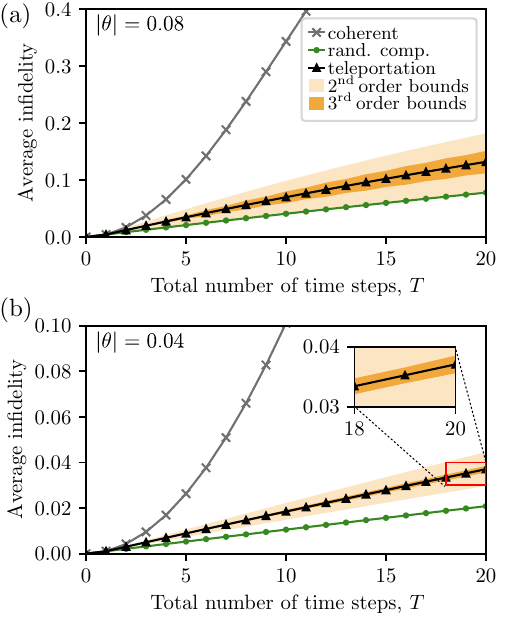}
    \caption{\textbf{Decoherence of errors in a single-qubit teleportation chain.} We compare the error growth over time steps due to physical coherent errors in three scenarios (see main text for details): in a teleportation chain (black triangles), under randomized compiling (green dots), and when neither randomized compiling nor teleportation is performed (gray crosses). We fix the physical error per time step to be $\mathcal{E}_t(\rho) = e^{-i\boldsymbol{\theta} \cdot \boldsymbol{\sigma}} \rho e^{i\boldsymbol{\theta} \cdot \boldsymbol{\sigma}}$, where $\boldsymbol{\theta} = |\theta| \left(3\mathbf{\hat{x}} + 1\mathbf{\hat{y}} + 2\mathbf{\hat{z}}\right)/\sqrt{14}$ and (a) $|\theta| = 0.08$, (b) $|\theta| = 0.04$. The vertical axis in each subplot is the average infidelity defined in Eq.~\eqref{eq:average-infidelity-definition}. We also superimpose the second- and third-order bounds from theorem~\ref{thm:bound-on-error-growth-per-timestep} and from the appendix \ref{sec:tighter-bounds-proof}. The growth or errors with time steps appears linear under teleportation and under randomized compiling, indicating the decoherence of errors. The inset in (b) zooms in to the regime outline by the red rectangle.}
    \label{fig:analytical-bounds}
\end{figure}

In Fig.~\ref{fig:analytical-bounds}, we verify the second- and third-order bounds on error growth in a teleportation chain by comparing them to exact numerics. We choose $\mathcal{E}_t(\rho) = e^{-i\boldsymbol{\theta}_t \cdot \boldsymbol{\sigma}} \rho e^{i\boldsymbol{\theta}_t \cdot \boldsymbol{\sigma}}$ where $\boldsymbol{\theta}_t$ is a constant and equals $|\theta| \left(3\mathbf{\hat{x}} + 1\mathbf{\hat{y}} + 2\mathbf{\hat{z}}\right)/\sqrt{14}$. The average error channel $\overline{\mathcal{N}_t}$ and the corresponding average infidelity $r(\overline{\mathcal{N}_t})$ can be computed exactly using Eqs.~\eqref{eq:single-qubit-teleportation-chain-conditional-average-channel-iterative}~and~\eqref{eq:single-qubit-teleportation-chain-average-conditional-channel} derived in appendix \ref{app:bounds-proof}. Second-order upper and lower bounds on $r(\overline{\mathcal{N}_t})$ can be computed from theorem~\eqref{thm:bound-on-error-growth-per-timestep} Eq.~\eqref{eq:single-qubit-teleportation-chain-iterative-bounds} by taking appropriate combinations of extremum values for $\delta_1$ and $\delta_2$, multiplying together all $\left[\mathcal{P}_{(t,t-1)}\right]_{P,P}$, and using Eq.~\eqref{eq:average-infidelity-definition} to find the infidelity. The third-order bounds can be computed similarly. Both the second- and third-order bounds agree with the exact numerical average infidelity, with the third-order bounds having lower uncertainty as expected.

In Fig.~\ref{fig:analytical-bounds}, we also compare the growth of average infidelity as a function of time in the teleportation chain to two other scenarios. In the first scenario, the errors $\mathcal{E}_t$ are applied on a single qubit and accumulates freely (gray curve). For the chosen errors, the average infidelity shows a quadratic increase with time characteristic of constructive interference between errors at different time steps. In the second scenario, each error $\mathcal{E}_t$ is first Pauli twirled into its incoherent part $\mathcal{E}_{I,t}$ via randomized compiling and then applied on a single qubit. In both teleportation and randomized compiling, the growth of average infidelity appears linear with time, indicating the decoherence of the errors. This indicates that teleportation effectively decoheres coherent errors into Pauli errors.

To rigorous support our empirical observation that the average infidelity grows linearly with time in a teleportation chain, we prove the following corollary using theorem~\ref{thm:bound-on-error-growth-per-timestep}:
\begin{corollary}
    \label{cor:worst-case-infidelity-growth}
    If the average infidelity of each $\mathcal{E}_t$ is at most $r_0 \leq 1/100$, then the average infidelity of the teleportation chain grows at most linearly with time, specifically,
    \begin{align}
        r(\overline{\mathcal{N}_t}) \leq \frac{17}{2}r_0t. \label{eq:worst-case-infidelity}
    \end{align}
\end{corollary}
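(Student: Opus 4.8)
The plan is to exploit the fact that the average infidelity in Eq.~\eqref{eq:average-infidelity-definition} depends only on the three diagonal PTM elements $[\overline{\mathcal{N}_t}]_{P,P}$ with $P\in\{X,Y,Z\}$, and that theorem~\ref{thm:bound-on-error-growth-per-timestep} gives precisely the multiplicative evolution of these diagonals. Writing $r(\overline{\mathcal{N}_t}) = \tfrac12 - \tfrac16\sum_{P\in\{X,Y,Z\}}[\overline{\mathcal{N}_t}]_{P,P}$ and iterating Eq.~\eqref{eq:theorem-1-relative-change-definition}, I would express each diagonal as a telescoping product $[\overline{\mathcal{N}_t}]_{P,P} = \prod_{s}[\mathcal{P}_{(s,s-1)}]_{P,P}$. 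Upper-bounding $r(\overline{\mathcal{N}_t})$ then reduces to lower-bounding this product, so the residual coherence of $\overline{\mathcal{N}_t}$, which lives only in the off-diagonal PTM elements, never enters the calculation and can be ignored.

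The first substantive step is to check that the hypotheses of theorem~\ref{thm:bound-on-error-growth-per-timestep} hold under the weaker assumption $r_0 \le 1/100$. From $r(\mathcal{E}_t)\le r_0$ and Eq.~\eqref{eq:average-infidelity-definition} I get $[\mathcal{E}_t]_{X,X}+[\mathcal{E}_t]_{Y,Y}+[\mathcal{E}_t]_{Z,Z} = 3 - 6r(\mathcal{E}_t) \ge 3 - 6r_0$; since each diagonal is at most $1$, each is bounded below by $1-6r_0 \ge 0.94$, so the denominators appearing in the definition~\eqref{eq:theorem-1-smallness-condition} of $\epsilon$ are all near one. The numerators are the coherence elements, which complete positivity bounds in terms of the infidelity. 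The extremal case is a coherent unitary $e^{i\boldsymbol{\theta}\cdot\boldsymbol{\sigma}}$, for which $r_0 = \tfrac23\sin^2|\boldsymbol{\theta}|$ while the largest ratio in Eq.~\eqref{eq:theorem-1-smallness-condition} is $\tan(2|\boldsymbol{\theta}|)$, giving $\epsilon^2 \approx 6r_0$ and hence $\epsilon \le \sqrt{6r_0} < 1/3$ at the stated threshold. Establishing that this coherent-unitary case indeed maximizes $\epsilon$ for fixed infidelity is the crucial input that allows theorem~\ref{thm:bound-on-error-growth-per-timestep} to apply.

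With $\epsilon<1/3$ in hand, I would bound the product of diagonals using the elementary inequality $\prod_s a_s \ge 1 - \sum_s(1-a_s)$, valid once each factor $a_s = [\mathcal{P}_{(s,s-1)}]_{P,P}$ is checked to lie in $[0,1]$ by taking the sign choices of $\delta_1,\delta_2$ and of the second-order term that minimize $a_s$, which is exactly what an upper bound on $r$ requires. This reduces the claim to the per-timestep estimate $r(\overline{\mathcal{N}_t}) \le \tfrac16\sum_{s}\sum_{P\in\{X,Y,Z\}}\bigl(1-[\mathcal{P}_{(s,s-1)}]_{P,P}\bigr)$. Substituting Eq.~\eqref{eq:single-qubit-teleportation-chain-iterative-bounds}, the leading term sums to $\sum_P(1-[\mathcal{E}_s^H]_{P,P}) = 6r(\mathcal{E}_s)\le 6r_0$ (the Hadamard conjugation at odd $t$ only permutes the three diagonals and so leaves their sum unchanged), the $\delta_1$ piece contributes $O(\epsilon^3)=O(r_0^{3/2})$, and the second-order coherence piece together with $\delta_2$ contributes $O(\epsilon^2)=O(r_0)$. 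Collecting these worst-case contributions at $\epsilon^2\le 6r_0$ and $r_0\le 1/100$ makes $\sum_P(1-[\mathcal{P}_{(s,s-1)}]_{P,P}) \le 51 r_0$ per timestep, and multiplying by $\tfrac16$ and summing over $t$ steps gives $r(\overline{\mathcal{N}_t}) \le \tfrac{17}{2}r_0 t$.

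The hard part will be the constant bookkeeping in this last step combined with the complete-positivity bound $\epsilon = O(\sqrt{r_0})$. The coefficient $17/2$ and the clean threshold $r_0 \le 1/100$ are delicate precisely because the second-order coherence contribution is of the \emph{same} order in $r_0$ as the genuine Pauli contribution $6r_0$, so the $O(\epsilon^2)$ terms cannot be dropped and must be bounded with the worst-case $\delta_2$ and worst-case sign. Proving the extremal coherence bound for general (not merely unitary) channels, and then tracking all worst-case signs so that the per-step sum stays below $51r_0$, is where the real effort lies; the base cases $t=1,2$, where theorem~\ref{thm:bound-on-error-growth-per-timestep} does not directly apply, must be handled separately but contribute only a bounded additive correction that is absorbed into the linear bound.
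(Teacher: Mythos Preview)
Your approach is essentially the same as the paper's: verify the smallness condition~\eqref{eq:theorem-1-smallness-condition}, lower-bound each factor $[\mathcal{P}_{(t,t-1)}]_{P,P}$ via theorem~\ref{thm:bound-on-error-growth-per-timestep}, and iterate to bound the diagonal PTM entries and hence $r(\overline{\mathcal{N}_t})$.

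The one substantive difference is in how the PTM bounds on $\mathcal{E}_t$ are obtained. You flag as ``the hard part'' proving that the coherent unitary extremizes the off-diagonal PTM elements at fixed infidelity, and you derive only the weak diagonal bound $[\mathcal{E}_t]_{P,P}\ge 1-6r_0$ from the trace identity. The paper sidesteps this entirely by citing~\cite{beale2018quantum} for both $[\mathcal{E}_t]_{P,P}\ge 1-3r_0$ and $|[\mathcal{E}_t]_{P,P'}|\le\sqrt{6r_0}$, which immediately gives $\epsilon\le\sqrt{6r_0}/(1-3r_0)\le 0.25$. With these sharper inputs the paper bounds each diagonal uniformly, $[\mathcal{P}_{(t,t-1)}]_{P,P}\ge 1-17r_0$ (verified by graphing at $r_0=1/100$), and then uses $(1-17r_0)^t\ge 1-17r_0t$ rather than your telescoping sum; your idea of summing over $P$ first to exploit $\sum_P(1-[\mathcal{E}_s^H]_{P,P})=6r(\mathcal{E}_s)$ is slightly sharper on the leading term but does not help with the dominant $O(\epsilon^2)$ corrections, which are bounded per $P$ anyway. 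The base case is handled simply by $[\overline{\mathcal{N}_1}]_{P,P}=[\mathcal{H}\mathcal{E}_1\mathcal{H}]_{P,P}\ge 1-3r_0\ge 1-17r_0$.
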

\!\!\!\!\!\!Note that the coefficient $17/2$ can be reduced because it depends on the constraint $r_0 < 1/100$ and the order of the bounds on $\left[\mathcal{P}_{(t, t-1)}\right]_{P,P}$ used to prove the corollary. While Eq.~\eqref{eq:worst-case-infidelity} in corollary \ref{cor:worst-case-infidelity-growth} is a looser bound than Eq.~\eqref{eq:worst-case-infidelity-simple-proof} in the simple proof, the former has a wider range of applicability. Recall that for the derivation of Eq.~\eqref{eq:fast-and-loose-result} to hold, we require $t < O(1/\theta)$, where $\theta$ is the rotation angle of the physical unitary rotation. Here, our corollary holds for any single-qubit errors $\mathcal{E}_t$ without any restriction on the length of the teleportation chain as long as the average infidelity of each $\mathcal{E}_t$ is no more than a small constant $1/100$.

The proof of the corollary is as follows. If $\mathcal{E}_t$ has average infidelity $r_0\leq 1/100$, then we have the following bounds \cite{beale2018quantum} on the elements of the PTM,
\begin{align*}
    \left[\mathcal{E}_t\right]_{P,P} \geq 1 - 3r_0 > 0, \quad \left|\left[\mathcal{E}_t\right]_{P,P'}\right| \leq \sqrt{6r_0}.
\end{align*}
Consequently, by definition of $\epsilon$ in Eq.~\eqref{eq:theorem-1-smallness-condition}, we have the following upper bound on $\epsilon$,
\begin{align*}
    \epsilon \leq \sqrt{6r_0}/(1 - 3r_0) \leq 0.25.
\end{align*}
Since $\epsilon < 1/3$, we are allowed to use theorem~\ref{thm:bound-on-error-growth-per-timestep}. Applying these bounds to terms in Eq.~\eqref{eq:single-qubit-teleportation-chain-iterative-bounds} gives a lower bound on $\left[\mathcal{P}_{(t,t-1)}\right]_{P,P}$,
\begin{align*}
    \left[\mathcal{P}_{(t,t-1)}\right]_{P,P} &\geq \left.\left(1- \frac{3\epsilon^3}{1 - 3\epsilon^2}\right)(1 - 3r_0)\right|_{\epsilon=\frac{\sqrt{6r_0}}{1 - 3r_0}} \\
    &\quad\quad - \left.\left(1 + \frac{3\epsilon^2}{1 - 3\epsilon^2}\right)\frac{(\sqrt{6r_0})^2}{1 - 3r_0}\right|_{\epsilon=\frac{\sqrt{6r_0}}{1 - 3r_0}} \\
    &\geq 1 - 17r_0\quad \mathrm{for}\ r_0 \leq 1 / 100.
\end{align*}
The last line is obtained via graphing. We can then repeatedly apply Eq.~\eqref{eq:theorem-1-relative-change-definition}, which gives
\begin{align*}
    \left[\overline{\mathcal{N}_t}\right]_{P,P} &\geq (1 - 17 r_0)^{t-1}\left[\overline{\mathcal{N}_1}\right]_{P,P}
\end{align*}
In addition, $\left[\overline{\mathcal{N}_1}\right]_{P,P} = \left[\mathcal{H}\mathcal{E}_1\mathcal{H}\right]_{P,P} \geq 1 - 3r_0 \geq 1 - 17r_0$, so we have
\begin{align*}
    \left[\overline{\mathcal{N}_t}\right]_{P,P}
    &\geq (1 - 17r_0)^t.
\end{align*}
Substituting into the formula for the average infidelity in Eq.~\eqref{eq:average-infidelity-definition} gives
\begin{align*}
   r(\overline{\mathcal{N}_t}) &\leq \frac{1}{2}[1 - (1 - 17r_0)^t] \leq \frac{17}{2}r_0t.
\end{align*}
Therefore, the average infidelity of the teleportation chain grows at most linearly with time.

\section{Circuit-level pure \texorpdfstring{$Z$}{Z}-coherent errors in MBEC map to physical Pauli errors}
\label{sec:circuitlvl-eithetaz-errors}

We now focus on an experimentally-motivated noise model~\cite{levine_parallel_2019, bluvstein_quantum_2022} which we call circuit-level errors with pure $Z$-coherence (theorem~\ref{thm:2}).
We analytically show that these errors completely decohere into Pauli errors in MBEC due to teleportation.  In particular, we show that these coherent errors map exactly onto Pauli channels on each physical qubit.
Using this exact and efficient mapping to a Pauli channel, one can use Pauli noise simulations to efficiently benchmark the logical performance of MBEC with these circuit-level coherent errors. The MBEC schemes that we analyze are based on foliation~\cite{bolt2016foliated}, as reviewed in section~\ref{sec:preliminaries}.

The main theorem that we will prove in this section is the following:

\begin{theorem}
    If the single-qubit error channel $\mathcal{N}_{\mathcal{L}}$ at each space-time location ${\mathcal{L}}$ in the foliation of a CSS code has pure $Z$-coherences, that is, it satisfies
    \begin{align}
        [\mathcal{N}_\mathcal{L}]_{P,XP}=[\mathcal{N}_\mathcal{L}]_{P,YP}=0,\ \forall P=\{I,X,Y,Z\}
    \end{align}
    and 
    \begin{align}
        [\mathcal{N}_\mathcal{L}]_{Z,Z}=1
        \label{eq:pure-z-condition}
    \end{align}
    then there exists a set of single-qubit Pauli channels $\mathcal{F}[\{\mathcal{N}_\mathcal{L}\}]$ that is equivalent to the original set of channels $\{\mathcal{N}_\mathcal{L}\}$ on the foliated code. These single-qubit Pauli channels $\mathcal{F}[\{\mathcal{N}_\mathcal{L}\}]$ are efficiently calculable from $\{\mathcal{N}_\mathcal{L}\}$, requiring only $W$ multiplications of $4\times 4$ matrices, where $W$ is upper bounded by the weight of the stabilizers and the number of $X$ or $Z$ stabilizers acting on any qubit from the CSS code.
    \label{thm:2}
\end{theorem}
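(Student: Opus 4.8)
The plan is to reduce Theorem~\ref{thm:2} to the single-qubit decoherence mechanism of Section~\ref{sec:z-like-coherence}, using the fact that the foliated cluster state is a network of one-bit teleportations glued together by $CZ$ gates and terminated by single-qubit measurements. The first step I would take is purely algebraic: the hypotheses, and in particular Eq.~\eqref{eq:pure-z-condition}, force every Kraus operator of $\mathcal{N}_\mathcal{L}$ to lie in $\mathrm{span}\{I,Z\}$. Indeed, writing $[\mathcal{N}_\mathcal{L}]_{Z,Z}=\tfrac12\sum_k\mathrm{Tr}[ZK_kZK_k^\dagger]$ and bounding each term by $\|K_k\|_{\mathrm{HS}}^2$ via Cauchy--Schwarz, trace preservation shows the sum is at most $1$, with equality in Eq.~\eqref{eq:pure-z-condition} forcing $ZK_kZ=K_k$, i.e.\ $[K_k,Z]=0$. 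Every such $\mathcal{N}_\mathcal{L}$ is therefore a $Z$-like coherent channel in the sense of Eqs.~\eqref{eq:coherence-decomposition}--\eqref{eq:z-like-coherent-error}, whose only coherence sits in the $\{X,Y\}$ block of the PTM while the $\{I,Z\}$ block is the identity.

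This Kraus characterization drives the rest of the argument. Because each $K_k\in\mathrm{span}\{I,Z\}$ commutes with every $CZ$ gate, I would commute all of the pure $Z$-coherent errors acting on a single physical cluster-state qubit $q$ — the preparation, idle, and pre-measurement errors together with the $e^{i\theta Z}$-type errors attached to each $CZ$ edge incident on $q$ — forward through the circuit until they pile up immediately before $q$ is measured. Since channels with Kraus operators in $\mathrm{span}\{I,Z\}$ are closed under composition (their $\{X,Y\}$ blocks multiply), this pile collapses into one effective pure $Z$-coherent channel on $q$, obtained by multiplying the corresponding $4\times4$ PTMs. The number of factors equals the number of error locations touching $q$, which is bounded by the degree of $q$ in the foliation graph and hence by the stabilizer weights and the number of $X$/$Z$ stabilizers meeting $q$; this is precisely the count $W$ in the statement.

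It then remains to show that a single accumulated pure $Z$-coherent channel placed in front of $q$'s measurement is equivalent, for all measurement outcomes and for the teleported output, to its Pauli twirl, i.e.\ to a single-qubit $Z$-dephasing Pauli channel. This is exactly the phenomenon established in Section~\ref{sec:z-like-coherence}: the teleportation-induced random $Z$ in the Pauli frame attaches the sign $(-1)^{q_\mathcal{L}}$ to the coherent component $\mathcal{N}_{Z,\mathcal{L}}$ as in Eq.~\eqref{eq:transformed-z-like-coherent-error}, and averaging over the random outcome annihilates the off-diagonal $\{X,Y\}$ block while leaving the diagonal (the Pauli content) intact. Equivalently, since $Z$ anticommutes with the measured observable $X_q$, the $I$-vs-$Z$ coherence is off-diagonal in the measurement basis and is projected away, and what survives is the $Z$-flip probability read from the accumulated diagonal element $[\,\cdot\,]_{X,X}$. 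This yields the single-qubit Pauli channel $\mathcal{F}_\mathcal{L}$ for qubit $q$.

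The hard part will be upgrading this per-qubit statement to an exact equivalence of the whole foliated circuit — in particular, ruling out residual terms that couple the $Z$-coherences living on distinct qubits. I would address this by expanding the product of all $\mathcal{N}_\mathcal{L}=\mathcal{N}_{I,\mathcal{L}}+\mathcal{N}_{Z,\mathcal{L}}$ over the subset of ``coherence-active'' locations and showing, using the mutual independence of the relevant frame signs $q_\mathcal{L}$ (the multi-qubit analogue of the short correlation time used in Section~\ref{sec:z-like-coherence}), that every cross term either averages to zero or recombines into purely Pauli content — for example, applying the coherence superoperator $\rho\mapsto I\rho Z - Z\rho I$ twice returns $2(I\rho I - Z\rho Z)$, which is already diagonal. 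Establishing that the byproduct/frame signs supplied by the specific CSS foliation graph carry enough independent randomness to kill or fold every surviving cross term is the central technical obstacle; once it is settled, the equivalence is exact and the explicit Pauli channels $\mathcal{F}[\{\mathcal{N}_\mathcal{L}\}]$ are read off from the $W$ PTM products of the second step.
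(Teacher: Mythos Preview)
Your first two moves---showing that the hypotheses force every Kraus operator into $\mathrm{span}\{I,Z\}$, and then commuting all such errors through the $CZ$ network to accumulate one pure $Z$-coherent channel per physical cluster qubit via at most $W$ PTM multiplications---are correct and coincide with the paper's Section~\ref{sec:circuit-manipulation}.

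The genuine gap is in your third and fourth steps, where you treat every cluster qubit uniformly. The paper's proof hinges on distinguishing two kinds of qubits that require \emph{different} decoherence mechanisms. For the \emph{code qubits} (the ones threaded through one-bit teleportation chains), your appeal to Section~\ref{sec:z-like-coherence} is exactly right and matches Section~\ref{sec:coherent-errors-on-code-qubits}: the teleportation-induced frame sign $(-1)^{q_{(i,t)}}$ averages the $Z$-coherence to zero, and the crucial bookkeeping is that one fixes the cluster-state stabilizer combinations $\vec{s}'_t$ (what the decoder actually consumes) while averaging over the residual teleportation outcomes $m_{(i,t)}$. But the \emph{ancilla qubits} that mediate stabilizer measurements do not sit in a teleportation chain, so there is no ``frame sign $q_\mathcal{L}$'' attached to them, and your proposed independence argument has nothing to grab onto. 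Your alternative claim---that because $Z$ anticommutes with the measured $X_q$, the coherence is ``projected away''---is not correct as stated: an error $\alpha I+\beta Z$ on the ancilla before its $X$-measurement produces the imperfect projector $\alpha\,\Pi[s]+\beta\,\Pi[1{-}s]$ on the data (Eq.~\eqref{eqn:imperfect-projection}), which \emph{does} leave residual coherence between syndrome sectors whenever the code state is in a superposition of stabilizer eigenspaces.

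The paper resolves this with a two-stage argument whose \emph{order matters}: first establish that the code-qubit errors are Pauli (Section~\ref{sec:coherent-errors-on-code-qubits}), which by Lemma~4.1 forces the data state into a definite stabilizer eigenspace at every step; \emph{then} observe that on such a state the imperfect projector collapses to a classical measurement error with flip probability $|\beta|^2$ (Section~\ref{sec:coherent-errors-ancilla}). Your expansion over coherence-active subsets with an appeal to mutually independent frame signs does not supply this structural ingredient, and without it the ancilla cross terms do not vanish.
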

\noindent Note that pure $Z$-coherent errors are $Z$-like coherent errors [as defined in Eq.~\eqref{eq:z-like-coherent-error}] with the additional constraint of Eq.~\eqref{eq:pure-z-condition}. As a consequence of this theorem, we have the following corollary:
\begin{corollary}
    Replacing $\mathcal{N}_\mathcal{L}$ with $\mathcal{F}[\{\mathcal{N}_\mathcal{L}\}]$ in the foliation circuit will lead to the same average logical channel after error correction.
    \label{cor:equivalent-ler}
\end{corollary}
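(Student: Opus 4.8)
The plan is to argue that the average logical channel after error correction is a \emph{fixed functional} of the noisy foliated-code process, and that Theorem~\ref{thm:2} guarantees this process is identical for the two error models, so the functional returns the same answer in both cases. Concretely, the full MBEC pipeline consists of (i) preparing the foliated cluster state on some logical input, (ii) applying the single-qubit noise channels at each spacetime location, (iii) measuring every physical qubit in its prescribed basis, and (iv) feeding the classical measurement record into a fixed decoder that outputs a logical Pauli recovery. The decoder depends on the error model only through the measurement record; it is otherwise a fixed classical-quantum map.

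First I would formalize the output of steps (i)--(iii) as a single CPTP map $\Lambda$ that sends the logical input to the joint state of the surviving output register together with the classical register holding all measurement outcomes $\mathbf{m}$. Writing the conditional (sub-normalized) logical map for outcome $\mathbf{m}$ as $\mathcal{L}_{\mathbf{m}}$, the average logical channel is
\begin{align*}
    \overline{\mathcal{L}} = \sum_{\mathbf{m}} \mathcal{R}(\mathbf{m}) \circ \mathcal{L}_{\mathbf{m}},
\end{align*}
where $\mathcal{R}(\mathbf{m})$ is the recovery the decoder applies on branch $\mathbf{m}$. Every ingredient on the right-hand side except the family $\{(\mathbf{m},\mathcal{L}_{\mathbf{m}})\}$ is fixed and independent of the noise model.

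Next I would invoke Theorem~\ref{thm:2}. Its statement that $\mathcal{F}[\{\mathcal{N}_\mathcal{L}\}]$ is equivalent to $\{\mathcal{N}_\mathcal{L}\}$ \emph{on the foliated code} should be read as the equality of the map $\Lambda$ for the two error models: since the Pauli channels $\mathcal{F}[\{\mathcal{N}_\mathcal{L}\}]$ are constructed to reproduce the action of the coherent channels qubit-by-qubit on the cluster state, they produce the same joint distribution over measurement records $\mathbf{m}$ and the same conditional logical maps $\mathcal{L}_{\mathbf{m}}$. Substituting this equality into the expression for $\overline{\mathcal{L}}$, with the fixed decoder $\mathcal{R}$ unchanged, immediately yields identical average logical channels, which is the claim.

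The main obstacle is making the notion of equivalence ``on the foliated code'' precise enough to survive the measurement and classical-decoding steps. Equality of the final reduced logical density matrix alone would not suffice, because the decoder reads the syndrome; what is required is equality at the level of the joint (output register $\otimes$ measurement record) state, i.e.\ equality of $\Lambda$ itself. I would therefore verify that the construction of $\mathcal{F}$ in Theorem~\ref{thm:2} commutes with the fixed measurement bases --- equivalently, that the Pauli channels and the coherent channels produce the same diagonal and off-diagonal PTM data on every qubit that is subsequently measured --- so that the syndrome statistics and the branch-resolved logical action coincide. Once this is established, the corollary follows with no further analysis, since error correction is a fixed post-processing applied identically in both cases.
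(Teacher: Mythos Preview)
Your high-level structure---the logical channel is a fixed post-processing of the noisy foliated process, so equality of the process implies equality of the logical channel---matches the paper's argument. The gap is in the level at which you ask for equality. You phrase the requirement as equality of $\Lambda$, the map to the joint state of output \emph{and all measurement outcomes} $\mathbf{m}$, and propose to check that the Pauli and coherent channels agree branch by branch on every measured qubit. That is stronger than what Theorem~\ref{thm:2} actually delivers, and in fact it is false: for a fixed full record of teleportation outcomes $\{m_{(i,t)}\}$ the conditional channel on the code qubits is still the coherent $\mathcal{N}'_{x/z,(i,t)}$ of Eqs.~\eqref{eq:conjugated-x-channel}--\eqref{eq:conjugated-z-channel}, not its Pauli twirl.

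The paper's resolution is precisely the ``slight reinterpretation of which measurement outcomes to keep constant'' flagged before Section~\ref{sec:circuit-manipulation}. The measurement record splits into the teleportation bits $\{m_{(i,t)}\}$ and the syndrome bits; the quantities the decoder consumes are only the cluster-state stabilizer outcomes $\{\vec{s}_t'\}$. Theorem~\ref{thm:2} is proved by \emph{fixing} $\{\vec{s}_t'\}$ and \emph{averaging} over $\{m_{(i,t)}\}$; Eq.~\eqref{eq:averaged-code-qubit-channel} shows that this conditional-averaged channel is Pauli. Because the recovery map $\mathcal{R}$ depends on $\mathbf{m}$ only through $\{\vec{s}_t'\}$, the sum defining $\overline{\mathcal{L}}$ can be reorganized so that the average over $\{m_{(i,t)}\}$ is taken \emph{inside} each syndrome branch before $\mathcal{R}$ is applied, and at that point the two error models coincide. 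So to close the argument you should not try to verify equality of $\Lambda$ on the full record; instead, partition the record, observe that the decoder is constant along the teleportation-outcome fibers, and invoke the syndrome-conditional equivalence that the theorem actually proves.
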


We remark that we cannot directly use the results from the single-qubit teleportation chain in the MBEC setting due to two important complications. The first complication is that MBEC involves additional stabilizer measurements. For the single-qubit teleportation chain, we averaged over the random teleportation measurement outcomes to show that the resulting channel is twirled. In contrast, now we carefully choose teleportation and stabilizer measurements to show that the logical channel under physical Pauli noise is equivalent to the logical channel under physical coherent noise. The second complication is the presence of coherent errors on ancilla qubits used to measure stabilizers. These errors do not fit within the frameworks of existing numerical~\cite{bravyi2018correcting,marton2023coherent} and analytical~\cite{huang2019performance,beale2018quantum,iverson2020coherence} studies because a stabilizer measurement no longer collapses the state into a definite eigenspace of the stabilizers~\cite{beale2023randomized}. Despite these challenges, in Section~\ref{sec:coherent-errors-ancilla}, we show that we can still reduce coherent errors during stabilizer measurements to incoherent measurement errors. 

To explicitly show the reduction of circuit-level pure Z-coherent errors into Pauli errors in MBEC, we first manipulate the foliation circuit such that every coherent error on code qubits is preceded by an operator that would twirl (Eq.~\ref{eq:pauli-twirl}) it. We walk through these circuit manipulation steps in Section~\ref{sec:circuit-manipulation}. For the single-qubit teleportation chain, these circuit manipulations would be enough to show that coherent errors on code qubits can be replaced by their Pauli-twirled counterparts following a proof similar to the one presented in Section~\ref{sec:z-like-coherence}. However, to ensure corollary~\ref{cor:equivalent-ler} is satisfied, we find that there should be a slight reinterpretation of which measurement outcomes to keep constant while we average over other random measurement outcomes. After writing down explicit equivalent Pauli channels on code qubits, use the fact that errors are incoherent on code qubits to show that coherent errors on ancilla qubits can be modeled as incoherent measurement errors.

\begin{figure*}
    \centering
    \includegraphics[width=\linewidth]{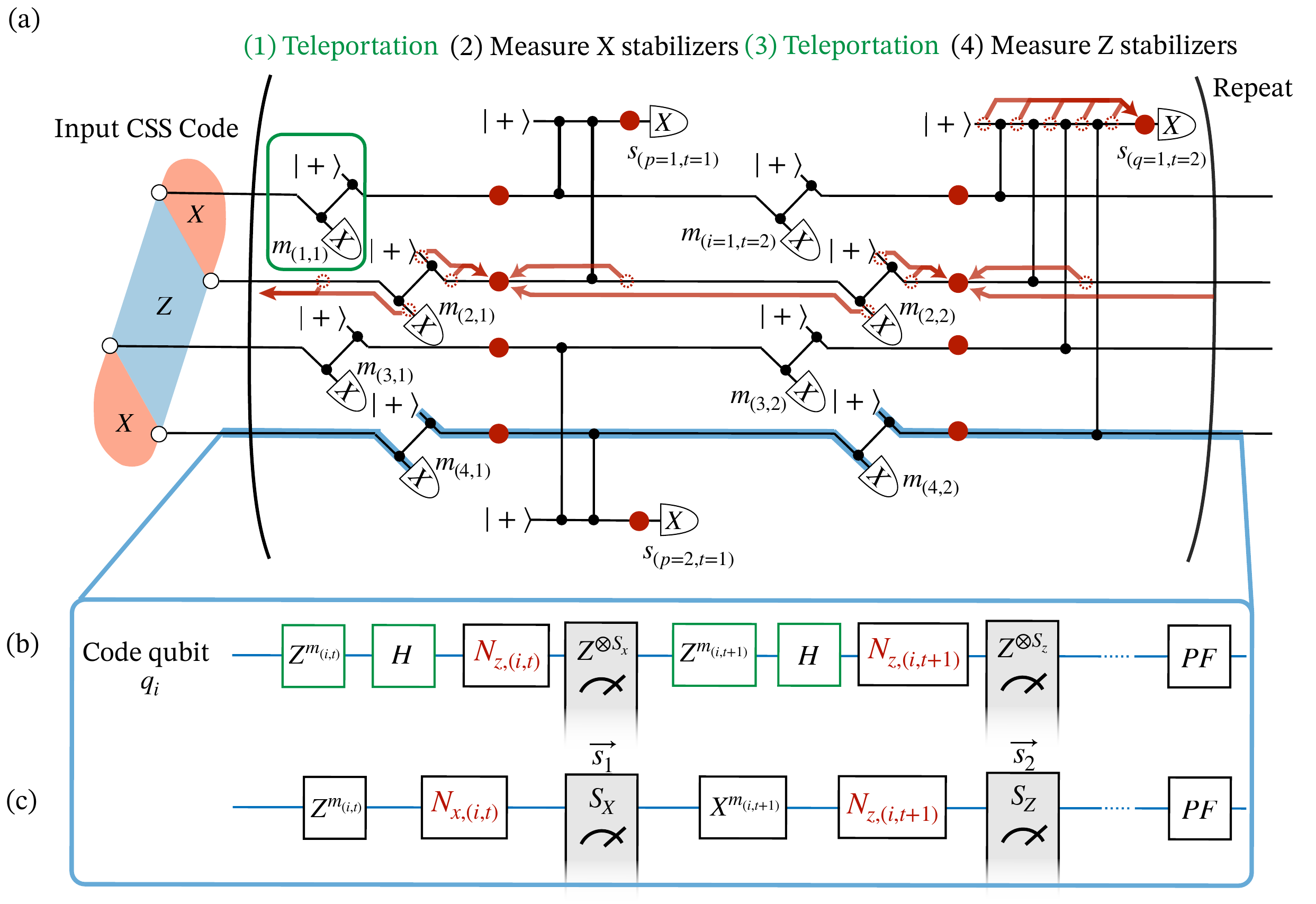}
    \caption{(a) We adopt an equivalent interpretation of the cluster state.  We start with qubits on the left in a code state of the $[[4,1,2]]$ CSS code. Then, we execute the circuit shown repeatedly to foliate this code into a cluster state. Each solid red circle in this circuit indicates an $N_{z,(\gamma,t)}=\alpha_{(\gamma,t)} I+\beta_{(\gamma,t)} Z$ over rotation error. These errors are the composition of multiple gate-level over rotation errors which have been commuted through the circuit (red arrows) from the noisy gate which they originated from (dotted, open circles). (b) We take the $i$th wire or code qubit $i$ from the foliation circuit in~(a) and construct an equivalent effective circuit in~(b) on this code qubit. We have replaced each teleportation circuit in~(a), circled in the green box, with the Kraus operator $H_iZ_i^{m_{(i,t)}}$, where $m_{(i,t)}$ is the random teleportation measurement outcome. Gray boxes indicate noisy stabilizer measurements. (c) Further manipulation of~(b) leads to the following operators on each code qubit.}
    \label{fig:Zerrors-clusterstate-final}
\end{figure*}

\subsection{Preliminaries}
\label{sec:preliminaries}
Consider a foliated $[[n,k,d]]$ CSS code. Foliation occurs via iterating through the following four sequential steps: (1) teleportation of each physical data qubit onto a fresh qubit, (2) an effective measurement of the $X$ stabilizers via ancilla qubits, (3) another teleportation of each data qubit onto fresh qubit, and (4) measurement of the $Z$ stabilizers via ancilla qubits. Thus, each data qubit of the $[[n,k,d]]$ code moves through a teleportation chain Fig.~\ref{fig:single-qubit-teleportation-chain}~(b) while simultaneously the ancilla qubits perform stabilizer measurements~\cite{bolt2016foliated}. Recall from Fig.~\ref{fig:single-qubit-teleportation-chain} that every teleportation applies a Hadamard; therefore, the measurement of an $X$-like stabilizer $X^{\otimes S_x}$, which is carried out after odd teleportation steps, is reduced to the measurement of a $Z$-like stabilizer $Z^{\otimes S_x}$ supported on the same qubits. Thus, each stabilizer can be measured by performing CZ gates with an ancilla prepared in the $\ket{+}$ state followed by an $X$-basis measurement of the ancilla. The four steps of foliation are repeated $L$ times to perform $L$ rounds of syndrome extraction. As an example, Fig.~\ref{fig:Zerrors-clusterstate-final} illustrates foliation with the [[4,1,2]] code.

For the rest of Section~\ref{sec:circuitlvl-eithetaz-errors}, we refer to each single-qubit teleportation chain as an effective \textit{code qubit}. We index these code qubits $q_i$ by $i\in[1..n]$. Each full round of syndrome extraction teleports the data qubit twice, therefore, the Pauli frame of each code qubit $q_i$ changes $2L$ times, according to the measurement outcomes of each teleportation circuit, $m_{(i,t)} \in \{0,1\}$ where $t\in [1..2L]$. Additionally, foliation produces measurement results $s_{p,t}$ and $s_{q,t}$ of the $X-$ and $Z-$ stabilizers of the original CSS code, where $p\in[1..N_{S_X}]$ and $q\in[1..N_{S_Z}]$. Here, $N_{S_X}$ and $N_{S_Z}$ are the number of $X$ and $Z$ stabilizers of the CSS code respectively. By construction of the foliation, the alternating $X-$ and $Z-$ stabilizer measurements mean that $s_{p,t}$ only exists for odd $t$, while $s_{q,t}$ only exists for even $t$. To return to the initial Pauli frame at the end of the foliation, we keep track of the outcomes $m_{(i,t)}$ in software, and apply a Pauli frame update $\pf{2L}$, defined in Sec.~\ref{sec:single-qubit-teleportation-chain}, Eq.~\eqref{eq:single-qubit-teleportation-pauli-frame}, on each code qubit at the end of the foliation. 

To perform error correction on a foliated code, we need not only stabilizer measurement outcomes $\vec{s}_{t}=(s_{p,t})$ or $(s_{q,t})$ but also teleportation measurement outcomes $\{m_{(i,t)}\}$. To illustrate this, we show the effective evolution of a code qubit $q_i$ after three rounds of foliation.
\begin{widetext}
    \centering
    \includegraphics[width=0.6\textwidth]{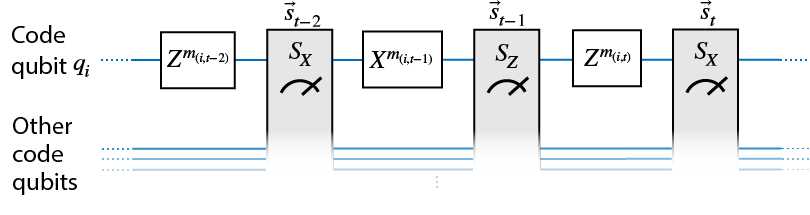}
\end{widetext}
Recall from Eq.~\eqref{eq:single-qubit-teleportation-pauli-frame:b} that a random $X^{m_{(i,t)}}$ arises due to the application of a Hadamard from each teleportation. From this evolution, we see that even in the absence of errors, subsequent $\vec{s}_t$ measurement outcomes will flip depending on induced random Paulis $Z^{m_{(i,t)}}$ and $X^{m_{(i,t-1)}}$ during teleportation. 
This dependence of $\{\vec{s}_{t}+\vec{s}_{t-2}\}$ on $\{m_{(i,t)}\}$ is removed by defining standard cluster state stabilizer measurement outcomes, $\{\vec{s}_t + \vec{s}_{t-2}+\vec{u_t}(m_{(i,t)})\}$. Here, $\vec{u_t}$ is defined as the sum of $m_{(i,t)}$ on all qubits in the support of the stabilizer (App.~\ref{app:cluster-state-stabilizers}). All of these cluster state stabilizers are $0$ in the absence of errors and are used for decoding and error correction.

Now that we have discussed the preliminaries of foliation, the aim of the next section~\ref{sec:circuit-manipulation} is to manipulate a foliation circuit with errors of pure $Z$-coherence at every space-time location.

\subsection{Circuit manipulation with pure \texorpdfstring{$Z$}{Z}-coherent errors} 
\label{sec:circuit-manipulation}

In the circuit-level noise model under consideration, single-qubit pure $Z$-coherent errors $\mathcal{N}_{z,(\gamma,t,w)}$, (dotted red circles in Fig.~\ref{fig:Zerrors-clusterstate-final}~(a)) are applied at every space-time location $(\gamma,t,w)$ in the circuit. These spacetime locations occur between every operation including state preparation, two-qubit gates, and measurement. It is convenient to label each space-time location with $(\gamma,t,w)$, where $\gamma\in[1..n+N_{S_X}+N_{S_Z}]$ indicates the code qubit or ancilla qubit suffering the error, and $t\in[1..2L]$ indicates the round of foliation. $\gamma$ is an index that goes over all code qubits and ancilla qubits, where indices $[1,n]$ are designated for code qubits, indices $[n+1,n+N_{S_X}]$ are ancilla qubits measuring $X$ stabilizers, and indices $[n+N_{S_X}+1, n+N_{S_X}+N_{S_Z}]$ are ancilla qubits measuring $Z$ stabilizers. Here we define a round of foliation such that it starts and ends, respectively, with the initialization and measurement of the physical qubit in the teleportation chain.
Finally, $w\in[1..W_{\gamma}]$ enumerates spacetime locations where errors can happen in a single foliation round. In general, $W_{\gamma}$ is bounded by the weight of the stabilizers of the code or the number of $X$ or $Z$ stabilizers measured on each data qubit. 
To illustrate, consider the example in Fig.~\ref{fig:Zerrors-clusterstate-final}~(a). The ancilla qubit measuring the $Z$ stabilizer on step $(4)$ has $W_\gamma=5$.

With indices $\gamma,t$ and $w$, we can describe the location of all pure $Z$-coherent errors, which in general, may be inhomogeneous over $(\gamma,t,w)$. Although $\mathcal{N}_{z,(\gamma,t,w)}$ may, in general, have Kraus rank $>1$, for simplicity, we will show the proof for $\mathcal{N}_{z,(\gamma,t,w)}=N_{z,(\gamma,t,w)}\cdot N_{z,(\gamma,t,w)}^\dagger$ where $N_{z,(\gamma,t,w)}=\alpha_{(\gamma,t,w)} I+\beta_{(\gamma,t,w)} Z$.

Within each foliation round, we choose to commute all single-qubit errors on the code qubits to the space-time location before the stabilizer measurement of the same round. These commutations are easily possible because all the noise channels commute with the gates. On ancilla qubits, we commute all single-qubit errors to right before the measurement. In the example of Fig~\ref{fig:Zerrors-clusterstate-final}~(a), these commutations are represented by the red arrows, where we have commuted errors to the locations in the circuit with closed, red circles. As a result, all single qubit errors at each physical qubit $\gamma$ and in each foliation round $t$ are coherently combined into
\begin{align}
    \mathcal{N}_{z,(\gamma,t)}=\prod_{w=1}^{W_{\gamma}}\mathcal{N}_{z,(\gamma,t,w)}=N_{z,(\gamma,t,w)}\cdot N_{z,(\gamma,t,w)}^\dagger\\
    N_{z,(\gamma,t,w)}=\alpha_{(\gamma,t)}I + \beta_{(\gamma,t)}Z
    \label{eq:spacetime-coherent-addition}
\end{align}
For the rest of the section, when a noise channel $\mathcal{N}_z$ is indexed by two variables, it refers to a channel where errors have already been partially coherently combined in this manner.
Constructive interference due to the composition of these channels over $W_{\gamma}$ is efficiently calculable-- we simply need to compose $W_{\gamma}$ single-qubit channels. This is the only place in the proof where the composition of coherent channels is necessary.
\subsection{Coherent errors on code qubits}
\label{sec:coherent-errors-on-code-qubits}

In this section, we show the decoherence of coherent errors on only the code qubits, which recall from Section~\ref{sec:preliminaries}, are indexed by $i$ instead of $\gamma$.
After circuit manipulation as described in the previous section, the evolution of each code qubit for two consecutive rounds of foliation can be described by the sequence in Fig.~\ref{fig:Zerrors-clusterstate-final}~(b). The $Z^{m_{(i,t)}}$ followed by $H$ in Fig.~\ref{fig:Zerrors-clusterstate-final}~(b) arises due to the effective operations of the noiseless teleportation circuit from Eq.~\eqref{eq:kraus-teleportation}, where $m_{(i,t)}$ is a uniform, random bit. By commuting Hadamard operators from odd time steps $t$ to Hadamard operators at even $t$, we arrive at effective operators in Fig.~\ref{fig:Zerrors-clusterstate-final}~(c). Here, half of the pure $Z$-coherent errors transform into pure $X$-coherent errors of the form $N_{x,(i,t)} = \alpha_{(i,t)} I+ \beta_{(i,t)} X,\ \forall t\in \mathrm{odd}$. We have changed indices to $i$ to indicate that we refer to lumped coherent errors in Eq.~\ref{eq:spacetime-coherent-addition} on code qubits $i\in[1,n]$. Further, the $\mathrm{CZ}$ gates used to measure $Z^{\otimes S_x}$ are transformed by the Hadamards into $\mathrm{CX}$ gates, thereby measuring the $X^{\otimes S_x}$ operator, by the design of the foliation procedure. Finally, the random $Z$ operators on even $t$ are transformed into random $X$ operators. 

Now, we annihilate the random teleportation-induced Paulis from Fig.~\ref{fig:Zerrors-clusterstate-final}~(c), $Z_i^{m_{(i,t)}}$ and $X_i^{m_{(i,t+1)}}$, with the Pauli frame update, $\pf{2L}$:
\begin{widetext}
\centering
\includegraphics[width=0.7\textwidth]{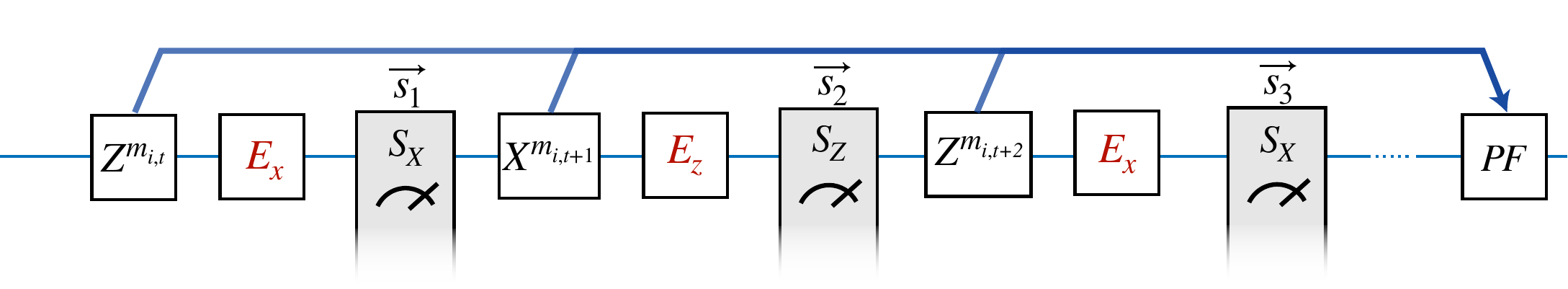}
\end{widetext}
As a result, the measurement outcomes $\vec{s}_{t}$ will be flipped by $m_{(i,t)}$. Intuitively, one can understand these measurement flips as a reinterpretation of each stabilizer measurement outcome $\vec{s}_{t}$ depending on the Pauli frame induced by all preceding random teleportation measurements. These set of flipped measurement outcomes exactly generates the cluster state stabilizer outcomes as discussed in section~\ref{sec:preliminaries}. Additionally, the movement of these random teleportation Paulis to the end of the circuit flips the sign of $\beta_{(i,t)}$ in $N_{x,(i,t)}$ or $N_{z,(i,t)}$. This is equivalent to conjugating each $\mathcal{N}_{x,(i,t)}$ or $\mathcal{N}_{z,(i,t)}$ with the Pauli frame $\pf{t}$. We denote this conjugated channel as $\mathcal{N}_{x,(i,t)}'$ or $\mathcal{N}_{z,(i,t)}'$ which has the following form,
\begin{align}
    \mathcal{N}_{x,(i,t)} \rightarrow \mathcal{N}_{x,(i,t)}' &= \pf{t}^{-1}\mathcal{H}\mathcal{N}_{x,(i,t)}\mathcal{H}\pf{t} &t\ \mathrm{odd} \notag\\
    &= \mathcal{E}_{I,(i,t)}+(-1)^{z_{(i,t)}}\mathcal{E}_{X,(i,t)},
    \label{eq:conjugated-x-channel} \\
    \mathcal{N}_{z,(i,t)} \rightarrow \mathcal{N}_{z,(i,t)}' &= \pf{t}^{-1}\mathcal{N}_{z,(i,t)}\pf{t} &t\ \mathrm{even} \notag\\
    &= \mathcal{E}_{I,(i,t)}+(-1)^{x_{(i,t)}}\mathcal{E}_{Z,(i,t)},
    \label{eq:conjugated-z-channel}
\end{align}
where $x_{(i,t)}$ and $z_{(i,t)}$ are defined as
\begin{align}
    x_{(i,t)}&=m_{(i,2)}+m_{(i,4)}+...+m_{(i,t)} & t\ \mathrm{even}, \\
    z_{(i,t)}&=m_{(i,1)}+m_{(i,3)}+...+m_{(i,t)} & t\ \mathrm{odd},
\end{align}
Additionally, $\mathcal{E}_{I,(i,t)}$, $\mathcal{E}_{X,(i,t)}$ and $\mathcal{E}_{Z,(i,t)}$ are defined as
\begin{align}
    \mathcal{E}_{I,(i,t)}(\cdot) &=
    \begin{cases}
        |\alpha_{(i,t)}|^2 I \cdot I + |\beta{(i,t)}|^2 X \cdot X,&\forall t\in\mathrm{odd}\\
        |\alpha_{(i,t)}|^2 I \cdot I + |\beta{(i,t)}|^2 Z \cdot Z,&\forall t\in\mathrm{even}\\
    \end{cases}
\end{align}
\begin{align}
    \mathcal{E}_{X,(i,t)}(\cdot) &= \alpha_{(i,t)}\beta_{(i,t)}^* I \cdot X + \beta_{(i,t)} \alpha_{(i,t)}^* X \cdot I,& t\ \mathrm{odd} \\
    \mathcal{E}_{Z,(i,t)}(\cdot) &= \alpha_{(i,t)}\beta_{(i,t)}^* I \cdot Z + \beta_{(i,t)} \alpha_{(i,t)}^* Z \cdot I,& t\ \mathrm{even}
\end{align}
Suggestively, we have decomposed the noise channel $\mathcal{N}'_{x,(i,t)}$ and $\mathcal{N}'_{z,(i,t)}$ 
in Eqs.~\eqref{eq:conjugated-x-channel}~and~\eqref{eq:conjugated-z-channel} to group terms of the channel that are independent of the random variables 
$x_{(i,t)}$ and $z_{(i,t)}$, and terms whose sign depends on $x_{(i,t)}$ and $z_{(i,t)}$.  Concretely, $\mathcal{E}_{I,(i,t)}$ is independent of $x_{(i,t)}$ and $z_{(i,t)}$ and will remain invariant after averaging over these random variables. Meanwhile, $\mathcal{E}_{X,(i,t)}$ and $\mathcal{E}_{Z,(i,t)}$ will disappear after averaging over the $-1$ sign dependent on random $x_{(i,t)}$ or $z_{(i,t)}$ respectively.

Explicitly, after this commutation, we arrive at the following evolution on all code qubits,
\begin{widetext}
    \centering \includegraphics[width=0.8\textwidth]{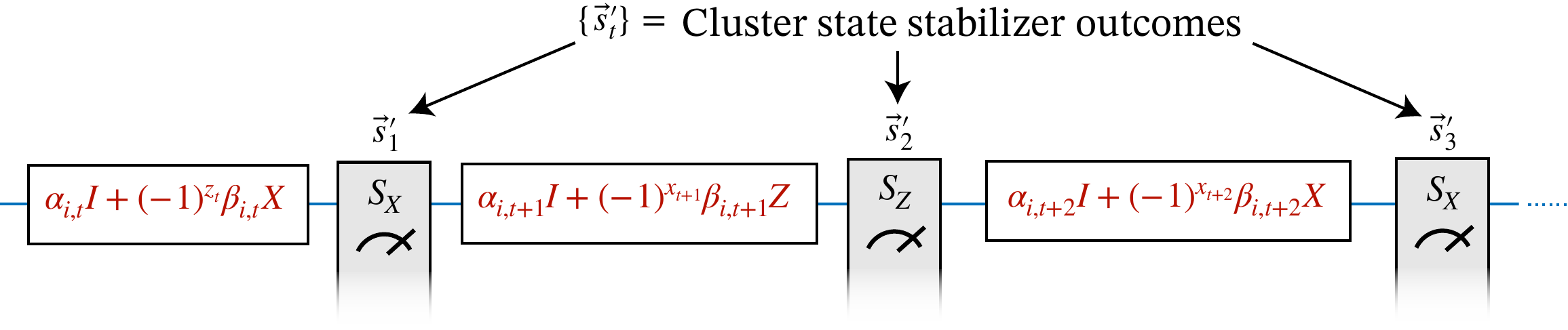}
\end{widetext}
To show the decoherence of errors, we fix the cluster state stabilizers $\{\vec{s}_t'\}$, and average over independent and random $\{m_{(i,t)}\}$. We remark that each conjugated error $\mathcal{N}_{x,(i,t)}'$ and $\mathcal{N}_{z,(i,t)}'$ is uncorrelated with each other. As a result, the effective channel can be written as the following,
\begin{align}
    \overline{\mathcal{N}_{\mathrm{tot}}}=\!\!\!\!\prod_{t=\{1,3,...2L-1\}}\!\!\!\!\mathcal{M}[\vec{s}_{t+1}']\overline{\mathcal{N}_{z,(i,t+1)}'} \mathcal{M}[\vec{s}_t']\overline{\mathcal{N}_{x,(i,t)}'},
    \label{eq:averaged-code-qubit-channel}
\end{align}
where $\mathcal{M}[\vec{s}_t']$ is the channel associated with the stabilizer measurements given measurement outcome $\vec{s}_t'$. 
As in Section~\ref{sec:z-like-coherence}, the averaged channels $\overline{\mathcal{N}_{x,(i,t)}'}$ and $\overline{\mathcal{N}_{z,(i,t+1)}'}$ are simply the twirled versions of the unconjugated error channels $\mathcal{N}_{x,(i,t)}$ and $\mathcal{N}_{z,(i,t)}$ respectively. In other words, coherent errors $\mathcal{N}_{x/z,(i,t)}$ at each space-time location on the code qubits can replaced with an equivalent Pauli channel
\begin{align}
    \overline{\mathcal{N}_{x/z,(i,t)}'}=
    \begin{cases}
        \mathcal{P}_x\!\left[p = |\beta_{(i,t)}|^2\right] & \text{if $t\ \mathrm{odd}$}\\
        \mathcal{P}_z\!\left[p = |\beta_{(i,t)}|^2\right] & \text{if $t\ \mathrm{even}$}
    \end{cases}
\end{align}
and produce the same logical action. 
Here, $\mathcal{P}_x\big[p] = (1-p) \mathcal{I} + p \mathcal{X}$ is a single-qubit bit-flip channel with bit-flip probability $p$. $\mathcal{P}_z\big[p] = (1-p) \mathcal{I} + p \mathcal{Z}$ is a single-qubit phase-flip channel with phase-flip probability $p$.

Recall that each $\mathcal{N}_{z,(i,t)}$ is actually the coherent addition of $W_i$ single-qubit channels from Eq.~\eqref{eq:spacetime-coherent-addition}. Therefore, the channel at each space-time location on code qubits $(i,t,w)$ can be replaced by Pauli channels such that $W_i$ compositions of these channels gives a cumulative bit or phase-flip rate of $|\beta_{(i,t)}|^2$. That is, the coherent channels $\{\mathcal{N}_{(i,t,w)}\}$ at each $(i,t,w)$ are replaced by Pauli channels,

\begin{align}
    &\mathcal{N}_{(i,t,w)}\mapsto \nonumber \\
    &\begin{cases}
        \mathcal{P}_x\!\left[p = \frac{1}{2}\left(1-\sqrt[W_i]{1-2|\beta_{(i,t)}|^2}\right)\right],& \text{$t\ \mathrm{odd}$},\ \forall w\\
        \mathcal{P}_z\!\left[p =\frac{1}{2}\left(1-\sqrt[W_i]{1-2|\beta_{(i,t)}|^2}\right)\right],& \text{$t\ \mathrm{even}$},\ \forall w\\
    \end{cases}
    \label{eq:code-qubit-pauli-channel}
\end{align}
Therefore, we have shown that after averaging, all coherent errors on code qubits are equivalent to Pauli errors for a given set of stabilizer measurement outcomes $\{\vec{s}_t'\}$.

\subsection{Coherent errors on ancilla qubits} \label{sec:coherent-errors-ancilla}

In this section, we will show that coherent errors on the ancilla qubits mediating stabilizer measurements are effectively incoherent measurement errors. Combined with the conclusions of the previous section~\ref{sec:circuit-manipulation}, this means that all circuit-level errors with $Z$ coherence in the foliated code are incoherent. We will then tie our results from the previous section and this one to prove corollary~\ref{cor:equivalent-ler} of theorem~\ref{thm:2}, which says that we can simulate the logical channel of MBEC under coherent errors with Pauli errors and still reproduce the same logical channel.

To show that coherent errors on ancilla qubits during foliation are Pauli, we first define a perfect stabilizer measurement, and then consider changes to this measurement once pure $Z$-coherent errors are added. Without loss of generality, we will examine one stabilizer $S$, and the same procedure can be applied to each stabilizer measurement in foliation. 

Suppose we measure a stabilizer generator $S$ and obtain measurement outcome $s$.

Then, in the absence of errors, this applies the following perfect projector on the data qubits,
\begin{align}
    \Pi[s] = \frac{1}{2} [I+(-1)^{s}S].
    \label{eqn.perfect-projector}
\end{align}
Now, let us consider the circuit-level noise model with pure $Z$-coherent errors at every space time location $(\gamma,t,w)$. Recall in Fig.~\ref{fig:Zerrors-clusterstate-final}~(a), we pushed the errors to the spacetime location right before the measurement of the ancilla qubits. Consequently it is natural to consider an error $N_{z}=\alpha I+\beta Z$ on the ancilla qubit measuring the stabilizer $S$. This changes the projector $\Pi[s]$ to an imperfect projector $\tilde\Pi[s]$ of the following form,

\begin{align}
    \tilde{\Pi}[s] &=  \alpha \Pi[s]
    +\beta\Pi[1-s]
    \label{eqn:imperfect-projection}
\end{align}

Until now, we have not made any assumptions about the code qubit state that $\tilde{\Pi}[s]$ acts on. However, in section~\ref{sec:circuit-manipulation}, we showed that the effective error channel acting on the code qubits was Pauli. Therefore, we can make use of the following lemma,
\begin{lemma}
    A Pauli error channel $\mathcal{P}$ on a code state $\rho_\mathrm{c}$ results in a mixed state of definite stabilizer eigenvalues $\vec{\delta}\in\{0,1\}^{n-k}$.
\end{lemma}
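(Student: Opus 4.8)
The plan is to exploit the defining property of a code state together with the (anti)commutation structure of Pauli operators relative to the stabilizer group. I would start by writing the Pauli channel in Kraus form, $\mathcal{P}(\rho_{\mathrm{c}}) = \sum_a p_a P_a \rho_{\mathrm{c}} P_a^\dagger$, where each $P_a$ is an $n$-qubit Pauli operator and $\sum_a p_a = 1$. The key input is that a code state satisfies $S_j \rho_{\mathrm{c}} = \rho_{\mathrm{c}}$ for every stabilizer generator $S_j$, $j=1,\dots,n-k$, because $\rho_{\mathrm{c}}$ is supported entirely on the simultaneous $+1$ eigenspace of the stabilizer group (i.e., it has syndrome $\vec{0}$).

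First I would analyze a single Kraus term $P_a \rho_{\mathrm{c}} P_a^\dagger$. For each generator $S_j$ the Pauli $P_a$ either commutes or anticommutes, so I define a syndrome bit $\delta_j^{(a)}\in\{0,1\}$ via $S_j P_a = (-1)^{\delta_j^{(a)}} P_a S_j$. Conjugating the eigenvalue equation then gives $S_j\,(P_a \rho_{\mathrm{c}} P_a^\dagger) = (-1)^{\delta_j^{(a)}} P_a S_j \rho_{\mathrm{c}} P_a^\dagger = (-1)^{\delta_j^{(a)}} P_a \rho_{\mathrm{c}} P_a^\dagger$, so the conjugated state is a $(-1)^{\delta_j^{(a)}}$-eigenstate of every $S_j$. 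Hence each individual Kraus term carries the definite syndrome $\vec{\delta}^{(a)} = (\delta_1^{(a)},\dots,\delta_{n-k}^{(a)})$, determined solely by the symplectic inner products of $P_a$ with the generators.

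Next I would group the Kraus terms by their syndrome, writing $\mathcal{P}(\rho_{\mathrm{c}}) = \sum_{\vec{\delta}} \sigma_{\vec{\delta}}$ with $\sigma_{\vec{\delta}} = \sum_{a:\,\vec{\delta}^{(a)}=\vec{\delta}} p_a P_a \rho_{\mathrm{c}} P_a^\dagger$. Since every constituent of $\sigma_{\vec{\delta}}$ shares the same eigenvalue $(-1)^{\delta_j}$ under each $S_j$, the block $\sigma_{\vec{\delta}}$ is a positive (subnormalized) operator living entirely in the syndrome-$\vec{\delta}$ sector, i.e.\ a state of definite stabilizer eigenvalues. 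Normalizing each nonzero block by $\mathrm{Tr}\,\sigma_{\vec{\delta}}$ then exhibits $\mathcal{P}(\rho_{\mathrm{c}})$ as a convex mixture of definite-syndrome states, which is the claim.

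The only genuine subtlety, which I would call the main obstacle, is verifying that the \emph{grouped} sum $\sigma_{\vec{\delta}}$, and not merely each individual conjugated term, has a sharp syndrome; this holds because an eigenvalue equation with fixed eigenvalue is preserved under nonnegative linear combinations within a single eigenspace, but I would state it explicitly to dispel the (false) worry that mixing could smear the syndrome. I would also emphasize that this is precisely the standard syndrome-sorting of the Pauli channel output, since that block structure is exactly what the subsequent argument needs: acting the imperfect projector $\tilde{\Pi}[s]$ of Eq.~\eqref{eqn:imperfect-projection} on a state that is already block-diagonal across syndrome sectors is what allows the coherent ancilla error to be recast as an incoherent measurement error.
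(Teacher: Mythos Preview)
Your argument is correct and is exactly the standard syndrome-sorting proof one would expect. The paper itself does not supply a proof of this lemma; it is stated as a background fact and then immediately used to justify that the code-qubit state is block-diagonal across syndrome sectors before the imperfect projector $\tilde{\Pi}[s]$ acts. Your write-up fills in precisely the details the paper omits, and your closing remark about why the block-diagonal structure is what the subsequent ancilla-error argument needs is accurate.
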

\noindent Given the effective noise channel on the code qubits is Pauli from Section~\ref{sec:coherent-errors-on-code-qubits}, this means that the state of the code qubits is always a definite stabilizer eigenstate, rather than a superposition of stabilizer eigenstates.  
The next key observation is that when imperfect projectors from Eq.~\ref{eqn:imperfect-projection} act on a definite stabilizer eigenstate $\ket{\vec{\delta}}$, we effectively get measurement errors with the same statistics as the twirled version of $\mathcal{N}_z$. 
To see this, we perform a POVM on a specific stabilizer eigenstate $\rho_0=\ket{\vec{\delta}}\bra{\vec{\delta}}$ and argue that by linearity, the following steps will hold for code qubits in mixed states $\sum_{i}c_i\ket{\vec{\delta_i}}\bra{\vec{\delta_i}}$. Let the $l$th index of $\vec{\delta}$ denote the eigenstate of stabilizer $S$. Then, after noisy stabilizer measurement,

\begin{align}
    &\tilde{\Pi}[\delta_l]\ \rho_0\ \tilde{\Pi}^\dagger[\delta_l] + \tilde{\Pi}[1-\delta_l]\ \rho_0\ \tilde{\Pi}^\dagger[1-\delta_l] \notag \\ 
    &=\bigg[|\alpha|^2\Pi[\delta_l]\rho_0\Pi^\dagger[\delta_l]\bigg]_{\!\delta_l}\!\!\!\!+\bigg[|\beta|^2\Pi[\delta_l] \rho_0 \Pi^\dagger[\delta_l]\bigg]_{\!1-\delta_l}\!\!\!\!\!\!\propto \rho_0
\end{align}
In the second line of equations, we seen that all terms except for those $\propto\Pi[\delta_l]\rho_0\Pi^\dagger[\delta_l]$ are canceled out. The first remaining term is from $\tilde{\Pi}[\delta_l]$ projector (see the bracket subscript $\delta_l$) where the measurement outcome is $s=\delta_l$. The second remaining term is from the $\tilde{\Pi}[1-\delta_l]$ projector. 
From the coefficients of these terms in the density operator, we see that we obtain the correct outcome, $s=\delta_l$, with probability $|\alpha|^2$ and incorrect outcome, $s=1-\delta_l$, with probability $|\beta|^2$. Finally, we observe that the final state is still $\propto \rho_0$, introducing no new coherent errors onto the state of the code qubits. Therefore, given a coherent error $N_{z}$ on the ancilla qubit before measurement, the effective action on the code qubits initially in $\ket{\vec{\delta}}$ can be written as

\begin{align}
    \tilde{\Pi}[s]\ket{\vec{\delta}}
    \propto\Pi[f(s)]\ket{\vec{\delta}}
    \label{eq:reduce-imp-proj}
\end{align}

where $f:\{0,1\}\rightarrow\{0,1\}$ is a binary symmetric channel defined as:
\begin{align}
    \mathrm{Pr}[f(s\neq \delta_l)]=|\beta|^2
    \label{eq:classical-measurement-err-prob}
\end{align}

All previous arguments are directly extendable to measurements of multiple stabilizers $S_{(l,t)}$. Let each ancilla qubit be indexed by $l$ with pure Z-coherent errors $\mathcal{N}_{z,(l,t)}$ (Eq.~\ref{eq:spacetime-coherent-addition}) at space time locations right before measurement. Then, these errors reduce to classical measurement errors with probability Eq.~\eqref{eq:classical-measurement-err-prob}. Equivalently, this means that coherent errors $\mathcal{N}_{z,(l,t,w)}$ on ancilla qubits can be replaced by the following Pauli channel
\begin{align}
    \mathcal{P}_z\!\left[p = \frac{1}{2}\left(1-\sqrt[W]{1-2|\beta_{(l,t)}|^2}]\right)\right]
    \label{eq:ancilla-qubit-pauli-error}
\end{align}
to achieve the same effective measurement error and projection operator on the code qubits.

Sections~\ref{sec:coherent-errors-on-code-qubits} and~\ref{sec:coherent-errors-ancilla} show that both code qubit and ancilla coherent errors during foliation equivalently reduce to incoherent errors. In other words, all pure $Z$-coherent errors in the construction and measurement of the cluster state can be captured with purely Pauli errors. In the next section, we remark on how these results imply corollary~\ref{cor:equivalent-ler} of theorem~\ref{thm:2}.

\subsubsection{Pauli channels yield equivalent logical channels as coherent errors}
We now justify corollary~\ref{cor:equivalent-ler} of theorem~\ref{thm:2}. Using results from sections~\ref{sec:circuit-manipulation}~and~\ref{sec:coherent-errors-ancilla}, this corollary states that replacing error channels $\mathcal{N}_{z,(\gamma,t,w)}$ with Eq.~\eqref{eq:code-qubit-pauli-channel} on code qubits and Eq.~\eqref{eq:ancilla-qubit-pauli-error} on ancilla qubits gives the same logical channel as if there were physical coherent errors at every space-time location of the circuit.

For corollary~\ref{cor:equivalent-ler} to hold, we require that the logical channel after foliation and error correction assuming physical coherent errors must be equal to the same foliation and error correction circuit  with physical Pauli errors. That is,
\begin{align}
    \mathcal{E}_\mathrm{L}[\mathcal{N}_{z,(\gamma,t,w)}]=\mathcal{E}_\mathrm{L}[\mathcal{F}[\mathcal{N}_{z,(\gamma,t,w)}]]
    \label{eq:logical-channel-equivalence}
\end{align}

The decoder and recovery operator of $\mathcal{E}_\mathrm{L}[\mathcal{N}_{z,(\gamma,t,w)}]$ only  takes in stabilizer measurement outcomes and corrects based on those measurement outcomes. If the coherent and Pauli channels are equivalent given these set of syndrome outcomes, then the logical channels are equivalent. Since we have shown the equivalence of a physical coherent channel with a Pauli channel given syndrome outcomes $\vec{s}_t'\forall t$, then we are guaranteed to have the same logical channel. As a consequence, we can guarantee that the logical channel produced from a classically-efficient Clifford simulator on noise $\mathcal{P}[\mathcal{N}_\mathcal{L}]$ can be used to benchmark the exact performance of circuit-level coherent errors in MBEC.

In conclusion, we have analyzed circuit-level pure $Z$-coherent errors and have shown that all physical errors $N_{z,(\gamma,t,g)}$ are equivalent to Pauli error channels on code qubits (Eq.~\eqref{eq:code-qubit-pauli-channel}) and ancilla qubits (Eq.~\eqref{eq:ancilla-qubit-pauli-error}). These reductions to incoherent errors are exact, showing that coherent errors intrinsically become incoherent on foliated CSS codes due to the random Pauli frames induced by teleportation.

\section{Conclusion and discussion}
\label{sec:discussion}
We have shown that teleportation in MBEC effectively decoheres coherent errors into incoherent Pauli errors. In particular, we showed that the average infidelity of single-qubit teleportation grows at worst linearly with the number of teleportations, and that circuit-level single-qubit pure $Z$-coherent errors map exactly into a physical Pauli error model in MBEC. This correspondence between coherent and incoherent error models allows us to perform efficient simulation as well as repurpose existing analytical results on incoherent Pauli errors. 

One important consequence of our results is that we can obtain an analytical lower bound on the error threshold for circuit-level $e^{i\theta Z}$ errors in the teleported surface code, i.e., the RHG cluster state~\cite{raussendorf_one-way_2001, raussendorf_long-range_2005, raussendorf_fault-tolerant_2007, raussendorf_fault-tolerant_2006}. To see this, we first coherently combine the $e^{i\theta Z}$ errors at all 5 spacetime locations on each physical qubit following section~\ref{sec:circuit-manipulation}. Afterwards, the error model on the teleported surface code becomes an $e^{i5\theta Z}$ error after the teleportation of each code qubit after each teleportation as well as before the measurement of each ancilla qubit. Our result then shows that teleportation decoheres these combined errors into physical Pauli $Z$ errors with probability $p = \sin^2(5\theta)$. The resulting Pauli $Z$ error model on the RHG cluster state is equivalent to the circuit-based implementation of the surface code with independent data qubit idle $X$ and $Z$ error rate $p_X=p_Z=p$ and measurement error rate $p_M=p$. Now we can apply known threshold theorems (for example \cite{DennisTopoQuantumMemory, fowler_proof_2012-1}) for the surface code under Pauli errors. Adapting the proof in~\cite{fowler_proof_2012-1}, we can show that the threshold on $p$ is lower bounded by $p_{\mathrm{th}} \geq 1/[2 (B-1)]^2$, where $B$ is the maximum number of nearest neighbors in the decoding graph. In our case, we set $B=6$ for the cubic lattice decoding graph and obtain $p_{\mathrm{th}} \geq 1/100$. Therefore, the threshold angle $\theta_{\mathrm{th}}$ for circuit-level $e^{i\theta Z}$ errors in the teleported surface code satisfies the lower bound $\theta_{\mathrm{th}} = \arcsin(\sqrt{p_{\mathrm{th}}})/5 \geq \arcsin(1/10)/5 \approx 0.02$. A more accurate estimate of $\theta_{\mathrm{th}} \approx0.035$ can be computed from the numerically simulated value $p_{\mathrm{th}} \approx 0.030$ (see Supplementary Figure 5 in~\cite{claes_tailored_2023} or Figure 17 in~\cite{yoder_surface_2017}).

Additionally, our result implies that the logical error channel after error correction is a Pauli error channel. Therefore, we find that logical errors cannot constructively interfere over many rounds of error correction~\cite{beale2018quantum,huang2019performance} in MBEC.

Our result has attractive implications for superconducting and neutral atom platforms. Currently, coherent errors on this platform are mitigated via dynamical decoupling and echo pulses~\cite{acharya_quantum_2025, bluvstein_logical_2024}. Performing MBEC may obviate the need for these error-canceling pulses as long as the coherent errors are not too large. Additionally, because selective single-qubit gates \cite{graham_multi-qubit_2022} can sometimes be noisier and harder to do than global single-qubit gates, it may be advantageous to exploit the intrinsic capability of MBEC to decohere residual coherent errors instead of performing randomized compiling.

While we have proven the decoherence of coherent errors for single-qubit pure $Z$-coherent errors, we have not generalized our proof to other errors, such as arbitrary-axis single-qubit rotations or $e^{i\theta ZZ}$ errors that may accompany CZ gates. In such cases, it may be necessary to perform numerical simulations where analytics fail. We leave the generalization of our result to other errors and the simulation of MBEC under coherent errors to future work. Finally, it may be possible to generalize our result to other protocols that make use of teleportation or frequent random measurements, such as Floquet codes~\cite{hastings_dynamically_2021}, teleportation-based logical gates~\cite{berthusen_toward_2025}, and schemes that use teleportation to create long-range connections using only geometrically-local connectivity~\cite{choe_how_2024}.

\section{Acknowledgments}
We thank Ken Brown, Tyler LeBlond, and Peter Groszkowski for useful discussions. This work was supported by  DARPA MeasQuIT (HR00112490363) and U.S. Army Research Office (W911NF-23-1-0051).
\newpage

\bibliography{apssamp}

\providecommand{\noopsort}[1]{}\providecommand{\singleletter}[1]{#1}%
\begin{thebibliography}{53}%
\makeatletter
\providecommand \@ifxundefined [1]{%
 \@ifx{#1\undefined}
}%
\providecommand \@ifnum [1]{%
 \ifnum #1\expandafter \@firstoftwo
 \else \expandafter \@secondoftwo
 \fi
}%
\providecommand \@ifx [1]{%
 \ifx #1\expandafter \@firstoftwo
 \else \expandafter \@secondoftwo
 \fi
}%
\providecommand \natexlab [1]{#1}%
\providecommand \enquote  [1]{``#1''}%
\providecommand \bibnamefont  [1]{#1}%
\providecommand \bibfnamefont [1]{#1}%
\providecommand \citenamefont [1]{#1}%
\providecommand \href@noop [0]{\@secondoftwo}%
\providecommand \href [0]{\begingroup \@sanitize@url \@href}%
\providecommand \@href[1]{\@@startlink{#1}\@@href}%
\providecommand \@@href[1]{\endgroup#1\@@endlink}%
\providecommand \@sanitize@url [0]{\catcode `\\12\catcode `\$12\catcode `\&12\catcode `\#12\catcode `\^12\catcode `\_12\catcode `\%12\relax}%
\providecommand \@@startlink[1]{}%
\providecommand \@@endlink[0]{}%
\providecommand \url  [0]{\begingroup\@sanitize@url \@url }%
\providecommand \@url [1]{\endgroup\@href {#1}{\urlprefix }}%
\providecommand \urlprefix  [0]{URL }%
\providecommand \Eprint [0]{\href }%
\providecommand \doibase [0]{https://doi.org/}%
\providecommand \selectlanguage [0]{\@gobble}%
\providecommand \bibinfo  [0]{\@secondoftwo}%
\providecommand \bibfield  [0]{\@secondoftwo}%
\providecommand \translation [1]{[#1]}%
\providecommand \BibitemOpen [0]{}%
\providecommand \bibitemStop [0]{}%
\providecommand \bibitemNoStop [0]{.\EOS\space}%
\providecommand \EOS [0]{\spacefactor3000\relax}%
\providecommand \BibitemShut  [1]{\csname bibitem#1\endcsname}%
\let\auto@bib@innerbib\@empty
\bibitem [{\citenamefont {Bravyi}\ and\ \citenamefont {Kitaev}(1998)}]{bravyi1998quantum}%
  \BibitemOpen
  \bibfield  {author} {\bibinfo {author} {\bibfnamefont {S.~B.}\ \bibnamefont {Bravyi}}\ and\ \bibinfo {author} {\bibfnamefont {A.~Y.}\ \bibnamefont {Kitaev}},\ }\href {https://doi.org/https://doi.org/10.48550/arXiv.quant-ph/9811052} {\bibinfo {title} {Quantum codes on a lattice with boundary}} (\bibinfo {year} {1998}),\ \Eprint {https://arxiv.org/abs/quant-ph/9811052} {arXiv:quant-ph/9811052 [quant-ph]} \BibitemShut {NoStop}%
\bibitem [{\citenamefont {Dennis}\ \emph {et~al.}(2002)\citenamefont {Dennis}, \citenamefont {Kitaev}, \citenamefont {Landahl},\ and\ \citenamefont {Preskill}}]{DennisTopoQuantumMemory}%
  \BibitemOpen
  \bibfield  {author} {\bibinfo {author} {\bibfnamefont {E.}~\bibnamefont {Dennis}}, \bibinfo {author} {\bibfnamefont {A.}~\bibnamefont {Kitaev}}, \bibinfo {author} {\bibfnamefont {A.}~\bibnamefont {Landahl}},\ and\ \bibinfo {author} {\bibfnamefont {J.}~\bibnamefont {Preskill}},\ }\bibfield  {title} {\bibinfo {title} {Topological quantum memory},\ }\href {https://doi.org/https://doi.org/10.1063/1.1499754} {\bibfield  {journal} {\bibinfo  {journal} {Journal of Mathematical Physics}\ }\textbf {\bibinfo {volume} {43}},\ \bibinfo {pages} {4452–4505} (\bibinfo {year} {2002})}\BibitemShut {NoStop}%
\bibitem [{\citenamefont {Kitaev}(2003)}]{kitaev_fault-tolerant_2003}%
  \BibitemOpen
  \bibfield  {author} {\bibinfo {author} {\bibfnamefont {A.~{\relax Yu}.}\ \bibnamefont {Kitaev}},\ }\bibfield  {title} {\bibinfo {title} {Fault-tolerant quantum computation by anyons},\ }\href {https://doi.org/10.1016/S0003-4916(02)00018-0} {\bibfield  {journal} {\bibinfo  {journal} {Annals of Physics}\ }\textbf {\bibinfo {volume} {303}},\ \bibinfo {pages} {2} (\bibinfo {year} {2003})}\BibitemShut {NoStop}%
\bibitem [{\citenamefont {Stace}\ \emph {et~al.}(2009)\citenamefont {Stace}, \citenamefont {Barrett},\ and\ \citenamefont {Doherty}}]{stace_thresholds_2009}%
  \BibitemOpen
  \bibfield  {author} {\bibinfo {author} {\bibfnamefont {T.~M.}\ \bibnamefont {Stace}}, \bibinfo {author} {\bibfnamefont {S.~D.}\ \bibnamefont {Barrett}},\ and\ \bibinfo {author} {\bibfnamefont {A.~C.}\ \bibnamefont {Doherty}},\ }\bibfield  {title} {\bibinfo {title} {Thresholds for {{Topological Codes}} in the {{Presence}} of {{Loss}}},\ }\href {https://doi.org/10.1103/PhysRevLett.102.200501} {\bibfield  {journal} {\bibinfo  {journal} {Physical Review Letters}\ }\textbf {\bibinfo {volume} {102}},\ \bibinfo {pages} {200501} (\bibinfo {year} {2009})}\BibitemShut {NoStop}%
\bibitem [{\citenamefont {Fowler}(2012)}]{fowler_proof_2012-1}%
  \BibitemOpen
  \bibfield  {author} {\bibinfo {author} {\bibfnamefont {A.~G.}\ \bibnamefont {Fowler}},\ }\bibfield  {title} {\bibinfo {title} {Proof of {{Finite Surface Code Threshold}} for {{Matching}}},\ }\href {https://doi.org/10.1103/PhysRevLett.109.180502} {\bibfield  {journal} {\bibinfo  {journal} {Physical Review Letters}\ }\textbf {\bibinfo {volume} {109}},\ \bibinfo {pages} {180502} (\bibinfo {year} {2012})}\BibitemShut {NoStop}%
\bibitem [{\citenamefont {Gottesman}(1997)}]{gottesman1997stabilizer}%
  \BibitemOpen
  \bibfield  {author} {\bibinfo {author} {\bibfnamefont {D.}~\bibnamefont {Gottesman}},\ }\href {https://doi.org/https://doi.org/10.48550/arXiv.quant-ph/9705052} {} (\bibinfo {year} {1997}),\ \Eprint {https://arxiv.org/abs/quant-ph/9705052} {arXiv:quant-ph/9705052 [quant-ph]} \BibitemShut {NoStop}%
\bibitem [{\citenamefont {Gottesman}(1998)}]{gottesman_heisenberg_1998-1}%
  \BibitemOpen
  \bibfield  {author} {\bibinfo {author} {\bibfnamefont {D.}~\bibnamefont {Gottesman}},\ }\href {https://doi.org/10.48550/arXiv.quant-ph/9807006} {\bibinfo {title} {The {{Heisenberg Representation}} of {{Quantum Computers}}}} (\bibinfo {year} {1998}),\ \Eprint {https://arxiv.org/abs/quant-ph/9807006} {arXiv:quant-ph/9807006} \BibitemShut {NoStop}%
\bibitem [{Note1()}]{Note1}%
  \BibitemOpen
  \bibinfo {note} {Such drifts necessitate frequent recalibrations of control parameters~\cite {arute_quantum_2019} or the use of pulse sequences that are insensitive to amplitude errors~\cite {bluvstein_quantum_2022}.}\BibitemShut {Stop}%
\bibitem [{\citenamefont {Burnett}\ \emph {et~al.}(2019)\citenamefont {Burnett}, \citenamefont {Bengtsson}, \citenamefont {Scigliuzzo}, \citenamefont {Niepce}, \citenamefont {Kudra}, \citenamefont {Delsing},\ and\ \citenamefont {Bylander}}]{burnett_decoherence_2019}%
  \BibitemOpen
  \bibfield  {author} {\bibinfo {author} {\bibfnamefont {J.~J.}\ \bibnamefont {Burnett}}, \bibinfo {author} {\bibfnamefont {A.}~\bibnamefont {Bengtsson}}, \bibinfo {author} {\bibfnamefont {M.}~\bibnamefont {Scigliuzzo}}, \bibinfo {author} {\bibfnamefont {D.}~\bibnamefont {Niepce}}, \bibinfo {author} {\bibfnamefont {M.}~\bibnamefont {Kudra}}, \bibinfo {author} {\bibfnamefont {P.}~\bibnamefont {Delsing}},\ and\ \bibinfo {author} {\bibfnamefont {J.}~\bibnamefont {Bylander}},\ }\bibfield  {title} {\bibinfo {title} {Decoherence benchmarking of superconducting qubits},\ }\href {https://doi.org/10.1038/s41534-019-0168-5} {\bibfield  {journal} {\bibinfo  {journal} {npj Quantum Information}\ }\textbf {\bibinfo {volume} {5}},\ \bibinfo {pages} {54} (\bibinfo {year} {2019})}\BibitemShut {NoStop}%
\bibitem [{\citenamefont {Bluvstein}\ \emph {et~al.}(2022)\citenamefont {Bluvstein}, \citenamefont {Levine}, \citenamefont {Semeghini}, \citenamefont {Wang}, \citenamefont {Ebadi}, \citenamefont {Kalinowski}, \citenamefont {Keesling}, \citenamefont {Maskara}, \citenamefont {Pichler}, \citenamefont {Greiner}, \citenamefont {Vuleti{\'c}},\ and\ \citenamefont {Lukin}}]{bluvstein_quantum_2022}%
  \BibitemOpen
  \bibfield  {author} {\bibinfo {author} {\bibfnamefont {D.}~\bibnamefont {Bluvstein}}, \bibinfo {author} {\bibfnamefont {H.}~\bibnamefont {Levine}}, \bibinfo {author} {\bibfnamefont {G.}~\bibnamefont {Semeghini}}, \bibinfo {author} {\bibfnamefont {T.~T.}\ \bibnamefont {Wang}}, \bibinfo {author} {\bibfnamefont {S.}~\bibnamefont {Ebadi}}, \bibinfo {author} {\bibfnamefont {M.}~\bibnamefont {Kalinowski}}, \bibinfo {author} {\bibfnamefont {A.}~\bibnamefont {Keesling}}, \bibinfo {author} {\bibfnamefont {N.}~\bibnamefont {Maskara}}, \bibinfo {author} {\bibfnamefont {H.}~\bibnamefont {Pichler}}, \bibinfo {author} {\bibfnamefont {M.}~\bibnamefont {Greiner}}, \bibinfo {author} {\bibfnamefont {V.}~\bibnamefont {Vuleti{\'c}}},\ and\ \bibinfo {author} {\bibfnamefont {M.~D.}\ \bibnamefont {Lukin}},\ }\bibfield  {title} {\bibinfo {title} {A quantum processor based on coherent transport of entangled atom arrays},\ }\href {https://doi.org/10.1038/s41586-022-04592-6} {\bibfield  {journal} {\bibinfo  {journal} {Nature}\
  }\textbf {\bibinfo {volume} {604}},\ \bibinfo {pages} {451} (\bibinfo {year} {2022})}\BibitemShut {NoStop}%
\bibitem [{\citenamefont {Paladino}\ \emph {et~al.}(2014)\citenamefont {Paladino}, \citenamefont {Galperin}, \citenamefont {Falci},\ and\ \citenamefont {Altshuler}}]{paladino_mathbsf1mathbsfitf_2014}%
  \BibitemOpen
  \bibfield  {author} {\bibinfo {author} {\bibfnamefont {E.}~\bibnamefont {Paladino}}, \bibinfo {author} {\bibfnamefont {Y.~M.}\ \bibnamefont {Galperin}}, \bibinfo {author} {\bibfnamefont {G.}~\bibnamefont {Falci}},\ and\ \bibinfo {author} {\bibfnamefont {B.~L.}\ \bibnamefont {Altshuler}},\ }\bibfield  {title} {\bibinfo {title} {\${\textbackslash}mathbsf\{1\}/{\textbackslash}mathbsfit\{f\}\$ noise: {{Implications}} for solid-state quantum information},\ }\href {https://doi.org/10.1103/RevModPhys.86.361} {\bibfield  {journal} {\bibinfo  {journal} {Reviews of Modern Physics}\ }\textbf {\bibinfo {volume} {86}},\ \bibinfo {pages} {361} (\bibinfo {year} {2014})}\BibitemShut {NoStop}%
\bibitem [{\citenamefont {Huang}\ \emph {et~al.}(2021)\citenamefont {Huang}, \citenamefont {Lienhard}, \citenamefont {Calusine}, \citenamefont {Veps{\"a}l{\"a}inen}, \citenamefont {Braum{\"u}ller}, \citenamefont {Kim}, \citenamefont {Melville}, \citenamefont {Niedzielski}, \citenamefont {Yoder}, \citenamefont {Kannan}, \citenamefont {Orlando}, \citenamefont {Gustavsson},\ and\ \citenamefont {Oliver}}]{huang_microwave_2021}%
  \BibitemOpen
  \bibfield  {author} {\bibinfo {author} {\bibfnamefont {S.}~\bibnamefont {Huang}}, \bibinfo {author} {\bibfnamefont {B.}~\bibnamefont {Lienhard}}, \bibinfo {author} {\bibfnamefont {G.}~\bibnamefont {Calusine}}, \bibinfo {author} {\bibfnamefont {A.}~\bibnamefont {Veps{\"a}l{\"a}inen}}, \bibinfo {author} {\bibfnamefont {J.}~\bibnamefont {Braum{\"u}ller}}, \bibinfo {author} {\bibfnamefont {D.~K.}\ \bibnamefont {Kim}}, \bibinfo {author} {\bibfnamefont {A.~J.}\ \bibnamefont {Melville}}, \bibinfo {author} {\bibfnamefont {B.~M.}\ \bibnamefont {Niedzielski}}, \bibinfo {author} {\bibfnamefont {J.~L.}\ \bibnamefont {Yoder}}, \bibinfo {author} {\bibfnamefont {B.}~\bibnamefont {Kannan}}, \bibinfo {author} {\bibfnamefont {T.~P.}\ \bibnamefont {Orlando}}, \bibinfo {author} {\bibfnamefont {S.}~\bibnamefont {Gustavsson}},\ and\ \bibinfo {author} {\bibfnamefont {W.~D.}\ \bibnamefont {Oliver}},\ }\bibfield  {title} {\bibinfo {title} {Microwave {{Package Design}} for {{Superconducting Quantum Processors}}},\ }\href
  {https://doi.org/10.1103/PRXQuantum.2.020306} {\bibfield  {journal} {\bibinfo  {journal} {PRX Quantum}\ }\textbf {\bibinfo {volume} {2}},\ \bibinfo {pages} {020306} (\bibinfo {year} {2021})}\BibitemShut {NoStop}%
\bibitem [{\citenamefont {Sarovar}\ \emph {et~al.}(2020)\citenamefont {Sarovar}, \citenamefont {Proctor}, \citenamefont {Rudinger}, \citenamefont {Young}, \citenamefont {Nielsen},\ and\ \citenamefont {{Blume-Kohout}}}]{sarovar_detecting_2020}%
  \BibitemOpen
  \bibfield  {author} {\bibinfo {author} {\bibfnamefont {M.}~\bibnamefont {Sarovar}}, \bibinfo {author} {\bibfnamefont {T.}~\bibnamefont {Proctor}}, \bibinfo {author} {\bibfnamefont {K.}~\bibnamefont {Rudinger}}, \bibinfo {author} {\bibfnamefont {K.}~\bibnamefont {Young}}, \bibinfo {author} {\bibfnamefont {E.}~\bibnamefont {Nielsen}},\ and\ \bibinfo {author} {\bibfnamefont {R.}~\bibnamefont {{Blume-Kohout}}},\ }\bibfield  {title} {\bibinfo {title} {Detecting crosstalk errors in quantum information processors},\ }\href {https://doi.org/10.22331/q-2020-09-11-321} {\bibfield  {journal} {\bibinfo  {journal} {Quantum}\ }\textbf {\bibinfo {volume} {4}},\ \bibinfo {pages} {321} (\bibinfo {year} {2020})}\BibitemShut {NoStop}%
\bibitem [{\citenamefont {Zhao}\ \emph {et~al.}(2022)\citenamefont {Zhao}, \citenamefont {Linghu}, \citenamefont {Li}, \citenamefont {Xu}, \citenamefont {Wang}, \citenamefont {Xue}, \citenamefont {Jin},\ and\ \citenamefont {Yu}}]{zhao_quantum_2022}%
  \BibitemOpen
  \bibfield  {author} {\bibinfo {author} {\bibfnamefont {P.}~\bibnamefont {Zhao}}, \bibinfo {author} {\bibfnamefont {K.}~\bibnamefont {Linghu}}, \bibinfo {author} {\bibfnamefont {Z.}~\bibnamefont {Li}}, \bibinfo {author} {\bibfnamefont {P.}~\bibnamefont {Xu}}, \bibinfo {author} {\bibfnamefont {R.}~\bibnamefont {Wang}}, \bibinfo {author} {\bibfnamefont {G.}~\bibnamefont {Xue}}, \bibinfo {author} {\bibfnamefont {Y.}~\bibnamefont {Jin}},\ and\ \bibinfo {author} {\bibfnamefont {H.}~\bibnamefont {Yu}},\ }\bibfield  {title} {\bibinfo {title} {Quantum {{Crosstalk Analysis}} for {{Simultaneous Gate Operations}} on {{Superconducting Qubits}}},\ }\href {https://doi.org/10.1103/PRXQuantum.3.020301} {\bibfield  {journal} {\bibinfo  {journal} {PRX Quantum}\ }\textbf {\bibinfo {volume} {3}},\ \bibinfo {pages} {020301} (\bibinfo {year} {2022})}\BibitemShut {NoStop}%
\bibitem [{\citenamefont {Wei}\ \emph {et~al.}(2022)\citenamefont {Wei}, \citenamefont {Magesan}, \citenamefont {Lauer}, \citenamefont {Srinivasan}, \citenamefont {Bogorin}, \citenamefont {Carnevale}, \citenamefont {Keefe}, \citenamefont {Kim}, \citenamefont {Klaus}, \citenamefont {Landers}, \citenamefont {Sundaresan}, \citenamefont {Wang}, \citenamefont {Zhang}, \citenamefont {Steffen}, \citenamefont {Dial}, \citenamefont {McKay},\ and\ \citenamefont {Kandala}}]{wei_hamiltonian_2022}%
  \BibitemOpen
  \bibfield  {author} {\bibinfo {author} {\bibfnamefont {K.~X.}\ \bibnamefont {Wei}}, \bibinfo {author} {\bibfnamefont {E.}~\bibnamefont {Magesan}}, \bibinfo {author} {\bibfnamefont {I.}~\bibnamefont {Lauer}}, \bibinfo {author} {\bibfnamefont {S.}~\bibnamefont {Srinivasan}}, \bibinfo {author} {\bibfnamefont {D.~F.}\ \bibnamefont {Bogorin}}, \bibinfo {author} {\bibfnamefont {S.}~\bibnamefont {Carnevale}}, \bibinfo {author} {\bibfnamefont {G.~A.}\ \bibnamefont {Keefe}}, \bibinfo {author} {\bibfnamefont {Y.}~\bibnamefont {Kim}}, \bibinfo {author} {\bibfnamefont {D.}~\bibnamefont {Klaus}}, \bibinfo {author} {\bibfnamefont {W.}~\bibnamefont {Landers}}, \bibinfo {author} {\bibfnamefont {N.}~\bibnamefont {Sundaresan}}, \bibinfo {author} {\bibfnamefont {C.}~\bibnamefont {Wang}}, \bibinfo {author} {\bibfnamefont {E.~J.}\ \bibnamefont {Zhang}}, \bibinfo {author} {\bibfnamefont {M.}~\bibnamefont {Steffen}}, \bibinfo {author} {\bibfnamefont {O.~E.}\ \bibnamefont {Dial}}, \bibinfo {author} {\bibfnamefont {D.~C.}\
  \bibnamefont {McKay}},\ and\ \bibinfo {author} {\bibfnamefont {A.}~\bibnamefont {Kandala}},\ }\bibfield  {title} {\bibinfo {title} {Hamiltonian {{Engineering}} with {{Multicolor Drives}} for {{Fast Entangling Gates}} and {{Quantum Crosstalk Cancellation}}},\ }\href {https://doi.org/10.1103/PhysRevLett.129.060501} {\bibfield  {journal} {\bibinfo  {journal} {Physical Review Letters}\ }\textbf {\bibinfo {volume} {129}},\ \bibinfo {pages} {060501} (\bibinfo {year} {2022})}\BibitemShut {NoStop}%
\bibitem [{\citenamefont {Vandersypen}\ and\ \citenamefont {Chuang}(2005)}]{vandersypen_nmr_2005}%
  \BibitemOpen
  \bibfield  {author} {\bibinfo {author} {\bibfnamefont {L.~M.~K.}\ \bibnamefont {Vandersypen}}\ and\ \bibinfo {author} {\bibfnamefont {I.~L.}\ \bibnamefont {Chuang}},\ }\bibfield  {title} {\bibinfo {title} {{{NMR}} techniques for quantum control and computation},\ }\href {https://doi.org/10.1103/RevModPhys.76.1037} {\bibfield  {journal} {\bibinfo  {journal} {Reviews of Modern Physics}\ }\textbf {\bibinfo {volume} {76}},\ \bibinfo {pages} {1037} (\bibinfo {year} {2005})}\BibitemShut {NoStop}%
\bibitem [{\citenamefont {Ezzell}\ \emph {et~al.}(2023)\citenamefont {Ezzell}, \citenamefont {Pokharel}, \citenamefont {Tewala}, \citenamefont {Quiroz},\ and\ \citenamefont {Lidar}}]{ezzell_dynamical_2023}%
  \BibitemOpen
  \bibfield  {author} {\bibinfo {author} {\bibfnamefont {N.}~\bibnamefont {Ezzell}}, \bibinfo {author} {\bibfnamefont {B.}~\bibnamefont {Pokharel}}, \bibinfo {author} {\bibfnamefont {L.}~\bibnamefont {Tewala}}, \bibinfo {author} {\bibfnamefont {G.}~\bibnamefont {Quiroz}},\ and\ \bibinfo {author} {\bibfnamefont {D.~A.}\ \bibnamefont {Lidar}},\ }\bibfield  {title} {\bibinfo {title} {Dynamical decoupling for superconducting qubits: {{A}} performance survey},\ }\href {https://doi.org/10.1103/PhysRevApplied.20.064027} {\bibfield  {journal} {\bibinfo  {journal} {Physical Review Applied}\ }\textbf {\bibinfo {volume} {20}},\ \bibinfo {pages} {064027} (\bibinfo {year} {2023})}\BibitemShut {NoStop}%
\bibitem [{\citenamefont {Bluvstein}\ \emph {et~al.}(2024)\citenamefont {Bluvstein}, \citenamefont {Evered}, \citenamefont {Geim}, \citenamefont {Li}, \citenamefont {Zhou}, \citenamefont {Manovitz}, \citenamefont {Ebadi}, \citenamefont {Cain}, \citenamefont {Kalinowski}, \citenamefont {Hangleiter}, \citenamefont {Bonilla~Ataides}, \citenamefont {Maskara}, \citenamefont {Cong}, \citenamefont {Gao}, \citenamefont {Sales~Rodriguez}, \citenamefont {Karolyshyn}, \citenamefont {Semeghini}, \citenamefont {Gullans}, \citenamefont {Greiner}, \citenamefont {Vuleti{\'c}},\ and\ \citenamefont {Lukin}}]{bluvstein_logical_2024}%
  \BibitemOpen
  \bibfield  {author} {\bibinfo {author} {\bibfnamefont {D.}~\bibnamefont {Bluvstein}}, \bibinfo {author} {\bibfnamefont {S.~J.}\ \bibnamefont {Evered}}, \bibinfo {author} {\bibfnamefont {A.~A.}\ \bibnamefont {Geim}}, \bibinfo {author} {\bibfnamefont {S.~H.}\ \bibnamefont {Li}}, \bibinfo {author} {\bibfnamefont {H.}~\bibnamefont {Zhou}}, \bibinfo {author} {\bibfnamefont {T.}~\bibnamefont {Manovitz}}, \bibinfo {author} {\bibfnamefont {S.}~\bibnamefont {Ebadi}}, \bibinfo {author} {\bibfnamefont {M.}~\bibnamefont {Cain}}, \bibinfo {author} {\bibfnamefont {M.}~\bibnamefont {Kalinowski}}, \bibinfo {author} {\bibfnamefont {D.}~\bibnamefont {Hangleiter}}, \bibinfo {author} {\bibfnamefont {J.~P.}\ \bibnamefont {Bonilla~Ataides}}, \bibinfo {author} {\bibfnamefont {N.}~\bibnamefont {Maskara}}, \bibinfo {author} {\bibfnamefont {I.}~\bibnamefont {Cong}}, \bibinfo {author} {\bibfnamefont {X.}~\bibnamefont {Gao}}, \bibinfo {author} {\bibfnamefont {P.}~\bibnamefont {Sales~Rodriguez}}, \bibinfo {author} {\bibfnamefont
  {T.}~\bibnamefont {Karolyshyn}}, \bibinfo {author} {\bibfnamefont {G.}~\bibnamefont {Semeghini}}, \bibinfo {author} {\bibfnamefont {M.~J.}\ \bibnamefont {Gullans}}, \bibinfo {author} {\bibfnamefont {M.}~\bibnamefont {Greiner}}, \bibinfo {author} {\bibfnamefont {V.}~\bibnamefont {Vuleti{\'c}}},\ and\ \bibinfo {author} {\bibfnamefont {M.~D.}\ \bibnamefont {Lukin}},\ }\bibfield  {title} {\bibinfo {title} {Logical quantum processor based on reconfigurable atom arrays},\ }\href {https://doi.org/10.1038/s41586-023-06927-3} {\bibfield  {journal} {\bibinfo  {journal} {Nature}\ }\textbf {\bibinfo {volume} {626}},\ \bibinfo {pages} {58} (\bibinfo {year} {2024})}\BibitemShut {NoStop}%
\bibitem [{\citenamefont {Acharya}\ \emph {et~al.}(2023)\citenamefont {Acharya}, \citenamefont {Aleiner}, \citenamefont {Allen}, \citenamefont {Andersen}, \citenamefont {Ansmann}, \citenamefont {Arute}, \citenamefont {Arya}, \citenamefont {Asfaw}, \citenamefont {Atalaya}, \citenamefont {Babbush}, \citenamefont {Bacon}, \citenamefont {Bardin}, \citenamefont {Basso}, \citenamefont {Bengtsson}, \citenamefont {Boixo}, \citenamefont {Bortoli}, \citenamefont {Bourassa}, \citenamefont {Bovaird}, \citenamefont {Brill}, \citenamefont {Broughton}, \citenamefont {Buckley}, \citenamefont {Buell}, \citenamefont {Burger}, \citenamefont {Burkett}, \citenamefont {Bushnell}, \citenamefont {Chen}, \citenamefont {Chen}, \citenamefont {Chiaro}, \citenamefont {Cogan}, \citenamefont {Collins}, \citenamefont {Conner}, \citenamefont {Courtney}, \citenamefont {Crook}, \citenamefont {Curtin}, \citenamefont {Debroy}, \citenamefont {Del Toro~Barba}, \citenamefont {Demura}, \citenamefont {Dunsworth}, \citenamefont {Eppens}, \citenamefont
  {Erickson}, \citenamefont {Faoro}, \citenamefont {Farhi}, \citenamefont {Fatemi}, \citenamefont {Flores~Burgos}, \citenamefont {Forati}, \citenamefont {Fowler}, \citenamefont {Foxen}, \citenamefont {Giang}, \citenamefont {Gidney}, \citenamefont {Gilboa}, \citenamefont {Giustina}, \citenamefont {Grajales~Dau}, \citenamefont {Gross}, \citenamefont {Habegger}, \citenamefont {Hamilton}, \citenamefont {Harrigan}, \citenamefont {Harrington}, \citenamefont {Higgott}, \citenamefont {Hilton}, \citenamefont {Hoffmann}, \citenamefont {Hong}, \citenamefont {Huang}, \citenamefont {Huff}, \citenamefont {Huggins}, \citenamefont {Ioffe}, \citenamefont {Isakov}, \citenamefont {Iveland}, \citenamefont {Jeffrey}, \citenamefont {Jiang}, \citenamefont {Jones}, \citenamefont {Juhas}, \citenamefont {Kafri}, \citenamefont {Kechedzhi}, \citenamefont {Kelly}, \citenamefont {Khattar}, \citenamefont {Khezri}, \citenamefont {Kieferov{\'a}}, \citenamefont {Kim}, \citenamefont {Kitaev}, \citenamefont {Klimov}, \citenamefont {Klots},
  \citenamefont {Korotkov}, \citenamefont {Kostritsa}, \citenamefont {Kreikebaum}, \citenamefont {Landhuis}, \citenamefont {Laptev}, \citenamefont {Lau}, \citenamefont {Laws}, \citenamefont {Lee}, \citenamefont {Lee}, \citenamefont {Lester}, \citenamefont {Lill}, \citenamefont {Liu}, \citenamefont {Locharla}, \citenamefont {Lucero}, \citenamefont {Malone}, \citenamefont {Marshall}, \citenamefont {Martin}, \citenamefont {McClean}, \citenamefont {McCourt}, \citenamefont {McEwen}, \citenamefont {Megrant}, \citenamefont {Meurer~Costa}, \citenamefont {Mi}, \citenamefont {Miao}, \citenamefont {Mohseni}, \citenamefont {Montazeri}, \citenamefont {Morvan}, \citenamefont {Mount}, \citenamefont {Mruczkiewicz}, \citenamefont {Naaman}, \citenamefont {Neeley}, \citenamefont {Neill}, \citenamefont {Nersisyan}, \citenamefont {Neven}, \citenamefont {Newman}, \citenamefont {Ng}, \citenamefont {Nguyen}, \citenamefont {Nguyen}, \citenamefont {Niu}, \citenamefont {O'Brien}, \citenamefont {Opremcak}, \citenamefont {Platt},
  \citenamefont {Petukhov}, \citenamefont {Potter}, \citenamefont {Pryadko}, \citenamefont {Quintana}, \citenamefont {Roushan}, \citenamefont {Rubin}, \citenamefont {Saei}, \citenamefont {Sank}, \citenamefont {Sankaragomathi}, \citenamefont {Satzinger}, \citenamefont {Schurkus}, \citenamefont {Schuster}, \citenamefont {Shearn}, \citenamefont {Shorter}, \citenamefont {Shvarts}, \citenamefont {Skruzny}, \citenamefont {Smelyanskiy}, \citenamefont {Smith}, \citenamefont {Sterling}, \citenamefont {Strain}, \citenamefont {Szalay}, \citenamefont {Torres}, \citenamefont {Vidal}, \citenamefont {Villalonga}, \citenamefont {Vollgraff~Heidweiller}, \citenamefont {White}, \citenamefont {Xing}, \citenamefont {Yao}, \citenamefont {Yeh}, \citenamefont {Yoo}, \citenamefont {Young}, \citenamefont {Zalcman}, \citenamefont {Zhang}, \citenamefont {Zhu},\ and\ \citenamefont {{Google Quantum AI}}}]{acharya_suppressing_2023}%
  \BibitemOpen
  \bibfield  {author} {\bibinfo {author} {\bibfnamefont {R.}~\bibnamefont {Acharya}}, \bibinfo {author} {\bibfnamefont {I.}~\bibnamefont {Aleiner}}, \bibinfo {author} {\bibfnamefont {R.}~\bibnamefont {Allen}}, \bibinfo {author} {\bibfnamefont {T.~I.}\ \bibnamefont {Andersen}}, \bibinfo {author} {\bibfnamefont {M.}~\bibnamefont {Ansmann}}, \bibinfo {author} {\bibfnamefont {F.}~\bibnamefont {Arute}}, \bibinfo {author} {\bibfnamefont {K.}~\bibnamefont {Arya}}, \bibinfo {author} {\bibfnamefont {A.}~\bibnamefont {Asfaw}}, \bibinfo {author} {\bibfnamefont {J.}~\bibnamefont {Atalaya}}, \bibinfo {author} {\bibfnamefont {R.}~\bibnamefont {Babbush}}, \bibinfo {author} {\bibfnamefont {D.}~\bibnamefont {Bacon}}, \bibinfo {author} {\bibfnamefont {J.~C.}\ \bibnamefont {Bardin}}, \bibinfo {author} {\bibfnamefont {J.}~\bibnamefont {Basso}}, \bibinfo {author} {\bibfnamefont {A.}~\bibnamefont {Bengtsson}}, \bibinfo {author} {\bibfnamefont {S.}~\bibnamefont {Boixo}}, \bibinfo {author} {\bibfnamefont {G.}~\bibnamefont
  {Bortoli}}, \bibinfo {author} {\bibfnamefont {A.}~\bibnamefont {Bourassa}}, \bibinfo {author} {\bibfnamefont {J.}~\bibnamefont {Bovaird}}, \bibinfo {author} {\bibfnamefont {L.}~\bibnamefont {Brill}}, \bibinfo {author} {\bibfnamefont {M.}~\bibnamefont {Broughton}}, \bibinfo {author} {\bibfnamefont {B.~B.}\ \bibnamefont {Buckley}}, \bibinfo {author} {\bibfnamefont {D.~A.}\ \bibnamefont {Buell}}, \bibinfo {author} {\bibfnamefont {T.}~\bibnamefont {Burger}}, \bibinfo {author} {\bibfnamefont {B.}~\bibnamefont {Burkett}}, \bibinfo {author} {\bibfnamefont {N.}~\bibnamefont {Bushnell}}, \bibinfo {author} {\bibfnamefont {Y.}~\bibnamefont {Chen}}, \bibinfo {author} {\bibfnamefont {Z.}~\bibnamefont {Chen}}, \bibinfo {author} {\bibfnamefont {B.}~\bibnamefont {Chiaro}}, \bibinfo {author} {\bibfnamefont {J.}~\bibnamefont {Cogan}}, \bibinfo {author} {\bibfnamefont {R.}~\bibnamefont {Collins}}, \bibinfo {author} {\bibfnamefont {P.}~\bibnamefont {Conner}}, \bibinfo {author} {\bibfnamefont {W.}~\bibnamefont {Courtney}},
  \bibinfo {author} {\bibfnamefont {A.~L.}\ \bibnamefont {Crook}}, \bibinfo {author} {\bibfnamefont {B.}~\bibnamefont {Curtin}}, \bibinfo {author} {\bibfnamefont {D.~M.}\ \bibnamefont {Debroy}}, \bibinfo {author} {\bibfnamefont {A.}~\bibnamefont {Del Toro~Barba}}, \bibinfo {author} {\bibfnamefont {S.}~\bibnamefont {Demura}}, \bibinfo {author} {\bibfnamefont {A.}~\bibnamefont {Dunsworth}}, \bibinfo {author} {\bibfnamefont {D.}~\bibnamefont {Eppens}}, \bibinfo {author} {\bibfnamefont {C.}~\bibnamefont {Erickson}}, \bibinfo {author} {\bibfnamefont {L.}~\bibnamefont {Faoro}}, \bibinfo {author} {\bibfnamefont {E.}~\bibnamefont {Farhi}}, \bibinfo {author} {\bibfnamefont {R.}~\bibnamefont {Fatemi}}, \bibinfo {author} {\bibfnamefont {L.}~\bibnamefont {Flores~Burgos}}, \bibinfo {author} {\bibfnamefont {E.}~\bibnamefont {Forati}}, \bibinfo {author} {\bibfnamefont {A.~G.}\ \bibnamefont {Fowler}}, \bibinfo {author} {\bibfnamefont {B.}~\bibnamefont {Foxen}}, \bibinfo {author} {\bibfnamefont {W.}~\bibnamefont {Giang}},
  \bibinfo {author} {\bibfnamefont {C.}~\bibnamefont {Gidney}}, \bibinfo {author} {\bibfnamefont {D.}~\bibnamefont {Gilboa}}, \bibinfo {author} {\bibfnamefont {M.}~\bibnamefont {Giustina}}, \bibinfo {author} {\bibfnamefont {A.}~\bibnamefont {Grajales~Dau}}, \bibinfo {author} {\bibfnamefont {J.~A.}\ \bibnamefont {Gross}}, \bibinfo {author} {\bibfnamefont {S.}~\bibnamefont {Habegger}}, \bibinfo {author} {\bibfnamefont {M.~C.}\ \bibnamefont {Hamilton}}, \bibinfo {author} {\bibfnamefont {M.~P.}\ \bibnamefont {Harrigan}}, \bibinfo {author} {\bibfnamefont {S.~D.}\ \bibnamefont {Harrington}}, \bibinfo {author} {\bibfnamefont {O.}~\bibnamefont {Higgott}}, \bibinfo {author} {\bibfnamefont {J.}~\bibnamefont {Hilton}}, \bibinfo {author} {\bibfnamefont {M.}~\bibnamefont {Hoffmann}}, \bibinfo {author} {\bibfnamefont {S.}~\bibnamefont {Hong}}, \bibinfo {author} {\bibfnamefont {T.}~\bibnamefont {Huang}}, \bibinfo {author} {\bibfnamefont {A.}~\bibnamefont {Huff}}, \bibinfo {author} {\bibfnamefont {W.~J.}\ \bibnamefont
  {Huggins}}, \bibinfo {author} {\bibfnamefont {L.~B.}\ \bibnamefont {Ioffe}}, \bibinfo {author} {\bibfnamefont {S.~V.}\ \bibnamefont {Isakov}}, \bibinfo {author} {\bibfnamefont {J.}~\bibnamefont {Iveland}}, \bibinfo {author} {\bibfnamefont {E.}~\bibnamefont {Jeffrey}}, \bibinfo {author} {\bibfnamefont {Z.}~\bibnamefont {Jiang}}, \bibinfo {author} {\bibfnamefont {C.}~\bibnamefont {Jones}}, \bibinfo {author} {\bibfnamefont {P.}~\bibnamefont {Juhas}}, \bibinfo {author} {\bibfnamefont {D.}~\bibnamefont {Kafri}}, \bibinfo {author} {\bibfnamefont {K.}~\bibnamefont {Kechedzhi}}, \bibinfo {author} {\bibfnamefont {J.}~\bibnamefont {Kelly}}, \bibinfo {author} {\bibfnamefont {T.}~\bibnamefont {Khattar}}, \bibinfo {author} {\bibfnamefont {M.}~\bibnamefont {Khezri}}, \bibinfo {author} {\bibfnamefont {M.}~\bibnamefont {Kieferov{\'a}}}, \bibinfo {author} {\bibfnamefont {S.}~\bibnamefont {Kim}}, \bibinfo {author} {\bibfnamefont {A.}~\bibnamefont {Kitaev}}, \bibinfo {author} {\bibfnamefont {P.~V.}\ \bibnamefont {Klimov}},
  \bibinfo {author} {\bibfnamefont {A.~R.}\ \bibnamefont {Klots}}, \bibinfo {author} {\bibfnamefont {A.~N.}\ \bibnamefont {Korotkov}}, \bibinfo {author} {\bibfnamefont {F.}~\bibnamefont {Kostritsa}}, \bibinfo {author} {\bibfnamefont {J.~M.}\ \bibnamefont {Kreikebaum}}, \bibinfo {author} {\bibfnamefont {D.}~\bibnamefont {Landhuis}}, \bibinfo {author} {\bibfnamefont {P.}~\bibnamefont {Laptev}}, \bibinfo {author} {\bibfnamefont {K.-M.}\ \bibnamefont {Lau}}, \bibinfo {author} {\bibfnamefont {L.}~\bibnamefont {Laws}}, \bibinfo {author} {\bibfnamefont {J.}~\bibnamefont {Lee}}, \bibinfo {author} {\bibfnamefont {K.}~\bibnamefont {Lee}}, \bibinfo {author} {\bibfnamefont {B.~J.}\ \bibnamefont {Lester}}, \bibinfo {author} {\bibfnamefont {A.}~\bibnamefont {Lill}}, \bibinfo {author} {\bibfnamefont {W.}~\bibnamefont {Liu}}, \bibinfo {author} {\bibfnamefont {A.}~\bibnamefont {Locharla}}, \bibinfo {author} {\bibfnamefont {E.}~\bibnamefont {Lucero}}, \bibinfo {author} {\bibfnamefont {F.~D.}\ \bibnamefont {Malone}}, \bibinfo
  {author} {\bibfnamefont {J.}~\bibnamefont {Marshall}}, \bibinfo {author} {\bibfnamefont {O.}~\bibnamefont {Martin}}, \bibinfo {author} {\bibfnamefont {J.~R.}\ \bibnamefont {McClean}}, \bibinfo {author} {\bibfnamefont {T.}~\bibnamefont {McCourt}}, \bibinfo {author} {\bibfnamefont {M.}~\bibnamefont {McEwen}}, \bibinfo {author} {\bibfnamefont {A.}~\bibnamefont {Megrant}}, \bibinfo {author} {\bibfnamefont {B.}~\bibnamefont {Meurer~Costa}}, \bibinfo {author} {\bibfnamefont {X.}~\bibnamefont {Mi}}, \bibinfo {author} {\bibfnamefont {K.~C.}\ \bibnamefont {Miao}}, \bibinfo {author} {\bibfnamefont {M.}~\bibnamefont {Mohseni}}, \bibinfo {author} {\bibfnamefont {S.}~\bibnamefont {Montazeri}}, \bibinfo {author} {\bibfnamefont {A.}~\bibnamefont {Morvan}}, \bibinfo {author} {\bibfnamefont {E.}~\bibnamefont {Mount}}, \bibinfo {author} {\bibfnamefont {W.}~\bibnamefont {Mruczkiewicz}}, \bibinfo {author} {\bibfnamefont {O.}~\bibnamefont {Naaman}}, \bibinfo {author} {\bibfnamefont {M.}~\bibnamefont {Neeley}}, \bibinfo {author}
  {\bibfnamefont {C.}~\bibnamefont {Neill}}, \bibinfo {author} {\bibfnamefont {A.}~\bibnamefont {Nersisyan}}, \bibinfo {author} {\bibfnamefont {H.}~\bibnamefont {Neven}}, \bibinfo {author} {\bibfnamefont {M.}~\bibnamefont {Newman}}, \bibinfo {author} {\bibfnamefont {J.~H.}\ \bibnamefont {Ng}}, \bibinfo {author} {\bibfnamefont {A.}~\bibnamefont {Nguyen}}, \bibinfo {author} {\bibfnamefont {M.}~\bibnamefont {Nguyen}}, \bibinfo {author} {\bibfnamefont {M.~Y.}\ \bibnamefont {Niu}}, \bibinfo {author} {\bibfnamefont {T.~E.}\ \bibnamefont {O'Brien}}, \bibinfo {author} {\bibfnamefont {A.}~\bibnamefont {Opremcak}}, \bibinfo {author} {\bibfnamefont {J.}~\bibnamefont {Platt}}, \bibinfo {author} {\bibfnamefont {A.}~\bibnamefont {Petukhov}}, \bibinfo {author} {\bibfnamefont {R.}~\bibnamefont {Potter}}, \bibinfo {author} {\bibfnamefont {L.~P.}\ \bibnamefont {Pryadko}}, \bibinfo {author} {\bibfnamefont {C.}~\bibnamefont {Quintana}}, \bibinfo {author} {\bibfnamefont {P.}~\bibnamefont {Roushan}}, \bibinfo {author}
  {\bibfnamefont {N.~C.}\ \bibnamefont {Rubin}}, \bibinfo {author} {\bibfnamefont {N.}~\bibnamefont {Saei}}, \bibinfo {author} {\bibfnamefont {D.}~\bibnamefont {Sank}}, \bibinfo {author} {\bibfnamefont {K.}~\bibnamefont {Sankaragomathi}}, \bibinfo {author} {\bibfnamefont {K.~J.}\ \bibnamefont {Satzinger}}, \bibinfo {author} {\bibfnamefont {H.~F.}\ \bibnamefont {Schurkus}}, \bibinfo {author} {\bibfnamefont {C.}~\bibnamefont {Schuster}}, \bibinfo {author} {\bibfnamefont {M.~J.}\ \bibnamefont {Shearn}}, \bibinfo {author} {\bibfnamefont {A.}~\bibnamefont {Shorter}}, \bibinfo {author} {\bibfnamefont {V.}~\bibnamefont {Shvarts}}, \bibinfo {author} {\bibfnamefont {J.}~\bibnamefont {Skruzny}}, \bibinfo {author} {\bibfnamefont {V.}~\bibnamefont {Smelyanskiy}}, \bibinfo {author} {\bibfnamefont {W.~C.}\ \bibnamefont {Smith}}, \bibinfo {author} {\bibfnamefont {G.}~\bibnamefont {Sterling}}, \bibinfo {author} {\bibfnamefont {D.}~\bibnamefont {Strain}}, \bibinfo {author} {\bibfnamefont {M.}~\bibnamefont {Szalay}}, \bibinfo
  {author} {\bibfnamefont {A.}~\bibnamefont {Torres}}, \bibinfo {author} {\bibfnamefont {G.}~\bibnamefont {Vidal}}, \bibinfo {author} {\bibfnamefont {B.}~\bibnamefont {Villalonga}}, \bibinfo {author} {\bibfnamefont {C.}~\bibnamefont {Vollgraff~Heidweiller}}, \bibinfo {author} {\bibfnamefont {T.}~\bibnamefont {White}}, \bibinfo {author} {\bibfnamefont {C.}~\bibnamefont {Xing}}, \bibinfo {author} {\bibfnamefont {Z.~J.}\ \bibnamefont {Yao}}, \bibinfo {author} {\bibfnamefont {P.}~\bibnamefont {Yeh}}, \bibinfo {author} {\bibfnamefont {J.}~\bibnamefont {Yoo}}, \bibinfo {author} {\bibfnamefont {G.}~\bibnamefont {Young}}, \bibinfo {author} {\bibfnamefont {A.}~\bibnamefont {Zalcman}}, \bibinfo {author} {\bibfnamefont {Y.}~\bibnamefont {Zhang}}, \bibinfo {author} {\bibfnamefont {N.}~\bibnamefont {Zhu}},\ and\ \bibinfo {author} {\bibnamefont {{Google Quantum AI}}},\ }\bibfield  {title} {\bibinfo {title} {Suppressing quantum errors by scaling a surface code logical qubit},\ }\href
  {https://doi.org/10.1038/s41586-022-05434-1} {\bibfield  {journal} {\bibinfo  {journal} {Nature}\ }\textbf {\bibinfo {volume} {614}},\ \bibinfo {pages} {676} (\bibinfo {year} {2023})}\BibitemShut {NoStop}%
\bibitem [{\citenamefont {Acharya}\ \emph {et~al.}(2025)\citenamefont {Acharya}, \citenamefont {Abanin}, \citenamefont {{Aghababaie-Beni}}, \citenamefont {Aleiner}, \citenamefont {Andersen}, \citenamefont {Ansmann}, \citenamefont {Arute}, \citenamefont {Arya}, \citenamefont {Asfaw}, \citenamefont {Astrakhantsev}, \citenamefont {Atalaya}, \citenamefont {Babbush}, \citenamefont {Bacon}, \citenamefont {Ballard}, \citenamefont {Bardin}, \citenamefont {Bausch}, \citenamefont {Bengtsson}, \citenamefont {Bilmes}, \citenamefont {Blackwell}, \citenamefont {Boixo}, \citenamefont {Bortoli}, \citenamefont {Bourassa}, \citenamefont {Bovaird}, \citenamefont {Brill}, \citenamefont {Broughton}, \citenamefont {Browne}, \citenamefont {Buchea}, \citenamefont {Buckley}, \citenamefont {Buell}, \citenamefont {Burger}, \citenamefont {Burkett}, \citenamefont {Bushnell}, \citenamefont {Cabrera}, \citenamefont {Campero}, \citenamefont {Chang}, \citenamefont {Chen}, \citenamefont {Chen}, \citenamefont {Chiaro}, \citenamefont {Chik},
  \citenamefont {Chou}, \citenamefont {Claes}, \citenamefont {Cleland}, \citenamefont {Cogan}, \citenamefont {Collins}, \citenamefont {Conner}, \citenamefont {Courtney}, \citenamefont {Crook}, \citenamefont {Curtin}, \citenamefont {Das}, \citenamefont {Davies}, \citenamefont {De~Lorenzo}, \citenamefont {Debroy}, \citenamefont {Demura}, \citenamefont {Devoret}, \citenamefont {Di~Paolo}, \citenamefont {Donohoe}, \citenamefont {Drozdov}, \citenamefont {Dunsworth}, \citenamefont {Earle}, \citenamefont {Edlich}, \citenamefont {Eickbusch}, \citenamefont {Elbag}, \citenamefont {Elzouka}, \citenamefont {Erickson}, \citenamefont {Faoro}, \citenamefont {Farhi}, \citenamefont {Ferreira}, \citenamefont {Burgos}, \citenamefont {Forati}, \citenamefont {Fowler}, \citenamefont {Foxen}, \citenamefont {Ganjam}, \citenamefont {Garcia}, \citenamefont {Gasca}, \citenamefont {Genois}, \citenamefont {Giang}, \citenamefont {Gidney}, \citenamefont {Gilboa}, \citenamefont {Gosula}, \citenamefont {Dau}, \citenamefont {Graumann},
  \citenamefont {Greene}, \citenamefont {Gross}, \citenamefont {Habegger}, \citenamefont {Hall}, \citenamefont {Hamilton}, \citenamefont {Hansen}, \citenamefont {Harrigan}, \citenamefont {Harrington}, \citenamefont {Heras}, \citenamefont {Heslin}, \citenamefont {Heu}, \citenamefont {Higgott}, \citenamefont {Hill}, \citenamefont {Hilton}, \citenamefont {Holland}, \citenamefont {Hong}, \citenamefont {Huang}, \citenamefont {Huff}, \citenamefont {Huggins}, \citenamefont {Ioffe}, \citenamefont {Isakov}, \citenamefont {Iveland}, \citenamefont {Jeffrey}, \citenamefont {Jiang}, \citenamefont {Jones}, \citenamefont {Jordan}, \citenamefont {Joshi}, \citenamefont {Juhas}, \citenamefont {Kafri}, \citenamefont {Kang}, \citenamefont {Karamlou}, \citenamefont {Kechedzhi}, \citenamefont {Kelly}, \citenamefont {Khaire}, \citenamefont {Khattar}, \citenamefont {Khezri}, \citenamefont {Kim}, \citenamefont {Klimov}, \citenamefont {Klots}, \citenamefont {Kobrin}, \citenamefont {Kohli}, \citenamefont {Korotkov}, \citenamefont
  {Kostritsa}, \citenamefont {Kothari}, \citenamefont {Kozlovskii}, \citenamefont {Kreikebaum}, \citenamefont {Kurilovich}, \citenamefont {Lacroix}, \citenamefont {Landhuis}, \citenamefont {{Lange-Dei}}, \citenamefont {Langley}, \citenamefont {Laptev}, \citenamefont {Lau}, \citenamefont {Le~Guevel}, \citenamefont {Ledford}, \citenamefont {Lee}, \citenamefont {Lee}, \citenamefont {Lensky}, \citenamefont {Leon}, \citenamefont {Lester}, \citenamefont {Li}, \citenamefont {Li}, \citenamefont {Lill}, \citenamefont {Liu}, \citenamefont {Livingston}, \citenamefont {Locharla}, \citenamefont {Lucero}, \citenamefont {Lundahl}, \citenamefont {Lunt}, \citenamefont {Madhuk}, \citenamefont {Malone}, \citenamefont {Maloney}, \citenamefont {Mandr{\`a}}, \citenamefont {Manyika}, \citenamefont {Martin}, \citenamefont {Martin}, \citenamefont {Martin}, \citenamefont {Maxfield}, \citenamefont {McClean}, \citenamefont {McEwen}, \citenamefont {Meeks}, \citenamefont {Megrant}, \citenamefont {Mi}, \citenamefont {Miao}, \citenamefont
  {Mieszala}, \citenamefont {Molavi}, \citenamefont {Molina}, \citenamefont {Montazeri}, \citenamefont {Morvan}, \citenamefont {Movassagh}, \citenamefont {Mruczkiewicz}, \citenamefont {Naaman}, \citenamefont {Neeley}, \citenamefont {Neill}, \citenamefont {Nersisyan}, \citenamefont {Neven}, \citenamefont {Newman}, \citenamefont {Ng}, \citenamefont {Nguyen}, \citenamefont {Nguyen}, \citenamefont {Ni}, \citenamefont {Niu}, \citenamefont {O'Brien}, \citenamefont {Oliver}, \citenamefont {Opremcak}, \citenamefont {Ottosson}, \citenamefont {Petukhov}, \citenamefont {Pizzuto}, \citenamefont {Platt}, \citenamefont {Potter}, \citenamefont {Pritchard}, \citenamefont {Pryadko}, \citenamefont {Quintana}, \citenamefont {Ramachandran}, \citenamefont {Reagor}, \citenamefont {Redding}, \citenamefont {Rhodes}, \citenamefont {Roberts}, \citenamefont {Rosenberg}, \citenamefont {Rosenfeld}, \citenamefont {Roushan}, \citenamefont {Rubin}, \citenamefont {Saei}, \citenamefont {Sank}, \citenamefont {Sankaragomathi}, \citenamefont
  {Satzinger}, \citenamefont {Schurkus}, \citenamefont {Schuster}, \citenamefont {Senior}, \citenamefont {Shearn}, \citenamefont {Shorter}, \citenamefont {Shutty}, \citenamefont {Shvarts}, \citenamefont {Singh}, \citenamefont {Sivak}, \citenamefont {Skruzny}, \citenamefont {Small}, \citenamefont {Smelyanskiy}, \citenamefont {Smith}, \citenamefont {Somma}, \citenamefont {Springer}, \citenamefont {Sterling}, \citenamefont {Strain}, \citenamefont {Suchard}, \citenamefont {Szasz}, \citenamefont {Sztein}, \citenamefont {Thor}, \citenamefont {Torres}, \citenamefont {Torunbalci}, \citenamefont {Vaishnav}, \citenamefont {Vargas}, \citenamefont {Vdovichev}, \citenamefont {Vidal}, \citenamefont {Villalonga}, \citenamefont {Heidweiller}, \citenamefont {Waltman}, \citenamefont {Wang}, \citenamefont {Ware}, \citenamefont {Weber}, \citenamefont {Weidel}, \citenamefont {White}, \citenamefont {Wong}, \citenamefont {Woo}, \citenamefont {Xing}, \citenamefont {Yao}, \citenamefont {Yeh}, \citenamefont {Ying}, \citenamefont
  {Yoo}, \citenamefont {Yosri}, \citenamefont {Young}, \citenamefont {Zalcman}, \citenamefont {Zhang}, \citenamefont {Zhu}, \citenamefont {Zobrist},\ and\ \citenamefont {{Google Quantum AI and Collaborators}}}]{acharya_quantum_2025}%
  \BibitemOpen
  \bibfield  {author} {\bibinfo {author} {\bibfnamefont {R.}~\bibnamefont {Acharya}}, \bibinfo {author} {\bibfnamefont {D.~A.}\ \bibnamefont {Abanin}}, \bibinfo {author} {\bibfnamefont {L.}~\bibnamefont {{Aghababaie-Beni}}}, \bibinfo {author} {\bibfnamefont {I.}~\bibnamefont {Aleiner}}, \bibinfo {author} {\bibfnamefont {T.~I.}\ \bibnamefont {Andersen}}, \bibinfo {author} {\bibfnamefont {M.}~\bibnamefont {Ansmann}}, \bibinfo {author} {\bibfnamefont {F.}~\bibnamefont {Arute}}, \bibinfo {author} {\bibfnamefont {K.}~\bibnamefont {Arya}}, \bibinfo {author} {\bibfnamefont {A.}~\bibnamefont {Asfaw}}, \bibinfo {author} {\bibfnamefont {N.}~\bibnamefont {Astrakhantsev}}, \bibinfo {author} {\bibfnamefont {J.}~\bibnamefont {Atalaya}}, \bibinfo {author} {\bibfnamefont {R.}~\bibnamefont {Babbush}}, \bibinfo {author} {\bibfnamefont {D.}~\bibnamefont {Bacon}}, \bibinfo {author} {\bibfnamefont {B.}~\bibnamefont {Ballard}}, \bibinfo {author} {\bibfnamefont {J.~C.}\ \bibnamefont {Bardin}}, \bibinfo {author} {\bibfnamefont
  {J.}~\bibnamefont {Bausch}}, \bibinfo {author} {\bibfnamefont {A.}~\bibnamefont {Bengtsson}}, \bibinfo {author} {\bibfnamefont {A.}~\bibnamefont {Bilmes}}, \bibinfo {author} {\bibfnamefont {S.}~\bibnamefont {Blackwell}}, \bibinfo {author} {\bibfnamefont {S.}~\bibnamefont {Boixo}}, \bibinfo {author} {\bibfnamefont {G.}~\bibnamefont {Bortoli}}, \bibinfo {author} {\bibfnamefont {A.}~\bibnamefont {Bourassa}}, \bibinfo {author} {\bibfnamefont {J.}~\bibnamefont {Bovaird}}, \bibinfo {author} {\bibfnamefont {L.}~\bibnamefont {Brill}}, \bibinfo {author} {\bibfnamefont {M.}~\bibnamefont {Broughton}}, \bibinfo {author} {\bibfnamefont {D.~A.}\ \bibnamefont {Browne}}, \bibinfo {author} {\bibfnamefont {B.}~\bibnamefont {Buchea}}, \bibinfo {author} {\bibfnamefont {B.~B.}\ \bibnamefont {Buckley}}, \bibinfo {author} {\bibfnamefont {D.~A.}\ \bibnamefont {Buell}}, \bibinfo {author} {\bibfnamefont {T.}~\bibnamefont {Burger}}, \bibinfo {author} {\bibfnamefont {B.}~\bibnamefont {Burkett}}, \bibinfo {author} {\bibfnamefont
  {N.}~\bibnamefont {Bushnell}}, \bibinfo {author} {\bibfnamefont {A.}~\bibnamefont {Cabrera}}, \bibinfo {author} {\bibfnamefont {J.}~\bibnamefont {Campero}}, \bibinfo {author} {\bibfnamefont {H.-S.}\ \bibnamefont {Chang}}, \bibinfo {author} {\bibfnamefont {Y.}~\bibnamefont {Chen}}, \bibinfo {author} {\bibfnamefont {Z.}~\bibnamefont {Chen}}, \bibinfo {author} {\bibfnamefont {B.}~\bibnamefont {Chiaro}}, \bibinfo {author} {\bibfnamefont {D.}~\bibnamefont {Chik}}, \bibinfo {author} {\bibfnamefont {C.}~\bibnamefont {Chou}}, \bibinfo {author} {\bibfnamefont {J.}~\bibnamefont {Claes}}, \bibinfo {author} {\bibfnamefont {A.~Y.}\ \bibnamefont {Cleland}}, \bibinfo {author} {\bibfnamefont {J.}~\bibnamefont {Cogan}}, \bibinfo {author} {\bibfnamefont {R.}~\bibnamefont {Collins}}, \bibinfo {author} {\bibfnamefont {P.}~\bibnamefont {Conner}}, \bibinfo {author} {\bibfnamefont {W.}~\bibnamefont {Courtney}}, \bibinfo {author} {\bibfnamefont {A.~L.}\ \bibnamefont {Crook}}, \bibinfo {author} {\bibfnamefont {B.}~\bibnamefont
  {Curtin}}, \bibinfo {author} {\bibfnamefont {S.}~\bibnamefont {Das}}, \bibinfo {author} {\bibfnamefont {A.}~\bibnamefont {Davies}}, \bibinfo {author} {\bibfnamefont {L.}~\bibnamefont {De~Lorenzo}}, \bibinfo {author} {\bibfnamefont {D.~M.}\ \bibnamefont {Debroy}}, \bibinfo {author} {\bibfnamefont {S.}~\bibnamefont {Demura}}, \bibinfo {author} {\bibfnamefont {M.}~\bibnamefont {Devoret}}, \bibinfo {author} {\bibfnamefont {A.}~\bibnamefont {Di~Paolo}}, \bibinfo {author} {\bibfnamefont {P.}~\bibnamefont {Donohoe}}, \bibinfo {author} {\bibfnamefont {I.}~\bibnamefont {Drozdov}}, \bibinfo {author} {\bibfnamefont {A.}~\bibnamefont {Dunsworth}}, \bibinfo {author} {\bibfnamefont {C.}~\bibnamefont {Earle}}, \bibinfo {author} {\bibfnamefont {T.}~\bibnamefont {Edlich}}, \bibinfo {author} {\bibfnamefont {A.}~\bibnamefont {Eickbusch}}, \bibinfo {author} {\bibfnamefont {A.~M.}\ \bibnamefont {Elbag}}, \bibinfo {author} {\bibfnamefont {M.}~\bibnamefont {Elzouka}}, \bibinfo {author} {\bibfnamefont {C.}~\bibnamefont
  {Erickson}}, \bibinfo {author} {\bibfnamefont {L.}~\bibnamefont {Faoro}}, \bibinfo {author} {\bibfnamefont {E.}~\bibnamefont {Farhi}}, \bibinfo {author} {\bibfnamefont {V.~S.}\ \bibnamefont {Ferreira}}, \bibinfo {author} {\bibfnamefont {L.~F.}\ \bibnamefont {Burgos}}, \bibinfo {author} {\bibfnamefont {E.}~\bibnamefont {Forati}}, \bibinfo {author} {\bibfnamefont {A.~G.}\ \bibnamefont {Fowler}}, \bibinfo {author} {\bibfnamefont {B.}~\bibnamefont {Foxen}}, \bibinfo {author} {\bibfnamefont {S.}~\bibnamefont {Ganjam}}, \bibinfo {author} {\bibfnamefont {G.}~\bibnamefont {Garcia}}, \bibinfo {author} {\bibfnamefont {R.}~\bibnamefont {Gasca}}, \bibinfo {author} {\bibfnamefont {{\'E}.}~\bibnamefont {Genois}}, \bibinfo {author} {\bibfnamefont {W.}~\bibnamefont {Giang}}, \bibinfo {author} {\bibfnamefont {C.}~\bibnamefont {Gidney}}, \bibinfo {author} {\bibfnamefont {D.}~\bibnamefont {Gilboa}}, \bibinfo {author} {\bibfnamefont {R.}~\bibnamefont {Gosula}}, \bibinfo {author} {\bibfnamefont {A.~G.}\ \bibnamefont {Dau}},
  \bibinfo {author} {\bibfnamefont {D.}~\bibnamefont {Graumann}}, \bibinfo {author} {\bibfnamefont {A.}~\bibnamefont {Greene}}, \bibinfo {author} {\bibfnamefont {J.~A.}\ \bibnamefont {Gross}}, \bibinfo {author} {\bibfnamefont {S.}~\bibnamefont {Habegger}}, \bibinfo {author} {\bibfnamefont {J.}~\bibnamefont {Hall}}, \bibinfo {author} {\bibfnamefont {M.~C.}\ \bibnamefont {Hamilton}}, \bibinfo {author} {\bibfnamefont {M.}~\bibnamefont {Hansen}}, \bibinfo {author} {\bibfnamefont {M.~P.}\ \bibnamefont {Harrigan}}, \bibinfo {author} {\bibfnamefont {S.~D.}\ \bibnamefont {Harrington}}, \bibinfo {author} {\bibfnamefont {F.~J.~H.}\ \bibnamefont {Heras}}, \bibinfo {author} {\bibfnamefont {S.}~\bibnamefont {Heslin}}, \bibinfo {author} {\bibfnamefont {P.}~\bibnamefont {Heu}}, \bibinfo {author} {\bibfnamefont {O.}~\bibnamefont {Higgott}}, \bibinfo {author} {\bibfnamefont {G.}~\bibnamefont {Hill}}, \bibinfo {author} {\bibfnamefont {J.}~\bibnamefont {Hilton}}, \bibinfo {author} {\bibfnamefont {G.}~\bibnamefont {Holland}},
  \bibinfo {author} {\bibfnamefont {S.}~\bibnamefont {Hong}}, \bibinfo {author} {\bibfnamefont {H.-Y.}\ \bibnamefont {Huang}}, \bibinfo {author} {\bibfnamefont {A.}~\bibnamefont {Huff}}, \bibinfo {author} {\bibfnamefont {W.~J.}\ \bibnamefont {Huggins}}, \bibinfo {author} {\bibfnamefont {L.~B.}\ \bibnamefont {Ioffe}}, \bibinfo {author} {\bibfnamefont {S.~V.}\ \bibnamefont {Isakov}}, \bibinfo {author} {\bibfnamefont {J.}~\bibnamefont {Iveland}}, \bibinfo {author} {\bibfnamefont {E.}~\bibnamefont {Jeffrey}}, \bibinfo {author} {\bibfnamefont {Z.}~\bibnamefont {Jiang}}, \bibinfo {author} {\bibfnamefont {C.}~\bibnamefont {Jones}}, \bibinfo {author} {\bibfnamefont {S.}~\bibnamefont {Jordan}}, \bibinfo {author} {\bibfnamefont {C.}~\bibnamefont {Joshi}}, \bibinfo {author} {\bibfnamefont {P.}~\bibnamefont {Juhas}}, \bibinfo {author} {\bibfnamefont {D.}~\bibnamefont {Kafri}}, \bibinfo {author} {\bibfnamefont {H.}~\bibnamefont {Kang}}, \bibinfo {author} {\bibfnamefont {A.~H.}\ \bibnamefont {Karamlou}}, \bibinfo {author}
  {\bibfnamefont {K.}~\bibnamefont {Kechedzhi}}, \bibinfo {author} {\bibfnamefont {J.}~\bibnamefont {Kelly}}, \bibinfo {author} {\bibfnamefont {T.}~\bibnamefont {Khaire}}, \bibinfo {author} {\bibfnamefont {T.}~\bibnamefont {Khattar}}, \bibinfo {author} {\bibfnamefont {M.}~\bibnamefont {Khezri}}, \bibinfo {author} {\bibfnamefont {S.}~\bibnamefont {Kim}}, \bibinfo {author} {\bibfnamefont {P.~V.}\ \bibnamefont {Klimov}}, \bibinfo {author} {\bibfnamefont {A.~R.}\ \bibnamefont {Klots}}, \bibinfo {author} {\bibfnamefont {B.}~\bibnamefont {Kobrin}}, \bibinfo {author} {\bibfnamefont {P.}~\bibnamefont {Kohli}}, \bibinfo {author} {\bibfnamefont {A.~N.}\ \bibnamefont {Korotkov}}, \bibinfo {author} {\bibfnamefont {F.}~\bibnamefont {Kostritsa}}, \bibinfo {author} {\bibfnamefont {R.}~\bibnamefont {Kothari}}, \bibinfo {author} {\bibfnamefont {B.}~\bibnamefont {Kozlovskii}}, \bibinfo {author} {\bibfnamefont {J.~M.}\ \bibnamefont {Kreikebaum}}, \bibinfo {author} {\bibfnamefont {V.~D.}\ \bibnamefont {Kurilovich}}, \bibinfo
  {author} {\bibfnamefont {N.}~\bibnamefont {Lacroix}}, \bibinfo {author} {\bibfnamefont {D.}~\bibnamefont {Landhuis}}, \bibinfo {author} {\bibfnamefont {T.}~\bibnamefont {{Lange-Dei}}}, \bibinfo {author} {\bibfnamefont {B.~W.}\ \bibnamefont {Langley}}, \bibinfo {author} {\bibfnamefont {P.}~\bibnamefont {Laptev}}, \bibinfo {author} {\bibfnamefont {K.-M.}\ \bibnamefont {Lau}}, \bibinfo {author} {\bibfnamefont {L.}~\bibnamefont {Le~Guevel}}, \bibinfo {author} {\bibfnamefont {J.}~\bibnamefont {Ledford}}, \bibinfo {author} {\bibfnamefont {J.}~\bibnamefont {Lee}}, \bibinfo {author} {\bibfnamefont {K.}~\bibnamefont {Lee}}, \bibinfo {author} {\bibfnamefont {Y.~D.}\ \bibnamefont {Lensky}}, \bibinfo {author} {\bibfnamefont {S.}~\bibnamefont {Leon}}, \bibinfo {author} {\bibfnamefont {B.~J.}\ \bibnamefont {Lester}}, \bibinfo {author} {\bibfnamefont {W.~Y.}\ \bibnamefont {Li}}, \bibinfo {author} {\bibfnamefont {Y.}~\bibnamefont {Li}}, \bibinfo {author} {\bibfnamefont {A.~T.}\ \bibnamefont {Lill}}, \bibinfo {author}
  {\bibfnamefont {W.}~\bibnamefont {Liu}}, \bibinfo {author} {\bibfnamefont {W.~P.}\ \bibnamefont {Livingston}}, \bibinfo {author} {\bibfnamefont {A.}~\bibnamefont {Locharla}}, \bibinfo {author} {\bibfnamefont {E.}~\bibnamefont {Lucero}}, \bibinfo {author} {\bibfnamefont {D.}~\bibnamefont {Lundahl}}, \bibinfo {author} {\bibfnamefont {A.}~\bibnamefont {Lunt}}, \bibinfo {author} {\bibfnamefont {S.}~\bibnamefont {Madhuk}}, \bibinfo {author} {\bibfnamefont {F.~D.}\ \bibnamefont {Malone}}, \bibinfo {author} {\bibfnamefont {A.}~\bibnamefont {Maloney}}, \bibinfo {author} {\bibfnamefont {S.}~\bibnamefont {Mandr{\`a}}}, \bibinfo {author} {\bibfnamefont {J.}~\bibnamefont {Manyika}}, \bibinfo {author} {\bibfnamefont {L.~S.}\ \bibnamefont {Martin}}, \bibinfo {author} {\bibfnamefont {O.}~\bibnamefont {Martin}}, \bibinfo {author} {\bibfnamefont {S.}~\bibnamefont {Martin}}, \bibinfo {author} {\bibfnamefont {C.}~\bibnamefont {Maxfield}}, \bibinfo {author} {\bibfnamefont {J.~R.}\ \bibnamefont {McClean}}, \bibinfo {author}
  {\bibfnamefont {M.}~\bibnamefont {McEwen}}, \bibinfo {author} {\bibfnamefont {S.}~\bibnamefont {Meeks}}, \bibinfo {author} {\bibfnamefont {A.}~\bibnamefont {Megrant}}, \bibinfo {author} {\bibfnamefont {X.}~\bibnamefont {Mi}}, \bibinfo {author} {\bibfnamefont {K.~C.}\ \bibnamefont {Miao}}, \bibinfo {author} {\bibfnamefont {A.}~\bibnamefont {Mieszala}}, \bibinfo {author} {\bibfnamefont {R.}~\bibnamefont {Molavi}}, \bibinfo {author} {\bibfnamefont {S.}~\bibnamefont {Molina}}, \bibinfo {author} {\bibfnamefont {S.}~\bibnamefont {Montazeri}}, \bibinfo {author} {\bibfnamefont {A.}~\bibnamefont {Morvan}}, \bibinfo {author} {\bibfnamefont {R.}~\bibnamefont {Movassagh}}, \bibinfo {author} {\bibfnamefont {W.}~\bibnamefont {Mruczkiewicz}}, \bibinfo {author} {\bibfnamefont {O.}~\bibnamefont {Naaman}}, \bibinfo {author} {\bibfnamefont {M.}~\bibnamefont {Neeley}}, \bibinfo {author} {\bibfnamefont {C.}~\bibnamefont {Neill}}, \bibinfo {author} {\bibfnamefont {A.}~\bibnamefont {Nersisyan}}, \bibinfo {author} {\bibfnamefont
  {H.}~\bibnamefont {Neven}}, \bibinfo {author} {\bibfnamefont {M.}~\bibnamefont {Newman}}, \bibinfo {author} {\bibfnamefont {J.~H.}\ \bibnamefont {Ng}}, \bibinfo {author} {\bibfnamefont {A.}~\bibnamefont {Nguyen}}, \bibinfo {author} {\bibfnamefont {M.}~\bibnamefont {Nguyen}}, \bibinfo {author} {\bibfnamefont {C.-H.}\ \bibnamefont {Ni}}, \bibinfo {author} {\bibfnamefont {M.~Y.}\ \bibnamefont {Niu}}, \bibinfo {author} {\bibfnamefont {T.~E.}\ \bibnamefont {O'Brien}}, \bibinfo {author} {\bibfnamefont {W.~D.}\ \bibnamefont {Oliver}}, \bibinfo {author} {\bibfnamefont {A.}~\bibnamefont {Opremcak}}, \bibinfo {author} {\bibfnamefont {K.}~\bibnamefont {Ottosson}}, \bibinfo {author} {\bibfnamefont {A.}~\bibnamefont {Petukhov}}, \bibinfo {author} {\bibfnamefont {A.}~\bibnamefont {Pizzuto}}, \bibinfo {author} {\bibfnamefont {J.}~\bibnamefont {Platt}}, \bibinfo {author} {\bibfnamefont {R.}~\bibnamefont {Potter}}, \bibinfo {author} {\bibfnamefont {O.}~\bibnamefont {Pritchard}}, \bibinfo {author} {\bibfnamefont {L.~P.}\
  \bibnamefont {Pryadko}}, \bibinfo {author} {\bibfnamefont {C.}~\bibnamefont {Quintana}}, \bibinfo {author} {\bibfnamefont {G.}~\bibnamefont {Ramachandran}}, \bibinfo {author} {\bibfnamefont {M.~J.}\ \bibnamefont {Reagor}}, \bibinfo {author} {\bibfnamefont {J.}~\bibnamefont {Redding}}, \bibinfo {author} {\bibfnamefont {D.~M.}\ \bibnamefont {Rhodes}}, \bibinfo {author} {\bibfnamefont {G.}~\bibnamefont {Roberts}}, \bibinfo {author} {\bibfnamefont {E.}~\bibnamefont {Rosenberg}}, \bibinfo {author} {\bibfnamefont {E.}~\bibnamefont {Rosenfeld}}, \bibinfo {author} {\bibfnamefont {P.}~\bibnamefont {Roushan}}, \bibinfo {author} {\bibfnamefont {N.~C.}\ \bibnamefont {Rubin}}, \bibinfo {author} {\bibfnamefont {N.}~\bibnamefont {Saei}}, \bibinfo {author} {\bibfnamefont {D.}~\bibnamefont {Sank}}, \bibinfo {author} {\bibfnamefont {K.}~\bibnamefont {Sankaragomathi}}, \bibinfo {author} {\bibfnamefont {K.~J.}\ \bibnamefont {Satzinger}}, \bibinfo {author} {\bibfnamefont {H.~F.}\ \bibnamefont {Schurkus}}, \bibinfo {author}
  {\bibfnamefont {C.}~\bibnamefont {Schuster}}, \bibinfo {author} {\bibfnamefont {A.~W.}\ \bibnamefont {Senior}}, \bibinfo {author} {\bibfnamefont {M.~J.}\ \bibnamefont {Shearn}}, \bibinfo {author} {\bibfnamefont {A.}~\bibnamefont {Shorter}}, \bibinfo {author} {\bibfnamefont {N.}~\bibnamefont {Shutty}}, \bibinfo {author} {\bibfnamefont {V.}~\bibnamefont {Shvarts}}, \bibinfo {author} {\bibfnamefont {S.}~\bibnamefont {Singh}}, \bibinfo {author} {\bibfnamefont {V.}~\bibnamefont {Sivak}}, \bibinfo {author} {\bibfnamefont {J.}~\bibnamefont {Skruzny}}, \bibinfo {author} {\bibfnamefont {S.}~\bibnamefont {Small}}, \bibinfo {author} {\bibfnamefont {V.}~\bibnamefont {Smelyanskiy}}, \bibinfo {author} {\bibfnamefont {W.~C.}\ \bibnamefont {Smith}}, \bibinfo {author} {\bibfnamefont {R.~D.}\ \bibnamefont {Somma}}, \bibinfo {author} {\bibfnamefont {S.}~\bibnamefont {Springer}}, \bibinfo {author} {\bibfnamefont {G.}~\bibnamefont {Sterling}}, \bibinfo {author} {\bibfnamefont {D.}~\bibnamefont {Strain}}, \bibinfo {author}
  {\bibfnamefont {J.}~\bibnamefont {Suchard}}, \bibinfo {author} {\bibfnamefont {A.}~\bibnamefont {Szasz}}, \bibinfo {author} {\bibfnamefont {A.}~\bibnamefont {Sztein}}, \bibinfo {author} {\bibfnamefont {D.}~\bibnamefont {Thor}}, \bibinfo {author} {\bibfnamefont {A.}~\bibnamefont {Torres}}, \bibinfo {author} {\bibfnamefont {M.~M.}\ \bibnamefont {Torunbalci}}, \bibinfo {author} {\bibfnamefont {A.}~\bibnamefont {Vaishnav}}, \bibinfo {author} {\bibfnamefont {J.}~\bibnamefont {Vargas}}, \bibinfo {author} {\bibfnamefont {S.}~\bibnamefont {Vdovichev}}, \bibinfo {author} {\bibfnamefont {G.}~\bibnamefont {Vidal}}, \bibinfo {author} {\bibfnamefont {B.}~\bibnamefont {Villalonga}}, \bibinfo {author} {\bibfnamefont {C.~V.}\ \bibnamefont {Heidweiller}}, \bibinfo {author} {\bibfnamefont {S.}~\bibnamefont {Waltman}}, \bibinfo {author} {\bibfnamefont {S.~X.}\ \bibnamefont {Wang}}, \bibinfo {author} {\bibfnamefont {B.}~\bibnamefont {Ware}}, \bibinfo {author} {\bibfnamefont {K.}~\bibnamefont {Weber}}, \bibinfo {author}
  {\bibfnamefont {T.}~\bibnamefont {Weidel}}, \bibinfo {author} {\bibfnamefont {T.}~\bibnamefont {White}}, \bibinfo {author} {\bibfnamefont {K.}~\bibnamefont {Wong}}, \bibinfo {author} {\bibfnamefont {B.~W.~K.}\ \bibnamefont {Woo}}, \bibinfo {author} {\bibfnamefont {C.}~\bibnamefont {Xing}}, \bibinfo {author} {\bibfnamefont {Z.~J.}\ \bibnamefont {Yao}}, \bibinfo {author} {\bibfnamefont {P.}~\bibnamefont {Yeh}}, \bibinfo {author} {\bibfnamefont {B.}~\bibnamefont {Ying}}, \bibinfo {author} {\bibfnamefont {J.}~\bibnamefont {Yoo}}, \bibinfo {author} {\bibfnamefont {N.}~\bibnamefont {Yosri}}, \bibinfo {author} {\bibfnamefont {G.}~\bibnamefont {Young}}, \bibinfo {author} {\bibfnamefont {A.}~\bibnamefont {Zalcman}}, \bibinfo {author} {\bibfnamefont {Y.}~\bibnamefont {Zhang}}, \bibinfo {author} {\bibfnamefont {N.}~\bibnamefont {Zhu}}, \bibinfo {author} {\bibfnamefont {N.}~\bibnamefont {Zobrist}},\ and\ \bibinfo {author} {\bibnamefont {{Google Quantum AI and Collaborators}}},\ }\bibfield  {title} {\bibinfo {title}
  {Quantum error correction below the surface code threshold},\ }\href {https://doi.org/10.1038/s41586-024-08449-y} {\bibfield  {journal} {\bibinfo  {journal} {Nature}\ }\textbf {\bibinfo {volume} {638}},\ \bibinfo {pages} {920} (\bibinfo {year} {2025})}\BibitemShut {NoStop}%
\bibitem [{\citenamefont {Nielsen}\ and\ \citenamefont {Chuang}(2010)}]{nielsen_quantum_2010}%
  \BibitemOpen
  \bibfield  {author} {\bibinfo {author} {\bibfnamefont {M.~A.}\ \bibnamefont {Nielsen}}\ and\ \bibinfo {author} {\bibfnamefont {I.~L.}\ \bibnamefont {Chuang}},\ }\href@noop {} {\emph {\bibinfo {title} {Quantum {{Computation}} and {{Quantum Information}}}}},\ \bibinfo {edition} {10th}\ ed.\ (\bibinfo  {publisher} {Cambridge University Press},\ \bibinfo {address} {Cambridge; New York},\ \bibinfo {year} {2010})\BibitemShut {NoStop}%
\bibitem [{\citenamefont {Huang}\ \emph {et~al.}(2019)\citenamefont {Huang}, \citenamefont {Doherty},\ and\ \citenamefont {Flammia}}]{huang2019performance}%
  \BibitemOpen
  \bibfield  {author} {\bibinfo {author} {\bibfnamefont {E.}~\bibnamefont {Huang}}, \bibinfo {author} {\bibfnamefont {A.~C.}\ \bibnamefont {Doherty}},\ and\ \bibinfo {author} {\bibfnamefont {S.}~\bibnamefont {Flammia}},\ }\bibfield  {title} {\bibinfo {title} {Performance of quantum error correction with coherent errors},\ }\href@noop {} {\bibfield  {journal} {\bibinfo  {journal} {Physical Review A}\ }\textbf {\bibinfo {volume} {99}},\ \bibinfo {pages} {022313} (\bibinfo {year} {2019})}\BibitemShut {NoStop}%
\bibitem [{\citenamefont {Beale}\ \emph {et~al.}(2018)\citenamefont {Beale}, \citenamefont {Wallman}, \citenamefont {Guti{\'e}rrez}, \citenamefont {Brown},\ and\ \citenamefont {Laflamme}}]{beale2018quantum}%
  \BibitemOpen
  \bibfield  {author} {\bibinfo {author} {\bibfnamefont {S.~J.}\ \bibnamefont {Beale}}, \bibinfo {author} {\bibfnamefont {J.~J.}\ \bibnamefont {Wallman}}, \bibinfo {author} {\bibfnamefont {M.}~\bibnamefont {Guti{\'e}rrez}}, \bibinfo {author} {\bibfnamefont {K.~R.}\ \bibnamefont {Brown}},\ and\ \bibinfo {author} {\bibfnamefont {R.}~\bibnamefont {Laflamme}},\ }\bibfield  {title} {\bibinfo {title} {Quantum error correction decoheres noise},\ }\href@noop {} {\bibfield  {journal} {\bibinfo  {journal} {Physical review letters}\ }\textbf {\bibinfo {volume} {121}},\ \bibinfo {pages} {190501} (\bibinfo {year} {2018})}\BibitemShut {NoStop}%
\bibitem [{\citenamefont {Bravyi}\ \emph {et~al.}(2018)\citenamefont {Bravyi}, \citenamefont {Englbrecht}, \citenamefont {K{\"o}nig},\ and\ \citenamefont {Peard}}]{bravyi2018correcting}%
  \BibitemOpen
  \bibfield  {author} {\bibinfo {author} {\bibfnamefont {S.}~\bibnamefont {Bravyi}}, \bibinfo {author} {\bibfnamefont {M.}~\bibnamefont {Englbrecht}}, \bibinfo {author} {\bibfnamefont {R.}~\bibnamefont {K{\"o}nig}},\ and\ \bibinfo {author} {\bibfnamefont {N.}~\bibnamefont {Peard}},\ }\bibfield  {title} {\bibinfo {title} {Correcting coherent errors with surface codes. npj quantum information, 4, 1-6},\ }\href@noop {} {\bibfield  {journal} {\bibinfo  {journal} {arXiv preprint arXiv:1710.02270}\ } (\bibinfo {year} {2018})}\BibitemShut {NoStop}%
\bibitem [{\citenamefont {Jain}\ \emph {et~al.}(2023)\citenamefont {Jain}, \citenamefont {Iyer}, \citenamefont {Bartlett},\ and\ \citenamefont {Emerson}}]{jain2023improved}%
  \BibitemOpen
  \bibfield  {author} {\bibinfo {author} {\bibfnamefont {A.}~\bibnamefont {Jain}}, \bibinfo {author} {\bibfnamefont {P.}~\bibnamefont {Iyer}}, \bibinfo {author} {\bibfnamefont {S.~D.}\ \bibnamefont {Bartlett}},\ and\ \bibinfo {author} {\bibfnamefont {J.}~\bibnamefont {Emerson}},\ }\bibfield  {title} {\bibinfo {title} {Improved quantum error correction with randomized compiling},\ }\href@noop {} {\bibfield  {journal} {\bibinfo  {journal} {Physical Review Research}\ }\textbf {\bibinfo {volume} {5}},\ \bibinfo {pages} {033049} (\bibinfo {year} {2023})}\BibitemShut {NoStop}%
\bibitem [{\citenamefont {M{\'a}rton}\ and\ \citenamefont {Asb{\'o}th}(2023)}]{marton2023coherent}%
  \BibitemOpen
  \bibfield  {author} {\bibinfo {author} {\bibfnamefont {{\'A}.}~\bibnamefont {M{\'a}rton}}\ and\ \bibinfo {author} {\bibfnamefont {J.~K.}\ \bibnamefont {Asb{\'o}th}},\ }\bibfield  {title} {\bibinfo {title} {Coherent errors and readout errors in the surface code},\ }\href@noop {} {\bibfield  {journal} {\bibinfo  {journal} {Quantum}\ }\textbf {\bibinfo {volume} {7}},\ \bibinfo {pages} {1116} (\bibinfo {year} {2023})}\BibitemShut {NoStop}%
\bibitem [{\citenamefont {Venn}\ and\ \citenamefont {B\'eri}(2020)}]{venn2020errorcorrection}%
  \BibitemOpen
  \bibfield  {author} {\bibinfo {author} {\bibfnamefont {F.}~\bibnamefont {Venn}}\ and\ \bibinfo {author} {\bibfnamefont {B.}~\bibnamefont {B\'eri}},\ }\bibfield  {title} {\bibinfo {title} {Error-correction and noise-decoherence thresholds for coherent errors in planar-graph surface codes},\ }\href {https://doi.org/10.1103/PhysRevResearch.2.043412} {\bibfield  {journal} {\bibinfo  {journal} {Phys. Rev. Res.}\ }\textbf {\bibinfo {volume} {2}},\ \bibinfo {pages} {043412} (\bibinfo {year} {2020})}\BibitemShut {NoStop}%
\bibitem [{\citenamefont {Darmawan}(2024)}]{darmawan2024optimaladaptationsurfacecodedecoders}%
  \BibitemOpen
  \bibfield  {author} {\bibinfo {author} {\bibfnamefont {A.~S.}\ \bibnamefont {Darmawan}},\ }\href {https://arxiv.org/abs/2403.08706} {\bibinfo {title} {Optimal adaptation of surface-code decoders to local noise}} (\bibinfo {year} {2024}),\ \Eprint {https://arxiv.org/abs/2403.08706} {arXiv:2403.08706 [quant-ph]} \BibitemShut {NoStop}%
\bibitem [{\citenamefont {Iverson}\ and\ \citenamefont {Preskill}(2020)}]{iverson2020coherence}%
  \BibitemOpen
  \bibfield  {author} {\bibinfo {author} {\bibfnamefont {J.~K.}\ \bibnamefont {Iverson}}\ and\ \bibinfo {author} {\bibfnamefont {J.}~\bibnamefont {Preskill}},\ }\bibfield  {title} {\bibinfo {title} {Coherence in logical quantum channels},\ }\href@noop {} {\bibfield  {journal} {\bibinfo  {journal} {New Journal of Physics}\ }\textbf {\bibinfo {volume} {22}},\ \bibinfo {pages} {073066} (\bibinfo {year} {2020})}\BibitemShut {NoStop}%
\bibitem [{\citenamefont {Pato}\ \emph {et~al.}(2024)\citenamefont {Pato}, \citenamefont {Staples~Jr},\ and\ \citenamefont {Brown}}]{pato2024logical}%
  \BibitemOpen
  \bibfield  {author} {\bibinfo {author} {\bibfnamefont {B.}~\bibnamefont {Pato}}, \bibinfo {author} {\bibfnamefont {J.~W.}\ \bibnamefont {Staples~Jr}},\ and\ \bibinfo {author} {\bibfnamefont {K.~R.}\ \bibnamefont {Brown}},\ }\bibfield  {title} {\bibinfo {title} {Logical coherence in 2d compass codes},\ }\href@noop {} {\bibfield  {journal} {\bibinfo  {journal} {arXiv preprint arXiv:2405.09287}\ } (\bibinfo {year} {2024})}\BibitemShut {NoStop}%
\bibitem [{\citenamefont {Wallman}\ and\ \citenamefont {Emerson}(2016)}]{wallman2016noisetailoring}%
  \BibitemOpen
  \bibfield  {author} {\bibinfo {author} {\bibfnamefont {J.~J.}\ \bibnamefont {Wallman}}\ and\ \bibinfo {author} {\bibfnamefont {J.}~\bibnamefont {Emerson}},\ }\bibfield  {title} {\bibinfo {title} {Noise tailoring for scalable quantum computation via randomized compiling},\ }\href {https://doi.org/10.1103/PhysRevA.94.052325} {\bibfield  {journal} {\bibinfo  {journal} {Phys. Rev. A}\ }\textbf {\bibinfo {volume} {94}},\ \bibinfo {pages} {052325} (\bibinfo {year} {2016})}\BibitemShut {NoStop}%
\bibitem [{\citenamefont {Raussendorf}\ \emph {et~al.}(2005)\citenamefont {Raussendorf}, \citenamefont {Bravyi},\ and\ \citenamefont {Harrington}}]{raussendorf_long-range_2005}%
  \BibitemOpen
  \bibfield  {author} {\bibinfo {author} {\bibfnamefont {R.}~\bibnamefont {Raussendorf}}, \bibinfo {author} {\bibfnamefont {S.}~\bibnamefont {Bravyi}},\ and\ \bibinfo {author} {\bibfnamefont {J.}~\bibnamefont {Harrington}},\ }\bibfield  {title} {\bibinfo {title} {Long-range quantum entanglement in noisy cluster states},\ }\href {https://doi.org/10.1103/PhysRevA.71.062313} {\bibfield  {journal} {\bibinfo  {journal} {Physical Review A}\ }\textbf {\bibinfo {volume} {71}},\ \bibinfo {pages} {062313} (\bibinfo {year} {2005})}\BibitemShut {NoStop}%
\bibitem [{\citenamefont {Raussendorf}\ \emph {et~al.}(2006)\citenamefont {Raussendorf}, \citenamefont {Harrington},\ and\ \citenamefont {Goyal}}]{raussendorf_fault-tolerant_2006}%
  \BibitemOpen
  \bibfield  {author} {\bibinfo {author} {\bibfnamefont {R.}~\bibnamefont {Raussendorf}}, \bibinfo {author} {\bibfnamefont {J.}~\bibnamefont {Harrington}},\ and\ \bibinfo {author} {\bibfnamefont {K.}~\bibnamefont {Goyal}},\ }\bibfield  {title} {\bibinfo {title} {A fault-tolerant one-way quantum computer},\ }\href {https://doi.org/10.1016/j.aop.2006.01.012} {\bibfield  {journal} {\bibinfo  {journal} {Annals of Physics}\ }\textbf {\bibinfo {volume} {321}},\ \bibinfo {pages} {2242} (\bibinfo {year} {2006})}\BibitemShut {NoStop}%
\bibitem [{\citenamefont {Raussendorf}\ and\ \citenamefont {Harrington}(2007)}]{raussendorf_fault-tolerant_2007}%
  \BibitemOpen
  \bibfield  {author} {\bibinfo {author} {\bibfnamefont {R.}~\bibnamefont {Raussendorf}}\ and\ \bibinfo {author} {\bibfnamefont {J.}~\bibnamefont {Harrington}},\ }\bibfield  {title} {\bibinfo {title} {Fault-{{Tolerant Quantum Computation}} with {{High Threshold}} in {{Two Dimensions}}},\ }\href {https://doi.org/10.1103/PhysRevLett.98.190504} {\bibfield  {journal} {\bibinfo  {journal} {Physical Review Letters}\ }\textbf {\bibinfo {volume} {98}},\ \bibinfo {pages} {190504} (\bibinfo {year} {2007})}\BibitemShut {NoStop}%
\bibitem [{\citenamefont {Brown}\ and\ \citenamefont {Roberts}(2020)}]{brown_universal_2020}%
  \BibitemOpen
  \bibfield  {author} {\bibinfo {author} {\bibfnamefont {B.~J.}\ \bibnamefont {Brown}}\ and\ \bibinfo {author} {\bibfnamefont {S.}~\bibnamefont {Roberts}},\ }\bibfield  {title} {\bibinfo {title} {Universal fault-tolerant measurement-based quantum computation},\ }\href {https://doi.org/10.1103/PhysRevResearch.2.033305} {\bibfield  {journal} {\bibinfo  {journal} {Physical Review Research}\ }\textbf {\bibinfo {volume} {2}},\ \bibinfo {pages} {033305} (\bibinfo {year} {2020})}\BibitemShut {NoStop}%
\bibitem [{\citenamefont {Bolt}\ \emph {et~al.}(2016)\citenamefont {Bolt}, \citenamefont {Duclos-Cianci}, \citenamefont {Poulin},\ and\ \citenamefont {Stace}}]{bolt2016foliated}%
  \BibitemOpen
  \bibfield  {author} {\bibinfo {author} {\bibfnamefont {A.}~\bibnamefont {Bolt}}, \bibinfo {author} {\bibfnamefont {G.}~\bibnamefont {Duclos-Cianci}}, \bibinfo {author} {\bibfnamefont {D.}~\bibnamefont {Poulin}},\ and\ \bibinfo {author} {\bibfnamefont {T.}~\bibnamefont {Stace}},\ }\bibfield  {title} {\bibinfo {title} {Foliated quantum error-correcting codes},\ }\href@noop {} {\bibfield  {journal} {\bibinfo  {journal} {Physical review letters}\ }\textbf {\bibinfo {volume} {117}},\ \bibinfo {pages} {070501} (\bibinfo {year} {2016})}\BibitemShut {NoStop}%
\bibitem [{\citenamefont {Nielsen}(2002)}]{nielsen_simple_2002}%
  \BibitemOpen
  \bibfield  {author} {\bibinfo {author} {\bibfnamefont {M.~A.}\ \bibnamefont {Nielsen}},\ }\bibfield  {title} {\bibinfo {title} {A simple formula for the average gate fidelity of a quantum dynamical operation},\ }\href {https://doi.org/10.1016/S0375-9601(02)01272-0} {\bibfield  {journal} {\bibinfo  {journal} {Physics Letters A}\ }\textbf {\bibinfo {volume} {303}},\ \bibinfo {pages} {249} (\bibinfo {year} {2002})}\BibitemShut {NoStop}%
\bibitem [{\citenamefont {Zhou}\ \emph {et~al.}(2000)\citenamefont {Zhou}, \citenamefont {Leung},\ and\ \citenamefont {Chuang}}]{zhou_methodology_2000}%
  \BibitemOpen
  \bibfield  {author} {\bibinfo {author} {\bibfnamefont {X.}~\bibnamefont {Zhou}}, \bibinfo {author} {\bibfnamefont {D.~W.}\ \bibnamefont {Leung}},\ and\ \bibinfo {author} {\bibfnamefont {I.~L.}\ \bibnamefont {Chuang}},\ }\bibfield  {title} {\bibinfo {title} {Methodology for quantum logic gate construction},\ }\href {https://doi.org/10.1103/PhysRevA.62.052316} {\bibfield  {journal} {\bibinfo  {journal} {Physical Review A}\ }\textbf {\bibinfo {volume} {62}},\ \bibinfo {pages} {052316} (\bibinfo {year} {2000})}\BibitemShut {NoStop}%
\bibitem [{\citenamefont {Levine}\ \emph {et~al.}(2019)\citenamefont {Levine}, \citenamefont {Keesling}, \citenamefont {Semeghini}, \citenamefont {Omran}, \citenamefont {Wang}, \citenamefont {Ebadi}, \citenamefont {Bernien}, \citenamefont {Greiner}, \citenamefont {Vuleti{\'c}}, \citenamefont {Pichler},\ and\ \citenamefont {Lukin}}]{levine_parallel_2019}%
  \BibitemOpen
  \bibfield  {author} {\bibinfo {author} {\bibfnamefont {H.}~\bibnamefont {Levine}}, \bibinfo {author} {\bibfnamefont {A.}~\bibnamefont {Keesling}}, \bibinfo {author} {\bibfnamefont {G.}~\bibnamefont {Semeghini}}, \bibinfo {author} {\bibfnamefont {A.}~\bibnamefont {Omran}}, \bibinfo {author} {\bibfnamefont {T.~T.}\ \bibnamefont {Wang}}, \bibinfo {author} {\bibfnamefont {S.}~\bibnamefont {Ebadi}}, \bibinfo {author} {\bibfnamefont {H.}~\bibnamefont {Bernien}}, \bibinfo {author} {\bibfnamefont {M.}~\bibnamefont {Greiner}}, \bibinfo {author} {\bibfnamefont {V.}~\bibnamefont {Vuleti{\'c}}}, \bibinfo {author} {\bibfnamefont {H.}~\bibnamefont {Pichler}},\ and\ \bibinfo {author} {\bibfnamefont {M.~D.}\ \bibnamefont {Lukin}},\ }\bibfield  {title} {\bibinfo {title} {Parallel {{Implementation}} of {{High-Fidelity Multiqubit Gates}} with {{Neutral Atoms}}},\ }\href {https://doi.org/10.1103/PhysRevLett.123.170503} {\bibfield  {journal} {\bibinfo  {journal} {Physical Review Letters}\ }\textbf {\bibinfo {volume} {123}},\
  \bibinfo {pages} {170503} (\bibinfo {year} {2019})}\BibitemShut {NoStop}%
\bibitem [{\citenamefont {Silva}\ \emph {et~al.}(2008)\citenamefont {Silva}, \citenamefont {Magesan}, \citenamefont {Kribs},\ and\ \citenamefont {Emerson}}]{silva_scalable_2008}%
  \BibitemOpen
  \bibfield  {author} {\bibinfo {author} {\bibfnamefont {M.}~\bibnamefont {Silva}}, \bibinfo {author} {\bibfnamefont {E.}~\bibnamefont {Magesan}}, \bibinfo {author} {\bibfnamefont {D.~W.}\ \bibnamefont {Kribs}},\ and\ \bibinfo {author} {\bibfnamefont {J.}~\bibnamefont {Emerson}},\ }\bibfield  {title} {\bibinfo {title} {Scalable protocol for identification of correctable codes},\ }\href {https://doi.org/10.1103/PhysRevA.78.012347} {\bibfield  {journal} {\bibinfo  {journal} {Physical Review A}\ }\textbf {\bibinfo {volume} {78}},\ \bibinfo {pages} {012347} (\bibinfo {year} {2008})}\BibitemShut {NoStop}%
\bibitem [{\citenamefont {Dankert}\ \emph {et~al.}(2009)\citenamefont {Dankert}, \citenamefont {Cleve}, \citenamefont {Emerson},\ and\ \citenamefont {Livine}}]{dankert_exact_2009}%
  \BibitemOpen
  \bibfield  {author} {\bibinfo {author} {\bibfnamefont {C.}~\bibnamefont {Dankert}}, \bibinfo {author} {\bibfnamefont {R.}~\bibnamefont {Cleve}}, \bibinfo {author} {\bibfnamefont {J.}~\bibnamefont {Emerson}},\ and\ \bibinfo {author} {\bibfnamefont {E.}~\bibnamefont {Livine}},\ }\bibfield  {title} {\bibinfo {title} {Exact and approximate unitary 2-designs and their application to fidelity estimation},\ }\href {https://doi.org/10.1103/PhysRevA.80.012304} {\bibfield  {journal} {\bibinfo  {journal} {Physical Review A}\ }\textbf {\bibinfo {volume} {80}},\ \bibinfo {pages} {012304} (\bibinfo {year} {2009})}\BibitemShut {NoStop}%
\bibitem [{\citenamefont {Emerson}\ \emph {et~al.}(2007)\citenamefont {Emerson}, \citenamefont {Silva}, \citenamefont {Moussa}, \citenamefont {Ryan}, \citenamefont {Laforest}, \citenamefont {Baugh}, \citenamefont {Cory},\ and\ \citenamefont {Laflamme}}]{emerson_symmetrized_2007}%
  \BibitemOpen
  \bibfield  {author} {\bibinfo {author} {\bibfnamefont {J.}~\bibnamefont {Emerson}}, \bibinfo {author} {\bibfnamefont {M.}~\bibnamefont {Silva}}, \bibinfo {author} {\bibfnamefont {O.}~\bibnamefont {Moussa}}, \bibinfo {author} {\bibfnamefont {C.}~\bibnamefont {Ryan}}, \bibinfo {author} {\bibfnamefont {M.}~\bibnamefont {Laforest}}, \bibinfo {author} {\bibfnamefont {J.}~\bibnamefont {Baugh}}, \bibinfo {author} {\bibfnamefont {D.~G.}\ \bibnamefont {Cory}},\ and\ \bibinfo {author} {\bibfnamefont {R.}~\bibnamefont {Laflamme}},\ }\bibfield  {title} {\bibinfo {title} {Symmetrized {{Characterization}} of {{Noisy Quantum Processes}}},\ }\href {https://doi.org/10.1126/science.1145699} {\bibfield  {journal} {\bibinfo  {journal} {Science}\ }\textbf {\bibinfo {volume} {317}},\ \bibinfo {pages} {1893} (\bibinfo {year} {2007})}\BibitemShut {NoStop}%
\bibitem [{\citenamefont {DiVincenzo}\ \emph {et~al.}(2002)\citenamefont {DiVincenzo}, \citenamefont {Leung},\ and\ \citenamefont {Terhal}}]{divincenzo_quantum_2002}%
  \BibitemOpen
  \bibfield  {author} {\bibinfo {author} {\bibfnamefont {D.}~\bibnamefont {DiVincenzo}}, \bibinfo {author} {\bibfnamefont {D.}~\bibnamefont {Leung}},\ and\ \bibinfo {author} {\bibfnamefont {B.}~\bibnamefont {Terhal}},\ }\bibfield  {title} {\bibinfo {title} {Quantum data hiding},\ }\href {https://doi.org/10.1109/18.985948} {\bibfield  {journal} {\bibinfo  {journal} {IEEE Transactions on Information Theory}\ }\textbf {\bibinfo {volume} {48}},\ \bibinfo {pages} {580} (\bibinfo {year} {2002})}\BibitemShut {NoStop}%
\bibitem [{\citenamefont {Beale}\ and\ \citenamefont {Wallman}(2023)}]{beale2023randomized}%
  \BibitemOpen
  \bibfield  {author} {\bibinfo {author} {\bibfnamefont {S.~J.}\ \bibnamefont {Beale}}\ and\ \bibinfo {author} {\bibfnamefont {J.~J.}\ \bibnamefont {Wallman}},\ }\bibfield  {title} {\bibinfo {title} {Randomized compiling in fault-tolerant quantum computation},\ }\href@noop {} {\bibfield  {journal} {\bibinfo  {journal} {arXiv preprint arXiv:2306.13752}\ } (\bibinfo {year} {2023})}\BibitemShut {NoStop}%
\bibitem [{\citenamefont {Raussendorf}\ and\ \citenamefont {Briegel}(2001)}]{raussendorf_one-way_2001}%
  \BibitemOpen
  \bibfield  {author} {\bibinfo {author} {\bibfnamefont {R.}~\bibnamefont {Raussendorf}}\ and\ \bibinfo {author} {\bibfnamefont {H.~J.}\ \bibnamefont {Briegel}},\ }\bibfield  {title} {\bibinfo {title} {A {{One-Way Quantum Computer}}},\ }\href {https://doi.org/10.1103/PhysRevLett.86.5188} {\bibfield  {journal} {\bibinfo  {journal} {Physical Review Letters}\ }\textbf {\bibinfo {volume} {86}},\ \bibinfo {pages} {5188} (\bibinfo {year} {2001})}\BibitemShut {NoStop}%
\bibitem [{\citenamefont {Claes}\ \emph {et~al.}(2023)\citenamefont {Claes}, \citenamefont {Bourassa},\ and\ \citenamefont {Puri}}]{claes_tailored_2023}%
  \BibitemOpen
  \bibfield  {author} {\bibinfo {author} {\bibfnamefont {J.}~\bibnamefont {Claes}}, \bibinfo {author} {\bibfnamefont {J.~E.}\ \bibnamefont {Bourassa}},\ and\ \bibinfo {author} {\bibfnamefont {S.}~\bibnamefont {Puri}},\ }\bibfield  {title} {\bibinfo {title} {Tailored cluster states with high threshold under biased noise},\ }\href {https://doi.org/10.1038/s41534-023-00677-w} {\bibfield  {journal} {\bibinfo  {journal} {npj Quantum Information}\ }\textbf {\bibinfo {volume} {9}},\ \bibinfo {pages} {1} (\bibinfo {year} {2023})},\ \bibinfo {note} {publisher: Nature Publishing Group}\BibitemShut {NoStop}%
\bibitem [{\citenamefont {Yoder}\ and\ \citenamefont {Kim}(2017)}]{yoder_surface_2017}%
  \BibitemOpen
  \bibfield  {author} {\bibinfo {author} {\bibfnamefont {T.~J.}\ \bibnamefont {Yoder}}\ and\ \bibinfo {author} {\bibfnamefont {I.~H.}\ \bibnamefont {Kim}},\ }\bibfield  {title} {\bibinfo {title} {The surface code with a twist},\ }\href {https://doi.org/10.22331/q-2017-04-25-2} {\bibfield  {journal} {\bibinfo  {journal} {Quantum}\ }\textbf {\bibinfo {volume} {1}},\ \bibinfo {pages} {2} (\bibinfo {year} {2017})}\BibitemShut {NoStop}%
\bibitem [{\citenamefont {Graham}\ \emph {et~al.}(2022)\citenamefont {Graham}, \citenamefont {Song}, \citenamefont {Scott}, \citenamefont {Poole}, \citenamefont {Phuttitarn}, \citenamefont {Jooya}, \citenamefont {Eichler}, \citenamefont {Jiang}, \citenamefont {Marra}, \citenamefont {Grinkemeyer}, \citenamefont {Kwon}, \citenamefont {Ebert}, \citenamefont {Cherek}, \citenamefont {Lichtman}, \citenamefont {Gillette}, \citenamefont {Gilbert}, \citenamefont {Bowman}, \citenamefont {Ballance}, \citenamefont {Campbell}, \citenamefont {Dahl}, \citenamefont {Crawford}, \citenamefont {Blunt}, \citenamefont {Rogers}, \citenamefont {Noel},\ and\ \citenamefont {Saffman}}]{graham_multi-qubit_2022}%
  \BibitemOpen
  \bibfield  {author} {\bibinfo {author} {\bibfnamefont {T.~M.}\ \bibnamefont {Graham}}, \bibinfo {author} {\bibfnamefont {Y.}~\bibnamefont {Song}}, \bibinfo {author} {\bibfnamefont {J.}~\bibnamefont {Scott}}, \bibinfo {author} {\bibfnamefont {C.}~\bibnamefont {Poole}}, \bibinfo {author} {\bibfnamefont {L.}~\bibnamefont {Phuttitarn}}, \bibinfo {author} {\bibfnamefont {K.}~\bibnamefont {Jooya}}, \bibinfo {author} {\bibfnamefont {P.}~\bibnamefont {Eichler}}, \bibinfo {author} {\bibfnamefont {X.}~\bibnamefont {Jiang}}, \bibinfo {author} {\bibfnamefont {A.}~\bibnamefont {Marra}}, \bibinfo {author} {\bibfnamefont {B.}~\bibnamefont {Grinkemeyer}}, \bibinfo {author} {\bibfnamefont {M.}~\bibnamefont {Kwon}}, \bibinfo {author} {\bibfnamefont {M.}~\bibnamefont {Ebert}}, \bibinfo {author} {\bibfnamefont {J.}~\bibnamefont {Cherek}}, \bibinfo {author} {\bibfnamefont {M.~T.}\ \bibnamefont {Lichtman}}, \bibinfo {author} {\bibfnamefont {M.}~\bibnamefont {Gillette}}, \bibinfo {author} {\bibfnamefont {J.}~\bibnamefont
  {Gilbert}}, \bibinfo {author} {\bibfnamefont {D.}~\bibnamefont {Bowman}}, \bibinfo {author} {\bibfnamefont {T.}~\bibnamefont {Ballance}}, \bibinfo {author} {\bibfnamefont {C.}~\bibnamefont {Campbell}}, \bibinfo {author} {\bibfnamefont {E.~D.}\ \bibnamefont {Dahl}}, \bibinfo {author} {\bibfnamefont {O.}~\bibnamefont {Crawford}}, \bibinfo {author} {\bibfnamefont {N.~S.}\ \bibnamefont {Blunt}}, \bibinfo {author} {\bibfnamefont {B.}~\bibnamefont {Rogers}}, \bibinfo {author} {\bibfnamefont {T.}~\bibnamefont {Noel}},\ and\ \bibinfo {author} {\bibfnamefont {M.}~\bibnamefont {Saffman}},\ }\bibfield  {title} {\bibinfo {title} {Multi-qubit entanglement and algorithms on a neutral-atom quantum computer},\ }\href {https://doi.org/10.1038/s41586-022-04603-6} {\bibfield  {journal} {\bibinfo  {journal} {Nature}\ }\textbf {\bibinfo {volume} {604}},\ \bibinfo {pages} {457} (\bibinfo {year} {2022})}\BibitemShut {NoStop}%
\bibitem [{\citenamefont {Hastings}\ and\ \citenamefont {Haah}(2021)}]{hastings_dynamically_2021}%
  \BibitemOpen
  \bibfield  {author} {\bibinfo {author} {\bibfnamefont {M.~B.}\ \bibnamefont {Hastings}}\ and\ \bibinfo {author} {\bibfnamefont {J.}~\bibnamefont {Haah}},\ }\bibfield  {title} {\bibinfo {title} {Dynamically {{Generated Logical Qubits}}},\ }\href {https://doi.org/10.22331/q-2021-10-19-564} {\bibfield  {journal} {\bibinfo  {journal} {Quantum}\ }\textbf {\bibinfo {volume} {5}},\ \bibinfo {pages} {564} (\bibinfo {year} {2021})}\BibitemShut {NoStop}%
\bibitem [{\citenamefont {Berthusen}\ \emph {et~al.}(2025)\citenamefont {Berthusen}, \citenamefont {Devulapalli}, \citenamefont {Schoute}, \citenamefont {Childs}, \citenamefont {Gullans}, \citenamefont {Gorshkov},\ and\ \citenamefont {Gottesman}}]{berthusen_toward_2025}%
  \BibitemOpen
  \bibfield  {author} {\bibinfo {author} {\bibfnamefont {N.}~\bibnamefont {Berthusen}}, \bibinfo {author} {\bibfnamefont {D.}~\bibnamefont {Devulapalli}}, \bibinfo {author} {\bibfnamefont {E.}~\bibnamefont {Schoute}}, \bibinfo {author} {\bibfnamefont {A.~M.}\ \bibnamefont {Childs}}, \bibinfo {author} {\bibfnamefont {M.~J.}\ \bibnamefont {Gullans}}, \bibinfo {author} {\bibfnamefont {A.~V.}\ \bibnamefont {Gorshkov}},\ and\ \bibinfo {author} {\bibfnamefont {D.}~\bibnamefont {Gottesman}},\ }\bibfield  {title} {\bibinfo {title} {Toward a {{2D Local Implementation}} of {{Quantum Low-Density Parity-Check Codes}}},\ }\href {https://doi.org/10.1103/PRXQuantum.6.010306} {\bibfield  {journal} {\bibinfo  {journal} {PRX Quantum}\ }\textbf {\bibinfo {volume} {6}},\ \bibinfo {pages} {010306} (\bibinfo {year} {2025})}\BibitemShut {NoStop}%
\bibitem [{\citenamefont {Choe}\ and\ \citenamefont {Koenig}(2024)}]{choe_how_2024}%
  \BibitemOpen
  \bibfield  {author} {\bibinfo {author} {\bibfnamefont {S.~H.}\ \bibnamefont {Choe}}\ and\ \bibinfo {author} {\bibfnamefont {R.}~\bibnamefont {Koenig}},\ }\href {https://doi.org/10.48550/arXiv.2402.13863} {\bibinfo {title} {How to fault-tolerantly realize any quantum circuit with local operations}} (\bibinfo {year} {2024}),\ \Eprint {https://arxiv.org/abs/2402.13863} {arXiv:2402.13863 [quant-ph]} \BibitemShut {NoStop}%
\bibitem [{\citenamefont {Arute}\ \emph {et~al.}(2019)\citenamefont {Arute}, \citenamefont {Arya}, \citenamefont {Babbush}, \citenamefont {Bacon}, \citenamefont {Bardin}, \citenamefont {Barends}, \citenamefont {Biswas}, \citenamefont {Boixo}, \citenamefont {Brandao}, \citenamefont {Buell}, \citenamefont {Burkett}, \citenamefont {Chen}, \citenamefont {Chen}, \citenamefont {Chiaro}, \citenamefont {Collins}, \citenamefont {Courtney}, \citenamefont {Dunsworth}, \citenamefont {Farhi}, \citenamefont {Foxen}, \citenamefont {Fowler}, \citenamefont {Gidney}, \citenamefont {Giustina}, \citenamefont {Graff}, \citenamefont {Guerin}, \citenamefont {Habegger}, \citenamefont {Harrigan}, \citenamefont {Hartmann}, \citenamefont {Ho}, \citenamefont {Hoffmann}, \citenamefont {Huang}, \citenamefont {Humble}, \citenamefont {Isakov}, \citenamefont {Jeffrey}, \citenamefont {Jiang}, \citenamefont {Kafri}, \citenamefont {Kechedzhi}, \citenamefont {Kelly}, \citenamefont {Klimov}, \citenamefont {Knysh}, \citenamefont {Korotkov},
  \citenamefont {Kostritsa}, \citenamefont {Landhuis}, \citenamefont {Lindmark}, \citenamefont {Lucero}, \citenamefont {Lyakh}, \citenamefont {Mandr{\`a}}, \citenamefont {McClean}, \citenamefont {McEwen}, \citenamefont {Megrant}, \citenamefont {Mi}, \citenamefont {Michielsen}, \citenamefont {Mohseni}, \citenamefont {Mutus}, \citenamefont {Naaman}, \citenamefont {Neeley}, \citenamefont {Neill}, \citenamefont {Niu}, \citenamefont {Ostby}, \citenamefont {Petukhov}, \citenamefont {Platt}, \citenamefont {Quintana}, \citenamefont {Rieffel}, \citenamefont {Roushan}, \citenamefont {Rubin}, \citenamefont {Sank}, \citenamefont {Satzinger}, \citenamefont {Smelyanskiy}, \citenamefont {Sung}, \citenamefont {Trevithick}, \citenamefont {Vainsencher}, \citenamefont {Villalonga}, \citenamefont {White}, \citenamefont {Yao}, \citenamefont {Yeh}, \citenamefont {Zalcman}, \citenamefont {Neven},\ and\ \citenamefont {Martinis}}]{arute_quantum_2019}%
  \BibitemOpen
  \bibfield  {author} {\bibinfo {author} {\bibfnamefont {F.}~\bibnamefont {Arute}}, \bibinfo {author} {\bibfnamefont {K.}~\bibnamefont {Arya}}, \bibinfo {author} {\bibfnamefont {R.}~\bibnamefont {Babbush}}, \bibinfo {author} {\bibfnamefont {D.}~\bibnamefont {Bacon}}, \bibinfo {author} {\bibfnamefont {J.~C.}\ \bibnamefont {Bardin}}, \bibinfo {author} {\bibfnamefont {R.}~\bibnamefont {Barends}}, \bibinfo {author} {\bibfnamefont {R.}~\bibnamefont {Biswas}}, \bibinfo {author} {\bibfnamefont {S.}~\bibnamefont {Boixo}}, \bibinfo {author} {\bibfnamefont {F.~G. S.~L.}\ \bibnamefont {Brandao}}, \bibinfo {author} {\bibfnamefont {D.~A.}\ \bibnamefont {Buell}}, \bibinfo {author} {\bibfnamefont {B.}~\bibnamefont {Burkett}}, \bibinfo {author} {\bibfnamefont {Y.}~\bibnamefont {Chen}}, \bibinfo {author} {\bibfnamefont {Z.}~\bibnamefont {Chen}}, \bibinfo {author} {\bibfnamefont {B.}~\bibnamefont {Chiaro}}, \bibinfo {author} {\bibfnamefont {R.}~\bibnamefont {Collins}}, \bibinfo {author} {\bibfnamefont {W.}~\bibnamefont
  {Courtney}}, \bibinfo {author} {\bibfnamefont {A.}~\bibnamefont {Dunsworth}}, \bibinfo {author} {\bibfnamefont {E.}~\bibnamefont {Farhi}}, \bibinfo {author} {\bibfnamefont {B.}~\bibnamefont {Foxen}}, \bibinfo {author} {\bibfnamefont {A.}~\bibnamefont {Fowler}}, \bibinfo {author} {\bibfnamefont {C.}~\bibnamefont {Gidney}}, \bibinfo {author} {\bibfnamefont {M.}~\bibnamefont {Giustina}}, \bibinfo {author} {\bibfnamefont {R.}~\bibnamefont {Graff}}, \bibinfo {author} {\bibfnamefont {K.}~\bibnamefont {Guerin}}, \bibinfo {author} {\bibfnamefont {S.}~\bibnamefont {Habegger}}, \bibinfo {author} {\bibfnamefont {M.~P.}\ \bibnamefont {Harrigan}}, \bibinfo {author} {\bibfnamefont {M.~J.}\ \bibnamefont {Hartmann}}, \bibinfo {author} {\bibfnamefont {A.}~\bibnamefont {Ho}}, \bibinfo {author} {\bibfnamefont {M.}~\bibnamefont {Hoffmann}}, \bibinfo {author} {\bibfnamefont {T.}~\bibnamefont {Huang}}, \bibinfo {author} {\bibfnamefont {T.~S.}\ \bibnamefont {Humble}}, \bibinfo {author} {\bibfnamefont {S.~V.}\ \bibnamefont
  {Isakov}}, \bibinfo {author} {\bibfnamefont {E.}~\bibnamefont {Jeffrey}}, \bibinfo {author} {\bibfnamefont {Z.}~\bibnamefont {Jiang}}, \bibinfo {author} {\bibfnamefont {D.}~\bibnamefont {Kafri}}, \bibinfo {author} {\bibfnamefont {K.}~\bibnamefont {Kechedzhi}}, \bibinfo {author} {\bibfnamefont {J.}~\bibnamefont {Kelly}}, \bibinfo {author} {\bibfnamefont {P.~V.}\ \bibnamefont {Klimov}}, \bibinfo {author} {\bibfnamefont {S.}~\bibnamefont {Knysh}}, \bibinfo {author} {\bibfnamefont {A.}~\bibnamefont {Korotkov}}, \bibinfo {author} {\bibfnamefont {F.}~\bibnamefont {Kostritsa}}, \bibinfo {author} {\bibfnamefont {D.}~\bibnamefont {Landhuis}}, \bibinfo {author} {\bibfnamefont {M.}~\bibnamefont {Lindmark}}, \bibinfo {author} {\bibfnamefont {E.}~\bibnamefont {Lucero}}, \bibinfo {author} {\bibfnamefont {D.}~\bibnamefont {Lyakh}}, \bibinfo {author} {\bibfnamefont {S.}~\bibnamefont {Mandr{\`a}}}, \bibinfo {author} {\bibfnamefont {J.~R.}\ \bibnamefont {McClean}}, \bibinfo {author} {\bibfnamefont {M.}~\bibnamefont
  {McEwen}}, \bibinfo {author} {\bibfnamefont {A.}~\bibnamefont {Megrant}}, \bibinfo {author} {\bibfnamefont {X.}~\bibnamefont {Mi}}, \bibinfo {author} {\bibfnamefont {K.}~\bibnamefont {Michielsen}}, \bibinfo {author} {\bibfnamefont {M.}~\bibnamefont {Mohseni}}, \bibinfo {author} {\bibfnamefont {J.}~\bibnamefont {Mutus}}, \bibinfo {author} {\bibfnamefont {O.}~\bibnamefont {Naaman}}, \bibinfo {author} {\bibfnamefont {M.}~\bibnamefont {Neeley}}, \bibinfo {author} {\bibfnamefont {C.}~\bibnamefont {Neill}}, \bibinfo {author} {\bibfnamefont {M.~Y.}\ \bibnamefont {Niu}}, \bibinfo {author} {\bibfnamefont {E.}~\bibnamefont {Ostby}}, \bibinfo {author} {\bibfnamefont {A.}~\bibnamefont {Petukhov}}, \bibinfo {author} {\bibfnamefont {J.~C.}\ \bibnamefont {Platt}}, \bibinfo {author} {\bibfnamefont {C.}~\bibnamefont {Quintana}}, \bibinfo {author} {\bibfnamefont {E.~G.}\ \bibnamefont {Rieffel}}, \bibinfo {author} {\bibfnamefont {P.}~\bibnamefont {Roushan}}, \bibinfo {author} {\bibfnamefont {N.~C.}\ \bibnamefont {Rubin}},
  \bibinfo {author} {\bibfnamefont {D.}~\bibnamefont {Sank}}, \bibinfo {author} {\bibfnamefont {K.~J.}\ \bibnamefont {Satzinger}}, \bibinfo {author} {\bibfnamefont {V.}~\bibnamefont {Smelyanskiy}}, \bibinfo {author} {\bibfnamefont {K.~J.}\ \bibnamefont {Sung}}, \bibinfo {author} {\bibfnamefont {M.~D.}\ \bibnamefont {Trevithick}}, \bibinfo {author} {\bibfnamefont {A.}~\bibnamefont {Vainsencher}}, \bibinfo {author} {\bibfnamefont {B.}~\bibnamefont {Villalonga}}, \bibinfo {author} {\bibfnamefont {T.}~\bibnamefont {White}}, \bibinfo {author} {\bibfnamefont {Z.~J.}\ \bibnamefont {Yao}}, \bibinfo {author} {\bibfnamefont {P.}~\bibnamefont {Yeh}}, \bibinfo {author} {\bibfnamefont {A.}~\bibnamefont {Zalcman}}, \bibinfo {author} {\bibfnamefont {H.}~\bibnamefont {Neven}},\ and\ \bibinfo {author} {\bibfnamefont {J.~M.}\ \bibnamefont {Martinis}},\ }\bibfield  {title} {\bibinfo {title} {Quantum supremacy using a programmable superconducting processor},\ }\href {https://doi.org/10.1038/s41586-019-1666-5} {\bibfield
  {journal} {\bibinfo  {journal} {Nature}\ }\textbf {\bibinfo {volume} {574}},\ \bibinfo {pages} {505} (\bibinfo {year} {2019})}\BibitemShut {NoStop}%
\bibitem [{\citenamefont {Greenbaum}(2015)}]{greenbaum2015introductionquantumgateset}%
  \BibitemOpen
  \bibfield  {author} {\bibinfo {author} {\bibfnamefont {D.}~\bibnamefont {Greenbaum}},\ }\href {https://arxiv.org/abs/1509.02921} {\bibinfo {title} {Introduction to quantum gate set tomography}} (\bibinfo {year} {2015}),\ \Eprint {https://arxiv.org/abs/1509.02921} {arXiv:1509.02921 [quant-ph]} \BibitemShut {NoStop}%
\end{thebibliography}%

\clearpage
\appendix

\section{Proof of lower and upper bounds}
\label{app:bounds-proof}
Now we provide a proof for theorem~\ref{thm:bound-on-error-growth-per-timestep}. The main idea of the proof is to rewrite $\overline{\mathcal{N}_t}$ as some linear map $\mathcal{P}_{(t, t-1)}$ composed with $\overline{\mathcal{N}_{t-1}}$, and then find an approximation for $\mathcal{P}_{(t, t-1)}$. The map $\mathcal{P}_{(t, t-1)}$ will tell us how much the diagonal elements $\left[\overline{\mathcal{N}_t}\right]_{P,P}$ change from $t-1$ to $t$. Our proof is organized into a sequence of lemmas. Instead of directly using the definition of the unconditional average channel $\overline{\mathcal{N}_t}$ in Eq.~\eqref{eq:single-qubit-teleportation-average-error-channel}, we first derive an exact iterative formula that relates the conditional average of $\mathcal{N}_t$ at time $t$ to the that at time $t-1$ (lemma~\ref{lem:iterative-expressions}). From this formula, we extract a relation between $\overline{\mathcal{N}_t}$ and $\overline{\mathcal{N}_{t-1}}$, which confirms that the leading order term in $\mathcal{P}_{(t, t-1)}$ is the Pauli twirl of the physical error channel, as well as provides an exact correction term (lemma~\ref{lem:leading-approximation}). We then informally refine the expression for $\mathcal{P}_{(t, t-1)}$ to include a correction term coming from interference between neighboring timesteps (lemma~\ref{lem:next-approximation}). Finally, we bound the errors in the informal expression, which concludes the proof of theorem~\ref{thm:bound-on-error-growth-per-timestep}.

\begin{lemma}
    \label{lem:iterative-expressions}
    Let $\overline{\mathcal{N}_t|\pf{t}} $ denote the average of $\mathcal{N}_t$ conditioned on the most recent Pauli frame being $\pf{t}$,
    \begin{align}
        \overline{\mathcal{N}_t|\pf{t}} &= \!\!\!\!\!\sum_{\pf{t-1}, \cdots, \pf{1}}
        \!\!\!\!\!\mathcal{E}_t'\cdots \mathcal{E}_1'
        \Pr(\pf{t-1}, \cdots, \pf{1}|\pf{t}). \label{eq:single-qubit-teleportation-chain-conditional-average-channel}
    \end{align}
    Then for all $t > 1$ we have
    \begin{align}
        \overline{\mathcal{N}_t|\pf{t}} &= \sum_{\pf{t-1}}\Pr(\pf{t-1}| \pf{t})\ \mathcal{E}_t'\ \overline{\mathcal{N}_{t-1}|\pf{t-1}},
        \label{eq:single-qubit-teleportation-chain-conditional-average-channel-iterative}
    \end{align}
    and for all $t$ we have
    \begin{align}
        \overline{\mathcal{N}_t} &= \sum_{\pf{t}}\Pr(\pf{t})\ \overline{\mathcal{N}_t|\pf{t}}. \label{eq:single-qubit-teleportation-chain-average-conditional-channel}
    \end{align}
\end{lemma}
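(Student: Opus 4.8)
The plan is to recognize that this lemma is a statement about conditional expectations of a process driven by a Markov chain, so the proof reduces to the chain rule of probability together with the Markov property of the Pauli-frame sequence. The three facts I would isolate at the outset are: (i) the cumulative channel factors as $\mathcal{N}_t = \mathcal{E}_t'\,\mathcal{N}_{t-1}$ from Eq.~\eqref{eq:single-qubit-teleportation-cumulative-error}; (ii) each transformed error $\mathcal{E}_t' = \pf{t}^{-1}\mathcal{E}_t\,\pf{t}$ in Eq.~\eqref{eq:single-qubit-teleportation-transformed-error} depends on the measurement record \emph{only} through the single frame $\pf{t}$; and (iii) the frames form a Markov chain, since Eq.~\eqref{eq:single-qubit-teleportation-pauli-frame:a} gives $\pf{t} = \mathcal{H}\mathcal{Z}^{m_t}\pf{t-1}$, so $\pf{t}$ is a deterministic function of $\pf{t-1}$ and the fresh bit $m_t$, which is independent of $m_1,\dots,m_{t-1}$ and hence of $\pf{1},\dots,\pf{t-1}$.

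For the iterative formula Eq.~\eqref{eq:single-qubit-teleportation-chain-conditional-average-channel-iterative} I would start from the definition in Eq.~\eqref{eq:single-qubit-teleportation-chain-conditional-average-channel}, pull the factor $\mathcal{E}_t'$ out of the sum using fact (ii) since $\pf{t}$ is held fixed, and then split the remaining sum by conditioning on $\pf{t-1}$. The chain rule of probability rewrites the conditional joint distribution as
\begin{align*}
\Pr(\pf{t-1},\dots,\pf{1}\mid\pf{t}) = \Pr(\pf{t-2},\dots,\pf{1}\mid\pf{t-1},\pf{t})\,\Pr(\pf{t-1}\mid\pf{t}),
\end{align*}
and the Markov property from fact (iii) collapses the first factor to $\Pr(\pf{t-2},\dots,\pf{1}\mid\pf{t-1})$, erasing the dependence on the future frame $\pf{t}$. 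The inner sum is then exactly the definition of $\overline{\mathcal{N}_{t-1}\mid\pf{t-1}}$, which gives the claimed recursion; reinserting $\mathcal{E}_t'$ into the $\pf{t-1}$-sum is harmless because it is constant in $\pf{t-1}$.

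The total-probability formula Eq.~\eqref{eq:single-qubit-teleportation-chain-average-conditional-channel} is the easier half: starting from the definition of $\overline{\mathcal{N}_t}$ in Eq.~\eqref{eq:single-qubit-teleportation-average-error-channel}, I would factor the joint distribution as $\Pr(\pf{t},\dots,\pf{1}) = \Pr(\pf{t-1},\dots,\pf{1}\mid\pf{t})\,\Pr(\pf{t})$ and recognize the inner sum as $\overline{\mathcal{N}_t\mid\pf{t}}$. I expect the only real content---and hence the step to justify most carefully---to be the Markov collapse in the iterative formula: one must verify that conditioning on the present frame $\pf{t-1}$ renders the past frames independent of the future frame $\pf{t}$. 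This follows from fact (iii), but it should be stated explicitly because the frames are \emph{not} mutually independent; indeed, the discussion surrounding Eq.~\eqref{eq:single-qubit-teleportation-pauli-frame-distribution} emphasizes their short-range correlations, so it is precisely the one-step Markov structure (and not full independence) that makes the recursion close.
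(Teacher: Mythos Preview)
Your proposal is correct and follows essentially the same approach as the paper: both proofs reduce the recursion in Eq.~\eqref{eq:single-qubit-teleportation-chain-conditional-average-channel-iterative} to the chain rule of probability plus the Markov property of the Pauli-frame sequence, and both obtain Eq.~\eqref{eq:single-qubit-teleportation-chain-average-conditional-channel} directly from the definition of conditional probability. The only cosmetic difference is that the paper writes out the Markov step as a ratio $\Pr(\pf{t}\mid\pf{t-1},\dots,\pf{1})/\Pr(\pf{t}\mid\pf{t-1})=1$, whereas you invoke the equivalent past-given-present form $\Pr(\pf{t-2},\dots,\pf{1}\mid\pf{t-1},\pf{t})=\Pr(\pf{t-2},\dots,\pf{1}\mid\pf{t-1})$ directly.
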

\begin{proof}[Proof of lemma~\ref{lem:iterative-expressions}]
    Eq.~\eqref{eq:single-qubit-teleportation-chain-average-conditional-channel} follows from the definition of conditional probabilities. To show Eq.~\eqref{eq:single-qubit-teleportation-chain-conditional-average-channel-iterative} from Eq.~\eqref{eq:single-qubit-teleportation-chain-conditional-average-channel}, we notice that by definition of conditional probabilities,
    \begin{align*}
        &\Pr(\pf{t-1}, \cdots, \pf{1}|\pf{t}) \nonumber \\
        &= \Pr(\pf{t-1}|\pf{t})\ \Pr(\pf{t-2}, \cdots, \pf{1}|\pf{t}, \pf{t-1}) \\
        &=\Pr(\pf{t-1}|\pf{t})\ \frac{\Pr(\pf{t}, \cdots, \pf{1})}{\Pr(\pf{t}, \pf{t-1})} \\
        &=\Pr(\pf{t-1}|\pf{t}) \nonumber \\
        &\quad\times \frac{\Pr(\pf{t}|\pf{t-1}, \cdots\!, \pf{1})\Pr(\pf{t-1}, \cdots, \pf{1})}{\Pr(\pf{t}|\pf{t-1})\Pr(\pf{t-1})} \\
        &=\Pr(\pf{t-1}|\pf{t})\ \Pr(\pf{t-2}, \cdots\!, \pf{1}|\pf{t-1}) \nonumber \\
        &\quad\times\frac{\Pr(\pf{t}|\pf{t-1}, \cdots\!, \pf{1})}{\Pr(\pf{t}|\pf{t-1})}.
    \end{align*}
    The fraction on the last line equals one because, from Eq.~\eqref{eq:single-qubit-teleportation-pauli-frame:a}, the sequence of Pauli frames is Markovian. Substituting this factorization into Eq.~\eqref{eq:single-qubit-teleportation-chain-conditional-average-channel} and identifying $\overline{\mathcal{N}_{t-1}|\pf{t-1}}$ in the resulting expression gives Eq.~\eqref{eq:single-qubit-teleportation-chain-average-conditional-channel}.
\end{proof}

\begin{lemma}
    \label{lem:leading-approximation}
    For all $t > 1$, $\overline{\mathcal{N}_t}$ can be related to $\overline{\mathcal{N}_{t-1}}$ using the following set of equations:
    \begin{align}
        \overline{\mathcal{N}_t} &= (\mathcal{H}^t\mathcal{E}_{I,t}\mathcal{H}^t)\ \overline{\mathcal{N}_{t-1}} + (\mathcal{H}^t\mathcal{E}_{X,t}\mathcal{H}^t)\ \delta\mathcal{N}_{t-1}, \label{eq:single-qubit-teleportation-chain-iterative-average-error-channel} \\
        \delta\mathcal{N}_t &= (\mathcal{H}^t\mathcal{E}_{Z,t}\mathcal{H}^t)\ \overline{\mathcal{N}_{t-1}} + (\mathcal{H}^t\mathcal{E}_{Y,t}\mathcal{H}^t)\ \delta\mathcal{N}_{t-1}. \label{eq:single-qubit-teleportation-chain-iterative-conditional-average-error-channel-bias}
    \end{align}
    where $\delta\mathcal{N}_t$ is the following linear combination of conditional channels,
    \begin{align}
        \delta\mathcal{N}_t \!&= \!\begin{cases}
            \!\frac{1}{2}(\overline{\mathcal{N}_1|\mathcal{H}} - \overline{\mathcal{N}_1|\mathcal{H}\mathcal{Z}}) & t=1\\
            \!\frac{1}{4}(
                \overline{\mathcal{N}_t|\mathcal{I}} - \overline{\mathcal{N}_t|\mathcal{X}} - \overline{\mathcal{N}_t|\mathcal{Y}} + \overline{\mathcal{N}_t|\mathcal{Z}} 
            ) & \text{even }t \\
            \!\frac{1}{4}(
                \overline{\mathcal{N}_t|\mathcal{H}} + \overline{\mathcal{N}_t|\mathcal{H}\mathcal{X}} - \overline{\mathcal{N}_t|\mathcal{H}\mathcal{Y}} - \overline{\mathcal{N}_t| \mathcal{H}\mathcal{Z}}
            ) & \text{odd }t>1
        \end{cases}. \label{eq:delta-N-definition}
    \end{align}
\end{lemma}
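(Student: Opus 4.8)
The plan is to verify the two coupled recursions directly from the exact iterative formula of Lemma~\ref{lem:iterative-expressions}, treating $\delta\mathcal{N}_t$ as an auxiliary ``coherence-bias'' bookkeeping quantity whose definition in Eq.~\eqref{eq:delta-N-definition} is engineered precisely to make the recursion close. The starting point is Eq.~\eqref{eq:single-qubit-teleportation-chain-conditional-average-channel-iterative}, $\overline{\mathcal{N}_t|\pf{t}} = \sum_{\pf{t-1}}\Pr(\pf{t-1}|\pf{t})\,\mathcal{E}_t'\,\overline{\mathcal{N}_{t-1}|\pf{t-1}}$. The crucial first observation is that $\mathcal{E}_t' = \pf{t}^{-1}\mathcal{E}_t\pf{t}$ depends only on $\pf{t}$ and not on the summation variable $\pf{t-1}$, so it factors out of the sum; what remains is a partial average of $\overline{\mathcal{N}_{t-1}|\pf{t-1}}$ over the $\pf{t-1}$ consistent with the given $\pf{t}$.

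Next I would expand $\mathcal{E}_t'$ using the coherence decomposition $\mathcal{E}_t = \mathcal{E}_{I,t}+\mathcal{E}_{X,t}+\mathcal{E}_{Y,t}+\mathcal{E}_{Z,t}$ together with the sign-flip rule Eq.~\eqref{eq:coherence-transformation}. For even $t$ one has $\pf{t}=\mathcal{P}\in\{\mathcal{I},\mathcal{X},\mathcal{Y},\mathcal{Z}\}$ and $\mathcal{E}_t'=\mathcal{P}\mathcal{E}_t\mathcal{P}$; for odd $t$ one writes $\pf{t}=\mathcal{H}\mathcal{P}$, whence $\mathcal{E}_t'=\mathcal{P}(\mathcal{H}\mathcal{E}_t\mathcal{H})\mathcal{P}=\mathcal{P}\mathcal{E}_t^H\mathcal{P}$, so the same expansion applies to $\mathcal{E}_t^H$, which is why $\mathcal{H}^t(\cdot)\mathcal{H}^t$ appears uniformly in the claimed recursion. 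Because the Hadamard swaps the roles of $X$ and $Z$, the sign carried by $\mathcal{E}_{X,t}$ and $\mathcal{E}_{Z,t}$ under conjugation interchanges between even and odd $t$; this interchange is exactly compensated by the $(-1)^x$ versus $(-1)^z$ weighting built into the even/odd definitions of $\delta\mathcal{N}_t$ in Eq.~\eqref{eq:delta-N-definition}, so that $\mathcal{E}_{X,t}$ always pairs with $\delta\mathcal{N}_{t-1}$ and $\mathcal{E}_{Z,t}$ always pairs with $\overline{\mathcal{N}_{t-1}}$.

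The heart of the argument is the Markov structure. By Eq.~\eqref{eq:single-qubit-teleportation-pauli-frame:a}, $\pf{t}=\mathcal{H}\mathcal{Z}^{m_t}\pf{t-1}$, so each value of $\pf{t}$ is consistent with exactly the two predecessors $\pf{t-1}$ reached by $m_t=0,1$, each with conditional probability $1/2$ (the boundary case $t=2$ degenerates to a single consistent $\pf{1}$ but yields the same grouping). Inverting this relation I would tabulate, for each $\pf{t}$, its pair of predecessors and define the two partial averages $B_{\pm}$ of $\overline{\mathcal{N}_{t-1}|\pf{t-1}}$ over these pairs. The key identity is that these partial averages are precisely $B_{\pm}=\overline{\mathcal{N}_{t-1}}\pm\delta\mathcal{N}_{t-1}$, which holds because the predecessor pairs partition the four values of $\pf{t-1}$ exactly along the sign pattern defining $\delta\mathcal{N}_{t-1}$. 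Substituting the expanded $\mathcal{E}_t'$ and these $B_{\pm}$, then forming the uniform average $\overline{\mathcal{N}_t}=\tfrac14\sum_{\pf{t}}\overline{\mathcal{N}_t|\pf{t}}$ and the signed average defining $\delta\mathcal{N}_t$, all coherence components cancel pairwise except for the surviving combinations, giving Eqs.~\eqref{eq:single-qubit-teleportation-chain-iterative-average-error-channel} and~\eqref{eq:single-qubit-teleportation-chain-iterative-conditional-average-error-channel-bias}.

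I expect the main obstacle to be organizational rather than conceptual: correctly tracking the Hadamard-induced $X\leftrightarrow Z$ exchange through the conjugations, and verifying that the predecessor groupings reproduce exactly the signed combination in the definition of $\delta\mathcal{N}_{t-1}$, including the degenerate base cases $t=1,2$ where $\pf{1}$ ranges over only two frames. Once the pairing $B_{\pm}=\overline{\mathcal{N}_{t-1}}\pm\delta\mathcal{N}_{t-1}$ is established, the remaining algebra is a short character-sum computation in which the off-diagonal (coherent) components cancel by the orthogonality of the $\pm$ sign patterns over the four Pauli frames.
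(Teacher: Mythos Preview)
Your proposal is correct and follows essentially the same approach as the paper: both start from Lemma~\ref{lem:iterative-expressions}, enumerate the joint $(\pf{t},\pf{t-1})$ pairs allowed by the Markov relation $\pf{t}=\mathcal{H}\mathcal{Z}^{m_t}\pf{t-1}$, expand $\mathcal{E}_t'$ via the coherence decomposition Eq.~\eqref{eq:coherence-transformation}, and then identify the resulting linear combinations of $\overline{\mathcal{N}_{t-1}|\pf{t-1}}$ as $\overline{\mathcal{N}_{t-1}}$ and $\delta\mathcal{N}_{t-1}$. Your $B_{\pm}=\overline{\mathcal{N}_{t-1}}\pm\delta\mathcal{N}_{t-1}$ framing is a clean organizational device that the paper leaves implicit (it works out $t=2$ by hand and declares the remaining cases analogous), and your handling of the degenerate $t=2$ case---where each $\pf{2}$ has a single predecessor rather than two---is correct.
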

Lemma \ref{lem:leading-approximation} shows that the leading order term for $\mathcal{P}_{(t,t-1)}$ is $\mathcal{H}^t\mathcal{E}_{I,t}\mathcal{H}^t$, which equals the Pauli twirl of $\mathcal{E}_t$ or $\mathcal{H}\mathcal{E}_t\mathcal{H}$. Only this term would exist in Eq.~\eqref{eq:single-qubit-teleportation-chain-iterative-average-error-channel} if randomized compiling is performed at each timestep. The second term in Eq.~\eqref{eq:single-qubit-teleportation-chain-iterative-average-error-channel} is a correction due to the correlation between Pauli frames.

\begin{proof}[Proof of lemma~\ref{lem:leading-approximation}]
    We show the proof for Eq.~\eqref{eq:single-qubit-teleportation-chain-iterative-average-error-channel} for $t=2$. The proofs for Eq.~\eqref{eq:single-qubit-teleportation-chain-iterative-average-error-channel} when $t \neq 2$ and the proof for Eq.~\eqref{eq:single-qubit-teleportation-chain-iterative-conditional-average-error-channel-bias} follow similarly.
    
    \vspace{2mm}
    \noindent\underline{Proof of Eq.~\eqref{eq:single-qubit-teleportation-chain-iterative-average-error-channel} for $t=2$.}
    Substituting Eq.~\eqref{eq:single-qubit-teleportation-chain-conditional-average-channel-iterative} into Eq.~\eqref{eq:single-qubit-teleportation-chain-average-conditional-channel} gives
    \begin{align*}
        \overline{\mathcal{N}_2} &= \sum_{\pf{2}, \pf{1}}
         \Pr(\pf{2}, \pf{1})\ (\pf{2}^{-1}\mathcal{E}_2\pf{2})\ \overline{\mathcal{N}_1|\pf{1}},
    \end{align*}
    where we have set $t=2$, simplified the probabilities, and used Eq.~\eqref{eq:single-qubit-teleportation-transformed-error} for $\mathcal{E}_2'$. We can enumerate four equally probable combinations for $\pf{1}$ and $\pf{2}$. First, we can choose $\pf{1}$ from two equally probable values, $\mathcal{H}$ or $\mathcal{H}\mathcal{Z}$ [Eq.~\eqref{eq:single-qubit-teleportation-chain-uniform-marginal-distribution-one}]. Then, based on the value chosen for $\pf{1}$, we can choose $\pf{2}$ in the two ways allowed by the relation $\pf{2} = \mathcal{H}\mathcal{Z}^{m_2}\pf{1}$ [Eq.~\eqref{eq:single-qubit-teleportation-pauli-frame:a}]. Now we sum over these four combinations,
    \begin{align*}
        \overline{\mathcal{N}_2}
        &= \frac{1}{4}\left[(\mathcal{E}_2 + \mathcal{X}\mathcal{E}_2\mathcal{X})
            \overline{\mathcal{N}_1|\mathcal{H}} + (\mathcal{Z}\mathcal{E}_2\mathcal{Z} + \mathcal{Y}\mathcal{E}_2\mathcal{Y})
            \overline{\mathcal{N}_1|\mathcal{H}\mathcal{Z}}\right] \\
        &= (\mathcal{E}_{I,2} + \mathcal{E}_{X,2})
        \ \overline{\mathcal{N}_1|\mathcal{H}} \big/2 + (\mathcal{E}_{I,2} - \mathcal{E}_{X,2})
        \ \overline{\mathcal{N}_1|\mathcal{H}\mathcal{Z}} \big/2 \\
        &= \mathcal{E}_{I,2}(\overline{\mathcal{N}_1|\mathcal{H}} + \overline{\mathcal{N}_1|\mathcal{H}\mathcal{Z}}) \big/2 + \mathcal{E}_{X,2} (\overline{\mathcal{N}_1|\mathcal{H}} - \overline{\mathcal{N}_1|\mathcal{H}\mathcal{Z}}) \big/2
    \end{align*}
    where plugged in Eq.~\eqref{eq:coherence-transformation} and factored out $\mathcal{E}_{I,2}$ and $\mathcal{E}_{X,2}$. The factor for $\mathcal{E}_{I,2}$ can be identified as $\overline{\mathcal{N}_1}$ [Eq~\eqref{eq:single-qubit-teleportation-chain-average-conditional-channel}], and the factor for $\mathcal{E}_{X,2}$ can be identified as $\delta\mathcal{N}_1$ [Eq.~\eqref{eq:delta-N-definition}]. Therefore,
    \begin{align*}
        \overline{\mathcal{N}_2} = \mathcal{E}_{I,2}\ \overline{\mathcal{N}_1} + \mathcal{E}_{X,2}\ \delta\mathcal{N}_1.
    \end{align*}
    
    \vspace{2mm}
    \noindent\underline{Proof of Eq.~\eqref{eq:single-qubit-teleportation-chain-iterative-average-error-channel} for $t \neq 2$.} The proof is similar to the case $t = 2$ except that there are now four equally probable values for $\pf{t-1}$ [Eqs.~\eqref{eq:single-qubit-teleportation-chain-uniform-marginal-distribution-even}~and~\eqref{eq:single-qubit-teleportation-chain-uniform-marginal-distribution-odd}]. Following the same procedure as case $t=1$ yields
    \begin{align*}
        \overline{\mathcal{N}_t} &= \mathcal{E}_{I,t}\ \overline{\mathcal{N}_{t-1}} + \mathcal{E}_{X,t}\ \delta\mathcal{N}_{t-1} & \text{even }t>2, \\
        \overline{\mathcal{N}_t} &= \mathcal{H}\mathcal{E}_{I,t}\mathcal{H}\ \overline{\mathcal{N}_{t-1}} + \mathcal{H}\mathcal{E}_{X,t}\mathcal{H}\ \delta\mathcal{N}_{t-1} & \text{odd }t>1.
    \end{align*}
    Combining the three cases above proves Eq.~\eqref{eq:single-qubit-teleportation-chain-iterative-average-error-channel}. Next we prove Eq.~\eqref{eq:single-qubit-teleportation-chain-iterative-conditional-average-error-channel-bias}.

    \vspace{3mm}
    \noindent\underline{Proof of Eq.~\eqref{eq:single-qubit-teleportation-chain-iterative-conditional-average-error-channel-bias}.} The definition of $\delta\mathcal{N}_t$ [Eq.~\eqref{eq:delta-N-definition}] is the same as the expression for $\overline{\mathcal{N}_t}$ [Eq.~\eqref{eq:single-qubit-teleportation-chain-average-conditional-channel}] except for minus signs. The same procedure used to show Eq.~\eqref{eq:single-qubit-teleportation-chain-iterative-average-error-channel} can be used to show Eq.~\eqref{eq:single-qubit-teleportation-chain-iterative-conditional-average-error-channel-bias}.
\end{proof}

\begin{lemma}
    \label{lem:next-approximation}
    If the coherent components $\mathcal{E}_{X,t}, \mathcal{E}_{Y,t}, \mathcal{E}_{Z,t}$ of the physical errors are $O(\epsilon)$ where $\epsilon$ is a small positive number, then $\overline{\mathcal{N}_t} = \mathcal{P}_{(t, t-1)}\overline{\mathcal{N}_{t-1}} + O(\epsilon^3)$ where
    \begin{align}
        \mathcal{P}_{(t, t-1)} = \begin{cases}
            \mathcal{E}_{I, t} + \mathcal{E}_{X, t}(\mathcal{H}\mathcal{E}_{Z, t-1}\mathcal{H})\frac{1}{\mathcal{H}\mathcal{E}_{I,t-1}\mathcal{H}} & \text{even }t>2 \\
            \mathcal{H}\mathcal{E}_{I, t}\mathcal{H} + (\mathcal{H}\mathcal{E}_{X, t}\mathcal{H})\mathcal{E}_{Z, t-1}\frac{1}{\mathcal{E}_{I,t-1}} & \text{odd }t>1
        \end{cases}. \label{eq:almost-bounds}
    \end{align}
\end{lemma}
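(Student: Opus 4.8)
The plan is to bootstrap from the two exact recursions of Lemma~\ref{lem:leading-approximation}, Eqs.~\eqref{eq:single-qubit-teleportation-chain-iterative-average-error-channel} and~\eqref{eq:single-qubit-teleportation-chain-iterative-conditional-average-error-channel-bias}, and eliminate the auxiliary quantity $\delta\mathcal{N}_{t-1}$ in favor of $\overline{\mathcal{N}_{t-1}}$ to the order required. The first step is to fix the power counting: since the coherent components $\mathcal{E}_{X,t},\mathcal{E}_{Y,t},\mathcal{E}_{Z,t}$ are $O(\epsilon)$ while $\mathcal{E}_{I,t}$ and hence $\overline{\mathcal{N}_t}$ are $O(1)$, the bias recursion shows $\delta\mathcal{N}_t=O(\epsilon)$ for every $t$ (its leading term is $(\mathcal{H}^t\mathcal{E}_{Z,t}\mathcal{H}^t)\,\overline{\mathcal{N}_{t-1}}$, and the $\mathcal{E}_{Y}$ term enters only at $O(\epsilon^2)$). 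Inserting this back into Eq.~\eqref{eq:single-qubit-teleportation-chain-iterative-average-error-channel} confirms that the coherence-correction term $(\mathcal{H}^t\mathcal{E}_{X,t}\mathcal{H}^t)\,\delta\mathcal{N}_{t-1}$ is $O(\epsilon^2)$, so any quantity we drop must be carried to $O(\epsilon^3)$.

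The key step is to obtain a closed expression for $\delta\mathcal{N}_{t-1}$ in terms of $\overline{\mathcal{N}_{t-1}}$ accurate to $O(\epsilon^2)$. Applying the bias recursion one timestep earlier gives $\delta\mathcal{N}_{t-1}=(\mathcal{H}^{t-1}\mathcal{E}_{Z,t-1}\mathcal{H}^{t-1})\,\overline{\mathcal{N}_{t-2}}+O(\epsilon^2)$, so I would then trade $\overline{\mathcal{N}_{t-2}}$ for $\overline{\mathcal{N}_{t-1}}$ by inverting Eq.~\eqref{eq:single-qubit-teleportation-chain-iterative-average-error-channel} at the previous timestep, which to leading order reads $\overline{\mathcal{N}_{t-1}}=(\mathcal{H}^{t-1}\mathcal{E}_{I,t-1}\mathcal{H}^{t-1})\,\overline{\mathcal{N}_{t-2}}+O(\epsilon^2)$. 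The factor $\mathcal{H}^{t-1}\mathcal{E}_{I,t-1}\mathcal{H}^{t-1}$ is a Pauli channel with diagonal PTM whose entries inherit the bound $[\mathcal{E}_{t-1}]_{P,P}\geq 1-3r_0>0$ from the smallness assumption, hence it is invertible; left-composing with its inverse yields $\overline{\mathcal{N}_{t-2}}=\tfrac{1}{\mathcal{H}^{t-1}\mathcal{E}_{I,t-1}\mathcal{H}^{t-1}}\overline{\mathcal{N}_{t-1}}+O(\epsilon^2)$. Substituting gives
\begin{align*}
\delta\mathcal{N}_{t-1}=(\mathcal{H}^{t-1}\mathcal{E}_{Z,t-1}\mathcal{H}^{t-1})\frac{1}{\mathcal{H}^{t-1}\mathcal{E}_{I,t-1}\mathcal{H}^{t-1}}\,\overline{\mathcal{N}_{t-1}}+O(\epsilon^2),
\end{align*}
where the remainder absorbs both the dropped $\mathcal{E}_Y$ term and the $O(\epsilon)\times O(\epsilon^2)$ contamination from the inversion.

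Finally I would feed this into Eq.~\eqref{eq:single-qubit-teleportation-chain-iterative-average-error-channel}; because the prefactor $\mathcal{H}^t\mathcal{E}_{X,t}\mathcal{H}^t$ is $O(\epsilon)$, the $O(\epsilon^2)$ remainder is promoted to $O(\epsilon^3)$, and reading off the operator multiplying $\overline{\mathcal{N}_{t-1}}$ identifies
\begin{align*}
\mathcal{P}_{(t,t-1)}=(\mathcal{H}^t\mathcal{E}_{I,t}\mathcal{H}^t)+(\mathcal{H}^t\mathcal{E}_{X,t}\mathcal{H}^t)(\mathcal{H}^{t-1}\mathcal{E}_{Z,t-1}\mathcal{H}^{t-1})\frac{1}{\mathcal{H}^{t-1}\mathcal{E}_{I,t-1}\mathcal{H}^{t-1}}.
\end{align*}
Specializing the Hadamard powers ($\mathcal{H}^t=\mathcal{I}$, $\mathcal{H}^{t-1}=\mathcal{H}$ for even $t$, and the reverse for odd $t$) reproduces the two cases of Eq.~\eqref{eq:almost-bounds}, noting that composition order must be preserved throughout since these channels do not commute. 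I expect the main obstacle to be the order bookkeeping rather than any conceptual difficulty: one must verify carefully that composing an $O(\epsilon)$ coherent factor with an $O(\epsilon^2)$ remainder genuinely lands at $O(\epsilon^3)$, and that the inverse Pauli channel introduces no stray lower-order terms. Once the power counting is pinned down, the recursion-and-inversion argument is algebraically routine.
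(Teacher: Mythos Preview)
Your proposal is correct and follows essentially the same approach as the paper: both start from the two recursions of Lemma~\ref{lem:leading-approximation}, establish $\delta\mathcal{N}_t=O(\epsilon)$, use the bias recursion at step $t-1$ to express $\delta\mathcal{N}_{t-1}$ via $\overline{\mathcal{N}_{t-2}}$ up to $O(\epsilon^2)$, and then invert the leading-order relation $\overline{\mathcal{N}_{t-1}}=(\mathcal{H}^{t-1}\mathcal{E}_{I,t-1}\mathcal{H}^{t-1})\overline{\mathcal{N}_{t-2}}+O(\epsilon^2)$ before substituting back. The only minor difference is that the paper anchors the claim $\delta\mathcal{N}_t=O(\epsilon)$ with an explicit base-case computation $\delta\mathcal{N}_1=\mathcal{H}\mathcal{E}_{Y,1}\mathcal{H}+\mathcal{H}\mathcal{E}_{Z,1}\mathcal{H}$ from Eq.~\eqref{eq:delta-N-definition}, whereas your power-counting leaves the induction base implicit.
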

Lemma \ref{lem:next-approximation} shows that we can approximately absorbed the correction term in Eq.~\eqref{eq:single-qubit-teleportation-chain-iterative-average-error-channel} into the expression for $\mathcal{P}_{(t,t-1)}$. The correction term is a product of coherent parts of consecutive error channels $\mathcal{E}_t$ and $\mathcal{E}_{t-1}$, adjusted by the incoherent part of $\mathcal{E}_{t-1}$. This reflects the effect of interference of errors within the short correlation time of the Pauli frames in the teleportation chain.

\begin{proof}[Proof of lemma \ref{lem:next-approximation}]

First we show that $\delta\mathcal{N}_t$ is $O(\epsilon)$ for all $t$. For $t=1$, we plug Eq.~\eqref{eq:single-qubit-teleportation-chain-conditional-average-channel} into Eq.~\eqref{eq:delta-N-definition} and get
\begin{align*}
    \delta\mathcal{N}_1 = \frac{1}{2}(\mathcal{H}\mathcal{E}_1\mathcal{H} - \mathcal{Z}\mathcal{H}\mathcal{E}_1\mathcal{H}\mathcal{Z}) = \frac{1}{2}(\mathcal{H}\mathcal{E}_1\mathcal{H} - \mathcal{H}\mathcal{X}\mathcal{E}_1\mathcal{X}\mathcal{H}).
\end{align*}
Decomposing $\mathcal{E}_1$ using Eq.~\eqref{eq:coherence-transformation} and simplifying gives
\begin{align}
    \delta\mathcal{N}_1 &= \mathcal{H}\mathcal{E}_{Y,1}\mathcal{H} + \mathcal{H}\mathcal{E}_{Z,1}\mathcal{H} = O(\epsilon). \label{eq:lem-3-case-t=1}
\end{align}
Applying Eq.~\eqref{eq:single-qubit-teleportation-chain-iterative-conditional-average-error-channel-bias} then shows that $\delta\mathcal{N}_t = O(\epsilon)$ for all $t > 1$. Now we show Eq.~\eqref{eq:almost-bounds}.

\vspace{2mm}
\noindent\underline{Proof of Eq.~\eqref{eq:almost-bounds} for even $t > 2$.}
We first substitute Eq.~\eqref{eq:single-qubit-teleportation-chain-iterative-conditional-average-error-channel-bias} into Eq.~\eqref{eq:single-qubit-teleportation-chain-iterative-average-error-channel}, which gives
\begin{subequations}
\begin{align}
    &\overline{\mathcal{N}_t} = \mathcal{E}_{I, t}\ \overline{\mathcal{N}_{t-1}} + \mathcal{E}_{X, t}\ (\mathcal{H}\mathcal{E}_{Z, t-1}\mathcal{H})\ \overline{\mathcal{N}_{t-2}} \nonumber \\
    &\quad\ + \mathcal{E}_{X, t}\ (\mathcal{H}\mathcal{E}_{Y, t-1}\mathcal{H})\ \delta\mathcal{N}_{t-2}. \\
    &= \mathcal{E}_{I, t}\ \overline{\mathcal{N}_{t-1}} + \mathcal{E}_{X, t}\ (\mathcal{H}\mathcal{E}_{Z, t-1}\mathcal{H})\ \overline{\mathcal{N}_{t-2}} + O(\epsilon^3). \label{eq:truncated-memory-equation-approximate}
\end{align}
\end{subequations}
We can further simplify this approximation by relating $\overline{\mathcal{N}_{t-2}}$ back to $\overline{\mathcal{N}_{t-1}}$. In Eq.~\eqref{eq:single-qubit-teleportation-chain-iterative-average-error-channel}, setting $t$ to $t-1$ gives
\begin{align*}
    \overline{\mathcal{N}_{t-1}} &= (\mathcal{H}\mathcal{E}_{I,t-1}\mathcal{H})\ \overline{\mathcal{N}_{t-2}} + (\mathcal{H}\mathcal{E}_{X,t-1}\mathcal{H})\ \delta\mathcal{N}_{t-2} \\
    &= (\mathcal{H}\mathcal{E}_{I,t-1}\mathcal{H})\ \overline{\mathcal{N}_{t-2}} + O(\epsilon^2).
\end{align*}
We can take the inverse and approximate
\begin{align*}
    \overline{\mathcal{N}_{t-2}} = \frac{1}{\mathcal{H}\mathcal{E}_{I,t-1}\mathcal{H}}\overline{\mathcal{N}_{t-1}} + O(\epsilon^2).
\end{align*}
Note that here we do not require the inverse map to be physically achievable. Substitution into Eq.~\eqref{eq:truncated-memory-equation-approximate} gives Eq.~\eqref{eq:almost-bounds} for even $t > 2$

\vspace{2mm}
\noindent\underline{Proof of Eq.~\eqref{eq:almost-bounds} for odd $t > 1$.} Proof is similar to even $t > 2$.

\end{proof}

Lemma~\ref{lem:next-approximation} is an informal version of theorem~\ref{thm:bound-on-error-growth-per-timestep}. To prove theorem~\ref{thm:bound-on-error-growth-per-timestep}, we go through the proof of lemma~\ref{lem:next-approximation} again and put strict bounds on the error terms.

\begin{proof}[Proof of theorem~\ref{thm:bound-on-error-growth-per-timestep}]
Because the theorem is stated in terms of the PTM matrix elements, we first express Eqs.~\eqref{eq:single-qubit-teleportation-chain-iterative-average-error-channel}~and~\eqref{eq:single-qubit-teleportation-chain-iterative-conditional-average-error-channel-bias} in terms of the elements of the PTM [Eq.~\eqref{eq:PTM-definition}],
\begin{align}
    \left[\overline{\mathcal{N}_t}\right]_{P',P} &= \sum_{P''}\left[\mathcal{E}_{I,t}^H\right]_{P',P''}
    \left[\overline{\mathcal{N}_{t-1}}\right]_{P'',P} \nonumber \\
    &\quad + \sum_{P''}\left[\mathcal{E}_{X,t}^H\right]_{P',P''}
    \left[\delta\mathcal{N}_{t-1}\right]_{P'',P}, \label{eq:PTM-element-equation-N} \\
    \left[\delta\mathcal{N}_t\right]_{P',P} &= \sum_{P''}\left[\mathcal{E}_{Z,t}^H\right]_{P',P''}
    \left[\overline{\mathcal{N}_{t-1}}\right]_{P'',P} \nonumber \\
    &\quad + \sum_{P''}\left[\mathcal{E}_{Y,t}^H\right]_{P',P''}
    \left[\delta\mathcal{N}_{t-1}\right]_{P'',P}, \label{eq:PTM-element-equation-delta-N}
\end{align}
where we have absorbed the Hadamards into the shorthand notation $\mathcal{E}_{P,t}^H = \mathcal{H}\mathcal{E}_{P,t}\mathcal{H}$. To determine the diagonal elements of the PTM of $\overline{\mathcal{N}_t}$ [Eq.~\eqref{eq:average-infidelity-definition}], we set $P' = P$ in $\left[\overline{\mathcal{N}_t}\right]_{P',P}$. In addition, in the sum over $P''$, only one $P''$ survives, because for even (or odd) $t$, $[\mathcal{E}_{P,t}^H]_{P',P''}$ is non-zero only if $P', P''$ differ by $P$ (or $HPH$) up to a phase. This allows us to arrive at the following matrix equations: for even $t$,
\begin{align}
    \begin{bmatrix}
    \left[\overline{\mathcal{N}_t}\right]_{P,P} \\
    \left[\delta\mathcal{N}_t\right]_{ZP,P}
    \end{bmatrix} \!\!&=\!\! \begin{bmatrix}
    \left[\mathcal{E}_{I,t}^H\right]_{P,P} & \left[\mathcal{E}_{X,t}^H\right]_{P,XP} \\
    \left[\mathcal{E}_{Z,t}^H\right]_{ZP,P} & \left[\mathcal{E}_{Y,t}^H\right]_{ZP,XP}
    \end{bmatrix} \!\! \begin{bmatrix}
    \left[\overline{\mathcal{N}_{t-1}}\right]_{P,P} \\
    \left[\delta\mathcal{N}_{t-1}\right]_{XP,P}
    \end{bmatrix}\!, \label{eq:matrix-equation-even-t}
\end{align}
and for odd $t$
\begin{align}
    \begin{bmatrix}
        \left[\overline{\mathcal{N}_t}\right]_{P,P} \\
        \left[\delta\mathcal{N}_t\right]_{XP,P}
    \end{bmatrix} \!\!&=\!\! \begin{bmatrix}
        \left[\mathcal{E}_{I,t}^H\right]_{P,P} & \left[\mathcal{E}_{X,t}^H\right]_{P,ZP} \\
        \left[\mathcal{E}_{Z,t}^H\right]_{XP,P} & \left[\mathcal{E}_{Y,t}^H\right]_{XP,ZP}
    \end{bmatrix} \!\! \begin{bmatrix}
        \left[\overline{\mathcal{N}_{t-1}}\right]_{P,P} \\
        \left[\delta\mathcal{N}_{t-1}\right]_{ZP,P}
    \end{bmatrix}\!. \label{eq:matrix-equation-odd-t}
\end{align}

Next, we return to the smallness condition on coherence established in Eq.~\eqref{eq:theorem-1-smallness-condition} of theorem~\ref{thm:bound-on-error-growth-per-timestep}, which can be used to restrict the size of the off-diagonal PTM elements appearing in Eqs.~\eqref{eq:matrix-equation-even-t}~and~\eqref{eq:matrix-equation-odd-t}. For example, when $t$ is odd, we have
\begin{align*}
    \left[\mathcal{E}_{X,t}^H\right]_{P,ZP} &= \left[\mathcal{H}\mathcal{E}_{X,t}\mathcal{H}\right]_{P,ZP} = \left[\mathcal{E}_{X,t}\right]_{HPH,HZPH} \\
    &= \left[\mathcal{E}_{X,t}\right]_{HPH,XHPH} =  \left[\mathcal{E}_t\right]_{HPH,XHPH},
\end{align*}
where the first equality is due to the definition of $\mathcal{E}_{X,t}^H$ and the second equality is due to the definition of the PTM elements. Using a similar procedure we can show that $\left[\mathcal{E}_{I,t}^H\right]_{P,P} = \left[\mathcal{E}_t\right]_{HPH,HPH}$. The smallness condition [Eq.~\eqref{eq:theorem-1-smallness-condition}] in theorem~\ref{thm:bound-on-error-growth-per-timestep} guarantees that $\left[\mathcal{E}_t\right]_{HPH,XHPH} \leq \epsilon \left[\mathcal{E}_t\right]_{HPH,HPH}$. Therefore we have
\begin{align*}
    \left[\mathcal{E}_{X,t}^H\right]_{P,XP} \leq \epsilon\left[\mathcal{E}_{I,t}^H\right]_{P,P}.
\end{align*}
Repeating the above for all coherence and all $t$ shows that for all $P$ and even $t$,
\begin{align}
    \left|\left[\mathcal{E}_{X,t}^H\right]_{P,XP}\right|, \left|\left[\mathcal{E}_{Y,t}^H\right]_{ZP,XP}\right|, \left|\left[\mathcal{E}_{Z,t}^H\right]_{ZP,P}\right| < \epsilon\left|\left[\mathcal{E}_{I,t}^H\right]_{P,P}\right|, \label{eq:smallness-condition-even-t}
\end{align}
and for all $P$ and odd $t$
\begin{align}
    \left|\left[\mathcal{E}_{X,t}^H\right]_{P,ZP}\right|, \left|\left[\mathcal{E}_{Y,t}^H\right]_{XP,ZP}\right|, \left|\left[\mathcal{E}_{Z,t}^H\right]_{XP,P}\right| < \epsilon\left|\left[\mathcal{E}_{I,t}^H\right]_{P,P}\right|. \label{eq:smallness-condition-odd-t}
\end{align}

For brevity, we simplify the notation and let the subscript for the PTM be inferred from context:
\begin{align}
    \begin{bmatrix}
    \left[\overline{\mathcal{N}_t}\right] \\
    \left[\delta\mathcal{N}_t\right]
    \end{bmatrix} &= \begin{bmatrix}
    \left[\mathcal{E}_{I,t}^H\right] & \left[\mathcal{E}_{X,t}^H\right] \\
    \left[\mathcal{E}_{Z,t}^H\right] & \left[\mathcal{E}_{Y,t}^H\right]
    \end{bmatrix} \begin{bmatrix}
    \left[\overline{\mathcal{N}_{t-1}}\right] \\
    \left[\delta\mathcal{N}_{t-1}\right]
    \end{bmatrix}, \label{eq:matrix-equation-all-t}
\end{align}
and the smallness condition of the coherence reads,
\begin{align}
    \left|\left[\mathcal{E}_{X,t}^H\right]\right|, \left|\left[\mathcal{E}_{Y,t}^H\right]\right|, \left|\left[\mathcal{E}_{Z,t}^H\right]\right| < \epsilon\left|\left[\mathcal{E}_{I,t}^H\right]\right|. \label{eq:smallness-condition-all-t}
\end{align}

Now that the preparatory steps have been proven, we sill show that 
\begin{align}
    \left|\left[\delta\mathcal{N}_t\right]\right| \leq 3\epsilon \left|\left[\overline{\mathcal{N}_t}\right]\right| \label{eq:delta-N_t-smallness}
\end{align}
for all $t$ using the following proof by induction, mirroring the first step in the proof of lemma~\ref{lem:next-approximation}. The case $t=1$ can be shown from Eq.~\eqref{eq:lem-3-case-t=1}:
\begin{align*}
    \left|\left[\delta\mathcal{N}_1\right]_{XP,P}\right| &= \left|\left[\mathcal{E}_{Y,1}^H\right]_{XP,P} + \left[\mathcal{E}_{Z,1}^H\right]_{XP,P}\right|.
\end{align*}
The first term vanishes because the only the PTM matrix elements of $\mathcal{E}_{Y,1}^H$ are zero unless the input and output Pauli operators differ by $Y$. In addition, by Eq.~\eqref{eq:smallness-condition-odd-t} we have that $\left|\left[\mathcal{E}_{Z,1}^H\right]_{XP,P}\right| \leq \epsilon \left|\left[\mathcal{E}_{I,1}^H\right]_{P,P}\right|$. Therefore $\left|\left[\delta\mathcal{N}_1\right]_{XP,P}\right| \leq \epsilon \left|\left[\mathcal{N}_1\right]_{P,P}\right| \leq 3\epsilon \left|\left[\mathcal{N}_1\right]_{P,P}\right|$. Now that the inequality holds for at least some $t$, we can use the iterative relation for $\delta\mathcal{N}_t$ in the second component of Eq.~\eqref{eq:matrix-equation-all-t} and get the following bound for the subsequent time step,
\begin{align}
    \left|\left[\delta\mathcal{N}_{t+1}\right]\right|
    &\leq \left|\left[\mathcal{E}_{Z,t+1}^H\right]\left[\overline{\mathcal{N}_t}\right]\right|
    + \left|\left[\mathcal{E}_{Y,t+1}^H\right]\left[\delta\mathcal{N}_{t}\right]\right| \\
    &\leq (\epsilon + 3\epsilon^2) \left|\left[\mathcal{E}_{I,t+1}^H\right]\left[\overline{\mathcal{N}_t}\right]\right|. \label{eq:almost-bound-for-delta-N-t+1}
\end{align}
To continue we need a bound for $\left|\left[\overline{\mathcal{N}_t}\right]\right|$ in terms of $\left[\overline{\mathcal{N}_{t+1}}\right]$. This can be achieved with the first component of Eq.~\eqref{eq:matrix-equation-all-t}:
\begin{align*}
    \left[\overline{\mathcal{N}_{t+1}}\right]
    &= \left[\mathcal{E}_{I,t+1}^H\right] \left[\overline{\mathcal{N}_t}\right] + \left[\mathcal{E}_{X,t+1}^H\right] \left[\delta\mathcal{N}_t\right] \\
    &= (1 + \delta)\left[\mathcal{E}_{I,t+1}^H\right] \left[\overline{\mathcal{N}_t}\right],\ \  \delta \in [-3\epsilon^2, +3\epsilon^2]
\end{align*}
On the second line we replace $\left[\mathcal{E}_{X,t+1}^H\right] \left[\delta\mathcal{N}_t\right]$ with the bounds on its magnitude. Next we express $\left[\overline{\mathcal{N}_t}\right]$ in terms of $\left[\overline{\mathcal{N}_{t+1}}\right]$.
Next we can move the factor to the other side of the inequality and take the absolute value
\begin{align}
    &\left[\mathcal{E}_{I,t+1}^H\right]\left[\overline{\mathcal{N}_t}\right] \nonumber \\
    &= \frac{1}{1 +\delta}\left[\overline{\mathcal{N}_{t+1}}\right],\ \ \delta \in [-3\epsilon^2, +3\epsilon^2] \\
    &= (1 + \delta_2)\left[\overline{\mathcal{N}_{t+1}}\right],\ \ \delta_2 \in \left[-\frac{3\epsilon^2}{1 + 3\epsilon^2}, +\frac{3\epsilon^2}{1 - 3\epsilon^2}\right] \label{eq:bounds-for-EIt+1Nt}
\end{align}
Substituting the expression above into the inequality for $\left|\left[\delta\mathcal{N}_{t+1}\right]\right|$ in Eq.~\eqref{eq:almost-bound-for-delta-N-t+1} gives
\begin{align*}
    \left|\left[\delta\mathcal{N}_{t+1}\right]\right|
    &\leq (\epsilon + 3\epsilon^2)|1 + \delta_2| \left|\left[\overline{\mathcal{N}_{t+1}}\right]\right| \leq 3\epsilon \left|\left[\overline{\mathcal{N}_{t+1}}\right]\right|.
\end{align*}
The last inequality uses the fact that for $0 \leq \epsilon \leq 1/3$,
\begin{align*}
    (\epsilon + 3\epsilon^2)\left(1+\frac{3\epsilon^2}{1 - 3\epsilon^2}\right) \leq 3\epsilon.
\end{align*}
Therefore, we must have that $\left|\left[\delta\mathcal{N}_{t}\right]\right| \leq 3\epsilon \left|\left[\overline{\mathcal{N}_t}\right]\right|$ for all $t$.

Now we derive a bound for $\left[\overline{\mathcal{N}_t}\right]$ in terms of $\left[\overline{\mathcal{N}_{t-1}}\right]$. Substituting Eq.~\eqref{eq:matrix-equation-all-t} into itself and taking the first component gives
\begin{align}
    \left[\overline{\mathcal{N}_t}\right] &= \left[\mathcal{E}_{I,t}^H\right]\left[\overline{\mathcal{N}_{t-1}}\right] + \left[\mathcal{E}_{X,t}^H\right] \left[\delta\mathcal{N}_{t-1}\right]  \\
    &= \left[\mathcal{E}_{I,t}^H\right]\left[\overline{\mathcal{N}_{t-1}}\right] \nonumber \\
    &\quad+ \left[\mathcal{E}_{X,t}^H\right] \left(\left[\mathcal{E}_{Z,t-1}^H\right]\left[\overline{\mathcal{N}_{t-2}}\right] + \left[\mathcal{E}_{Y,t-1}^H\right]\left[\delta\mathcal{N}_{t-2}\right]\right) \\
    &= \left[\mathcal{E}_{I,t}^H\right]\left[\overline{\mathcal{N}_{t-1}}\right] + \left[\mathcal{E}_{X,t}^H\right] \left[\mathcal{E}_{Z,t-1}^H\right]\left[\overline{\mathcal{N}_{t-2}}\right] \nonumber \\
    &\quad + \left[\mathcal{E}_{X,t}^H\right]\left[\mathcal{E}_{Y,t-1}^H\right]\left[\delta\mathcal{N}_{t-2}\right]. \label{eq:almost-bounds-three-terms}
\end{align}
We apply the previously derived bounds to the last two terms. For $\left[\mathcal{E}_{X,t}^H\right] \left[\mathcal{E}_{Z,t-1}^H\right]\left[\overline{\mathcal{N}_{t-2}}\right]$, we can use the bounds in Eqs.~\eqref{eq:bounds-for-EIt+1Nt}, replace $t$ by $t - 2$, and multiply by $\left[\mathcal{E}_{X,t}^H\right] \left[\mathcal{E}_{Z,t-1}^H\right]\big/\left[\mathcal{E}_{I,t-1}^H\right]$, which gets us
\begin{align*}
    \left[\mathcal{E}_{I,t-1}^H\right]\left[\overline{\mathcal{N}_{t-2}}\right] &= (1 + \delta_2)\left[\overline{\mathcal{N}_{t-1}}\right], \\
    \left[\mathcal{E}_{X,t}^H\right] \left[\mathcal{E}_{Z,t-1}^H\right]\left[\overline{\mathcal{N}_{t-2}}\right] &= (1 + \delta_2) \frac{\left[\mathcal{E}_{X,t}^H\right] \left[\mathcal{E}_{Z,t-1}^H\right]} { \left[\mathcal{E}_{I,t-1}^H\right]} \left[\overline{\mathcal{N}_{t-1}}\right],
\end{align*}
for some $\delta_2 \in [-\frac{3\epsilon^2}{1 + 3\epsilon^2}, +\frac{3\epsilon^2}{1 - 3\epsilon^2}]$. Next we derive a bound for the third term of Eq.~\eqref{eq:almost-bounds-three-terms}, $\left[\mathcal{E}_{X,t}^H\right]\left[\mathcal{E}_{Y,t-1}^H\right]\left[\delta\mathcal{N}_{t-2}\right]$. We first take the absolute value, then substitute the bounds for $\left|\left[\mathcal{E}_{X,t}^H\right]\right|,\left|\left[\mathcal{E}_{Y,t-1}^H\right]\right|,\left|\left[\delta\mathcal{N}_{t-2}\right]\right|$
\begin{align*}
    \left|\left[\mathcal{E}_{X,t}^H\right]\left[\mathcal{E}_{Y,t-1}^H\right]\left[\delta\mathcal{N}_{t-2}\right]\right| &\leq 3\epsilon^3\left|\left[\mathcal{E}_{I,t}^H\right]\left[\mathcal{E}_{I,t-1}^H\right]\left[\overline{\mathcal{N}_{t-2}}\right]\right| \\
    &\leq 3\epsilon^3(1 + \delta_2)\left|\left[\mathcal{E}_{I,t}^H\right]\left[\overline{\mathcal{N}_{t-1}}\right]\right| \\
    &\leq \frac{3\epsilon^3}{1 - 3\epsilon^2}\left|\left[\mathcal{E}_{I,t}^H\right]\left[\overline{\mathcal{N}_{t-1}}\right]\right|.
\end{align*}
The second set of inequalities is derived by substituting the bounds from Eq.~\eqref{eq:bounds-for-EIt+1Nt} and replacing $t$ by $t - 2$. Therefore,
\begin{align*}
    \left[\mathcal{E}_{X,t}^H\right]\left[\mathcal{E}_{Y,t-1}^H\right]\left[\delta\mathcal{N}_{t-2}\right] &= \delta_1 \left[\mathcal{E}_{I,t}^H\right]\left[\overline{\mathcal{N}_{t-1}}\right]
\end{align*}
where $\delta_1 \in \left[-\frac{3\epsilon^3}{1 - 3\epsilon^2}, +\frac{3\epsilon^3}{1 - 3\epsilon^2}\right]$. Combining the bounds we just derived for the second and third terms of Eq.~\eqref{eq:almost-bounds-three-terms}, we obtain the following expression,
\begin{align*}
    \left[\overline{\mathcal{N}_t}\right] &= \left[\mathcal{E}_{I,t}^H\right]\left[\overline{\mathcal{N}_{t-1}}\right] + (1 + \delta_2) \frac{\left[\mathcal{E}_{X,t}^H\right] \left[\mathcal{E}_{Z,t-1}^H\right]} { \left[\mathcal{E}_{I,t-1}^H\right]} \left[\overline{\mathcal{N}_{t-1}}\right] \nonumber \\
    &\quad + \delta_1\left[\mathcal{E}_{I,t}^H\right]\left[\overline{\mathcal{N}_{t-1}}\right],
\end{align*}
for some $\delta_1 \in \left[-\frac{3\epsilon^3}{1 - 3\epsilon^2}, +\frac{3\epsilon^3}{1 - 3\epsilon^2}\right]$ and $\delta_2 \in [-\frac{3\epsilon^2}{1 + 3\epsilon^2}, +\frac{3\epsilon^2}{1 - 3\epsilon^2}]$.

Finally, let's reconstitute the subscripts by referring to Eq.~\eqref{eq:matrix-equation-even-t} for even timesteps and Eq.~\eqref{eq:matrix-equation-odd-t} for odd timesteps:
\begin{align*}
    \left[\mathcal{E}_{I,t}^H\right] &\to \left[\mathcal{E}_{I,t}^H\right]_{P,P} = \left[\mathcal{E}_t^H\right]_{P,P} \\
    \left[\mathcal{E}_{I,t-1}^H\right] &\to \left[\mathcal{E}_{I,t-1}^H\right]_{P,P} = \left[\mathcal{E}_{t-1}^H\right]_{P,P} \\
    \left[\mathcal{E}_{X,t}^H\right] &\to \begin{cases}
        \left[\mathcal{E}_{X,t}^H\right]_{P,XP} = \left[\mathcal{E}_t^H\right]_{P,XP} & \text{even }t \\
        \left[\mathcal{E}_{X,t}^H\right]_{P,ZP} = \left[\mathcal{E}_t^H\right]_{P,ZP} & \text{odd }t\\
    \end{cases} \\
    \left[\mathcal{E}_{Z,t-1}^H\right] &\to \begin{cases}
        \left[\mathcal{E}_{Z,t-1}^H\right]_{XP,P} = \left[\mathcal{E}_{t-1}^H\right]_{XP,P} & \text{even }t \\
        \left[\mathcal{E}_{Z,t-1}^H\right]_{ZP,P} = \left[\mathcal{E}_{t-1}^H\right]_{ZP,P} & \text{odd }t \\
    \end{cases} \\
    \left[\overline{\mathcal{N}_{t-1}}\right] &\to \left[\overline{\mathcal{N}_{t-1}}\right]_{P,P}
\end{align*}
where recall from the statement of theorem~\ref{thm:bound-on-error-growth-per-timestep} that $\mathcal{E}_t^H = \mathcal{H}^t\mathcal{E}_t\mathcal{H}^t$. Doing this gives the bounds on $[\mathcal{P}_{(t, t-1)}]_{P,P}$ stated in theorem~\ref{thm:bound-on-error-growth-per-timestep}.

Finally, we consider the size of the coherence in $\overline{\mathcal{N}_t}$ by examining Eq.~\eqref{eq:PTM-element-equation-N}. From the simple proof in section~\ref{sec:simple-proof}, we know that the coherence in $\overline{\mathcal{N}_t}$ originates from the fact that the first Pauli-frame $\pf{1}$ can only be $\mathcal{H}$ and $\mathcal{HZ}$. The change in the average error channel $\overline{\mathcal{N}_t}$ introduced by this restriction at the beginning of the teleportation chain has the same nature as $\delta \mathcal{N}_t$, the change in the average error channel caused by conditioning on the most recent Pauli frame. We have already shown $\delta \mathcal{N}_t$ to be $O(\epsilon)$ for all $t$ in lemma~\ref{lem:next-approximation}. Therefore, we conclude that the coherence in $\overline{\mathcal{N}_t}$ should be $O(\epsilon)$.

\end{proof}

\section{Proof of third-order bounds}
\label{sec:tighter-bounds-proof}
In section~\ref{app:bounds-proof} we provided a proof of theorem~\ref{thm:bound-on-error-growth-per-timestep}, which gives a second-order bounds on the diagonal elements of the PTM of the noise channel of the single-qubit teleportation chain. In this section we prove a similar third-order version of the bounds, which relate $\left[\overline{\mathcal{N}_t}\right]_{P,P}$ to $\left[\overline{\mathcal{N}_{t-2}}\right]_{P,P}$. First we substitute Eq.~\eqref{eq:matrix-equation-all-t} into itself and get
\begin{align}
    \begin{bmatrix}
        \left[\overline{\mathcal{N}_t}\right] \\
        \left[\delta\mathcal{N}_{t}\right]
    \end{bmatrix} &= \begin{bmatrix}
        \left[\mathcal{E}_{I,t}^H\right] & \left[\mathcal{E}_{X,t}^H\right] \\
        \left[\mathcal{E}_{Z,t}^H\right] & \left[\mathcal{E}_{Y,t}^H\right]
    \end{bmatrix} \nonumber \\
    &\quad\quad \times
    \begin{bmatrix}
        \left[\mathcal{E}_{I,t-1}^H\right] & \left[\mathcal{E}_{X,t-1}^H\right] \\
        \left[\mathcal{E}_{Z,t-1}^H\right] & \left[\mathcal{E}_{Y,t-1}^H\right]
    \end{bmatrix}
    \begin{bmatrix}
        \left[\overline{\mathcal{N}_{t-2}}\right] \\
        \left[\delta\mathcal{N}_{t-2}\right]
    \end{bmatrix} \\
    &= \begin{bmatrix}
        \tilde{r}_t & \tilde{\gamma}_t \\
        \tilde{\delta}_t & \tilde{a}_t
    \end{bmatrix}
    \begin{bmatrix}
        \left[\overline{\mathcal{N}_{t-2}}\right] \\
        \left[\delta\mathcal{N}_{t-2}\right]
    \end{bmatrix}, \label{eq:single-qubit-iterative-two-time-steps}
\end{align}
where we denote
\begin{align}
    \tilde{r}_t &= \left[\mathcal{E}_{I,t}^H\right]\left[\mathcal{E}_{I,t-1}^H\right] + \left[\mathcal{E}_{X,t}^H\right]\left[\mathcal{E}_{Z,t-1}^H\right], \\
    \tilde{\gamma}_t &= \left[\mathcal{E}_{I,t}^H\right]\left[\mathcal{E}_{X,t-1}^H\right] + \left[\mathcal{E}_{X,t}^H\right]\left[\mathcal{E}_{Y,t-1}^H\right], \\
    \tilde{\delta}_t &= \left[\mathcal{E}_{Z,t}^H\right]\left[\mathcal{E}_{I,t-1}^H\right] + \left[\mathcal{E}_{Y,t}^H\right]\left[\mathcal{E}_{Z,t-1}^H\right], \\
    \tilde{a}_t &= \left[\mathcal{E}_{Z,t}^H\right]\left[\mathcal{E}_{X,t-1}^H\right] + \left[\mathcal{E}_{Y,t}^H\right]\left[\mathcal{E}_{Y,t-1}^H\right].
\end{align}
Then we apply the smallness condition from Eq.~\eqref{eq:smallness-condition-all-t} to derive the following bounds on the sizes of the variables above,
\begin{align}
    \tilde{r}_t &= (1+\eta_1) \left[\mathcal{E}_{I,t}^H\right]\left[\mathcal{E}_{I,t-1}^H\right], \quad |\eta_1| \leq \epsilon^2, \\
    \left|\tilde{\delta}_t\right| &\leq (\epsilon + \epsilon^2) \left|\left[\mathcal{E}_{I,t}^H\right]\left[\mathcal{E}_{I,t-1}^H\right]\right|, \\
    \left|\tilde{\gamma}_t\right| &\leq (\epsilon + \epsilon^2) \left|\left[\mathcal{E}_{I,t}^H\right]\left[\mathcal{E}_{I,t-1}^H\right]\right|, \\
    \left|\tilde{a}_t\right| &\leq 2\epsilon^2 \left|\left[\mathcal{E}_{I,t}^H\right]\left[\mathcal{E}_{I,t-1}^H\right]\right|.
\end{align}

We can use Eqs.~\eqref{eq:single-qubit-iterative-two-time-steps}~and~\eqref{eq:delta-N_t-smallness} to bound $\left[\overline{\mathcal{N}_{t-4}}\right]$ in terms of $\left[\overline{\mathcal{N}_{t-2}}\right]$ as follows,
\begin{align}
    \left[\overline{\mathcal{N}_{t-2}}\right] &= \tilde{r}_{t-2}\left[\overline{\mathcal{N}_{t-4}}\right] + \tilde{\gamma}_{t-2}\left[\delta\mathcal{N}_{t-4}\right] \\
    &= (1 + \eta_1) \left[\mathcal{E}_{I,t-2}^H\right]\left[\mathcal{E}_{I,t-3}^H\right] \left[\overline{\mathcal{N}_{t-4}}\right] \nonumber \\
    &\quad\ + \eta_2 \left[\mathcal{E}_{I,t-2}^H\right]\left[\mathcal{E}_{I,t-3}^H\right]\left[\overline{\mathcal{N}_{t-4}}\right] \\
    &= (1 + \eta_3) \left[\mathcal{E}_{I,t-2}^H\right]\left[\mathcal{E}_{I,t-3}^H\right] \left[\overline{\mathcal{N}_{t-4}}\right] \\
    \left[\overline{\mathcal{N}_{t-4}}\right] &= \frac{1}{1 + \eta_3} \frac{1}{\left[\mathcal{E}_{I,t-2}^H\right]\left[\mathcal{E}_{I,t-3}^H\right]}\left[\overline{\mathcal{N}_{t-2}}\right],
\end{align}
with $|\eta_2| \leq 3\epsilon^2 + 3\epsilon^3$ and $|\eta_3| \leq 4\epsilon^2 + 3\epsilon^3 \leq 5\epsilon^2$.

Now we turn to deriving a bound for $\left[\overline{\mathcal{N}_t}\right]$ in terms of $\left[\overline{\mathcal{N}_{t-2}}\right]$. Substituting the iterative relations in Eq.~\eqref{eq:single-qubit-iterative-two-time-steps} into itself and taking the first component, we have that
\begin{align}
    \left[\overline{\mathcal{N}_t}\right] &= \tilde{r}_t\left[\overline{\mathcal{N}_{t-2}}\right] + \tilde{\gamma}_t\tilde{\delta}_{t-2}\left[\overline{\mathcal{N}_{t-4}}\right] \nonumber \\
    &\quad\ + \tilde{\gamma}_t\tilde{a}_{t-2}\left[\delta\mathcal{N}_{t-4}\right] \\
    &= \tilde{r}_t\left[\overline{\mathcal{N}_{t-2}}\right] + \tilde{\gamma}_t\tilde{\delta}_{t-2}\left[\overline{\mathcal{N}_{t-4}}\right] \nonumber \\
    &\quad\ + \eta_4 \left[\mathcal{E}_{I,t}^H\right]\left[\mathcal{E}_{I,t-2}^H\right]\left[\overline{\mathcal{N}_{t-4}}\right] \\
    &= \tilde{r}_t\left[\overline{\mathcal{N}_{t-2}}\right] + \frac{1}{1 +\eta_3} \frac{\tilde{\gamma}_t\tilde{\delta}_{t-2}}{\left[\mathcal{E}_{I,t-2}^H\right]\left[\mathcal{E}_{I,t-3}^H\right]}\left[\overline{\mathcal{N}_{t-2}}\right] \nonumber \\
    &\quad\ + \frac{\eta_4}{1 + \eta_3} \left[\mathcal{E}_{I,t}^H\right]\left[\mathcal{E}_{I,t-2}^H\right]\left[\overline{\mathcal{N}_{t-2}}\right] \\
    &= \tilde{r}_t\left[\overline{\mathcal{N}_{t-2}}\right] + \frac{1}{1 +\eta_3} \frac{\tilde{\gamma}_t\tilde{\delta}_{t-2}}{\left[\mathcal{E}_{I,t-2}^H\right]\left[\mathcal{E}_{I,t-3}^H\right]}\left[\overline{\mathcal{N}_{t-2}}\right] \nonumber \\
    &\quad\ + \eta_5 \left[\mathcal{E}_{I,t}^H\right]\left[\mathcal{E}_{I,t-2}^H\right]\left[\overline{\mathcal{N}_{t-2}}\right],
\end{align}
with $|\eta_4| \leq (\epsilon + \epsilon^2)(2\epsilon^2)(3\epsilon) = 6\epsilon^4 + 6\epsilon^5$ and $|\eta_5| \leq 18\epsilon^4$.

Therefore the lower and upper bounds for $\left[\overline{\mathcal{N}_t}\right]$ in terms of $\left[\overline{\mathcal{N}_{t-2}}\right]$ when $0 < \epsilon < 1/3$ and $t \geq 4$ are as follows (here we reconstitute the subscripts according to whether $t$ is even or odd),
\begin{widetext}
    \begin{align}
        \left[\overline{\mathcal{N}_t}\right]_{P,P} = \left[\mathcal{P}_{(t,t-2)}\right]_{P,P}\left[\overline{\mathcal{N}_{t-2}}\right]_{P,P}
    \end{align}
    where
    \begin{align}
        &\left[\mathcal{P}_{(t,t-2)}\right]_{P,P}^{(\text{ll})} = \tilde{r}_t + \frac{1}{1+\eta_3}\frac{\tilde{\gamma}_t\tilde{\delta}_{t-2}}{\left[\mathcal{E}_{t-2}^H\right]_{P,P}\left[\mathcal{E}_{t-3}^H\right]_{P,P}} +\eta_5\left[\mathcal{E}_t^H\right]_{P,P}\left[\mathcal{E}_{t-2}^H\right]_{P,P},
    \end{align}
    where $|\eta_3| \leq 5\epsilon^2$ and $|\eta_5| \leq 18\epsilon^4$.
    For even $t$
    \begin{align}
        \tilde{r}_t &= \left[\mathcal{E}_t^H\right]_{P,P}\left[\mathcal{E}_{t-1}^H\right]_{P,P} + \left[\mathcal{E}_t^H\right]_{P,XP}\left[\mathcal{E}_{t-1}^H\right]_{XP,P} \\
        \tilde{\gamma}_t &= \left[\mathcal{E}_t^H\right]_{P,P}\left[\mathcal{E}_{t-1}^H\right]_{P,ZP} + \left[\mathcal{E}_t^H\right]_{P,XP}\left[\mathcal{E}_{t-1}^H\right]_{XP,ZP} \\
        \tilde{\delta}_t &= \left[\mathcal{E}_t^H\right]_{ZP,P}\left[\mathcal{E}_{t-1}^H\right]_{P,P} + \left[\mathcal{E}_t^H\right]_{ZP,XP}\left[\mathcal{E}_{t-1}^H\right]_{XP,P}
    \end{align}
    and for odd $t$ we exchange $XP$ and $ZP$ in the subscripts.
\end{widetext}

\section{Cluster state stabilizers}
\label{app:cluster-state-stabilizers}
The cluster state stabilizers as described in Section~\ref{sec:preliminaries} are $\{\vec{s}_t+\vec{s}_{t-2}+\vec{u}_t(m_{(i,t)})\}$. This is in terms of the vector function $\vec{u}_{t}: \{0,1\}^{n} \rightarrow \{0,1\}^{N_{S_X}}$ for odd $t$ and $\vec{u}_{t}: \{0,1\}^{n} \rightarrow \{0,1\}^{N_{S_Z}}$ for even $t$ which we explicitly write here,
\begin{align}
    (\vec{u}_{t}(m_{(i,t)}))_{p}=\bigoplus _{j\in{S_{X,p}}} m_{j,t},\ p\in[0,N_{S_X}], t\ \mathrm{odd}\\
    (\vec{u}_{t}(m_{i,t}))_{q}=\bigoplus _{j\in{S_{Z,q}}} m_{j,t},\ q\in[0,N_{S_Z}],t\ \mathrm{even}
\end{align}
where $\bigoplus$ denotes addition modulo 2. The sum is over $j$, indices of qubits in the support of the stabilizer $S_{X,p}$ or $S_{Z,q}$. As an example, vector function $\vec{u}_x$ checks the modulo 2 weight of the measurement outcomes of all code qubits $i$ at $t$ in the support of the stabilizer $S_{X,p}$ (for $t$ odd) or $S_{Z,q}$ (for $t$ even).

\section{Circuit-level channels \texorpdfstring{$\mathcal{N}_{z,(\gamma,t,w)}$}{N-z,gamma,t,w} with incoherent errors}
\label{app:thm2-incoherent-errors}

The proof presented in Section~\ref{sec:circuitlvl-eithetaz-errors} was shown for a physical error channel $\mathcal{N}_{z,(\gamma,t,w)}=N_{z,(\gamma,t,w)}\cdot N_{z,(\gamma,t,w)}$. Here, we outline the adaptation of the proof for error channels that
have Kraus rank $>1$ but still follows the conditions of pure Z-coherence in theorem~\ref{thm:2}. An example of such a channel includes a channel with Pauli $Z$ errors and unitary errors $e^{i\theta Z}$. 

\begin{proof}
Let us first take the most general single-qubit channel that satisfies the conditions for pure $Z$-coherence,
\begin{align}
    \label{eq:gen-pure-z-coherent-error}
    \mathcal{N}_{z(\gamma,t,w)}(\cdot) = \sum_{i} c_{i(\gamma,t,w)} N_{i,z(\gamma,t,w)} \cdot N_{i,z(\gamma,t,w)}\\
    N_{i,z(\gamma,t,w)} = \alpha_{i(\gamma,t,w)}I+\beta_{i(\gamma,t,w)}Z
\end{align}
Eq.~\eqref{eq:gen-pure-z-coherent-error} is best seen by examining the equivalent restrictions on the $\chi$-matrix representation~\cite{greenbaum2015introductionquantumgateset}.
As before, $\mathcal{N}_{z,(\gamma,t,w)}$ commutes past $\mathrm{CZ}$ gates in the foliation circuit, and therefore we can still push these errors to their previously defined canonical space-time locations, which are immediately after each teleportation on code qubits and before measurement on ancilla qubits. This composes $W_\gamma$ single-qubit channels together into $\mathcal{N}_{z,(\gamma,t)}=\prod_w^{W_\gamma}\mathcal{N}_{z,(\gamma,t,w)}$. The form of the composed error $\mathcal{N}_{z,(\gamma,t)}$ is
\begin{widetext}
\begin{align}
    &\sum_{i_{W_\gamma}}c_{i_{W_\gamma}(\gamma,t,W_\gamma)}N_{i_{W_\gamma},z(\gamma,t,W_{\gamma})}...\left(\sum_{i_2}c_{i_2(\gamma,t,2)}N_{i_2,z(\gamma,t,2)}\left(\sum_{i_1}c_{i_1(\gamma,t,1)}N_{i_1,z(\gamma,t,1)}(\cdot) N^\dagger_{i_1,z(\gamma,t,1)}\right)N^\dagger_{i_2,z(\gamma,t,2)}\right)...N^\dagger_{i_{W_\gamma},z(\gamma,t,W_\gamma)}\notag
    \\
    &=\sum_{i_{W_\gamma},...,i_1} c_{i_{W_\gamma},(\gamma,t,W_\gamma)}...c_{i_{1},(\gamma,t,1)}N_{i_{W_\gamma},(\gamma,t,W_\gamma)}...N_{i_{1},(\gamma,t,W_1)}(\cdot)N_{i_{1},(\gamma,t,W_1)}^\dagger...N_{i_{W_\gamma},(\gamma,t,W_\gamma)}^\dagger
    \label{eq:expanded-kraus}
\end{align}
\end{widetext}
Here, $i_1, i_2,...i_{W_\gamma}$ are indices that go over different Kraus operators in the error channel in eq.~\ref{eq:gen-pure-z-coherent-error}.
By linearity, we can take each term in eq.~\ref{eq:expanded-kraus} and remove real coefficients $c_{i_{W_\gamma},(\gamma,t,W_\gamma)}...c_{i_{1},(\gamma,t,1)}$ and repeat the proof in section~\ref{sec:circuitlvl-eithetaz-errors} by setting our new space time error at $(\gamma,t)$ to be $N'_{z,(\gamma,t)}[i_1,...,i_{W_\gamma}]=N_{i_{W_\gamma},(\gamma,t,W_\gamma)}...N_{i_{1},(\gamma,t,W_1)}$ for a specific $\{i_1,...i_{W_\gamma}\}$. The number of terms eq.~\ref{eq:expanded-kraus} is a constant for LDPC codes with constant check weight. Using equations~\ref{eq:code-qubit-pauli-channel} and ~\ref{eq:ancilla-qubit-pauli-error} we can calculate the Pauli error probabilities $\mathcal{F}[\{\mathcal{N}'_{z,(\gamma,t)}\}]$ for each $\gamma$ and $t$. Importantly, the proof in sections~\ref{sec:coherent-errors-on-code-qubits} and~\ref{sec:coherent-errors-ancilla} show that this calculation can be done independently for each $\gamma$ and $t$ because the error channels for different $(\gamma,t)$ do not coherently add and are twirled independently. Therefore, we avoid having to compute Pauli error probabilities for a number of space time error configurations that grows exponentially with $(n+N_{S_Z}+N_{S_X})\times 2L$.
After computing $\mathcal{F}[\{\mathcal{N}'_{z,(\gamma,t)}\}]$ for $(\gamma,t)$, we weigh these Pauli error probabilities by multiplying each probability by the coefficient $c_{i_{W_\gamma},(\gamma,t,W_\gamma)}...c_{i_{1},(\gamma,t,1)}$ and add this to the overall Pauli error distribution for the original physical error channel, $\mathcal{N}_{z,(\gamma,t,w)}$. That is,
\begin{widetext}
    \begin{align}
    \mathcal{F}[\mathcal{N}_{z,(\gamma,t,w)}]=\sum_{i_1,...,i_{W_\gamma}}c_{i_{W_\gamma},(\gamma,t,W_\gamma)}...c_{i_{1},(\gamma,t,1)}\mathcal{F}[\{\mathcal{N}'_{z,(\gamma,t)}[i_1,...,i_{W_\gamma}]\}]    
    \label{eq:weighing-paulichannels}
\end{align}
\end{widetext}
where we have used element-wise addition and scalar multiplication of each element of $\mathcal{F}[\mathcal{N'_{z,(\gamma,t)}}]$. The weighing of Pauli error distributions in eq.~\ref{eq:weighing-paulichannels} is possible because we are constructing a joint probability distribution over two independent distributions, one distribution over Pauli errors given by $\mathcal{F}[\{\mathcal{N}'_{z,(\gamma,t)}\}]$, and one distribution over indices $i_1,...i_{W_\gamma}$ given by the set of coefficients $\{c_{i_{W_\gamma},(\gamma,t,W_\gamma)}...c_{i_{1},(\gamma,t,1)}\}\forall i_1,...i_{W_\gamma}$. Therefore, by using the linearity of the channel $\mathcal{N}_{z,(\gamma,t)}$, we can construct a probability distribution that is a linear combination of Pauli error distributions from each term of $\mathcal{N}_{z,(\gamma,t)}$.
\end{proof}

\end{document}